\newtheorem{assumption}{Assumption}
\newcommand{\bitm}{\begin{itemize}}
\newcommand{\eitm}{\end{itemize}}
\newcommand{\beqa}{\begin{eqnarray}}
\newcommand{\eeqa}{\end{eqnarray}}
\newcommand{\beqas}{\begin{eqnarray*}}
\newcommand{\eeqas}{\end{eqnarray*}}
\newcommand{\baln}{\begin{align}}
\newcommand{\ealn}{\end{align}}
\newcommand{\balns}{\begin{align*}}
\newcommand{\ealns}{\end{align*}}
\newcommand{\probSimplex}[1] {\mathcal{P}\parenth{#1}}
\newcommand{\kldist}[2] {D\! \parenth{#1\|#2}}
\newcommand{\alphabet}[1] { {\mathsf #1}}
\newcommand{\parenth}[1] {\left(#1\right)}
\newcommand{\var}{\mathrm{var}}
\newcommand{\I}{\mathrm{I}}
\renewcommand*{\H}{\text{H}}
\newcommand{\D}{\mathrm{D}}
\newcommand{\X}{\bX}
\newcommand{\x}{\mathbf{x}}
\newcommand{\Y}{\mathbf{Y}}
\newcommand{\Z}{\mathbf{Z}}
\newcommand{\N}{\mathbf{N}}
\newcommand{\uX}{\underline{X}}
\newcommand{\ux}{\underline{x}}
\def\argmin{\mathop{\arg\,\!\min}\limits}%
\def\argmax{\mathop{\arg\,\!\max}\limits}%
\newcommand{\calX}{\alphabet{X}}
\newcommand{\calO}{\mathcal{O}}
\newcommand{\calS}{\mathcal{S}}
\newcommand{\bX}{\mathbf{X}}
\newcommand{\bx}{\mathbf{x}}
\newcommand{\allX}{\underline{\bX}}
\newcommand{\allx}{\underline{\bx}}
\newcommand{\uA}{\underline{\mathrm{A}}}
\newcommand{\uW}{\underline{\mathrm{W}}}
\newcommand{\PT}{\widehat{P}_{\allX}}
\newcommand{\calG}{\mathcal{G}}
\newcommand{\setmi}[2]{  [#1]\backslash\{#2\}}
\newcommand{\calB}{\mathcal{B}}
\newcommand{\calT}{\mathcal{T}}
\newcommand{\Phat}{\widehat{P}}
\newcommand{\tildeA}{\widetilde{A}}
\newcommand{\wtA}{\widetilde{A}}
\newcommand{\wtB}{\widetilde{B}}
\newcommand{\std}{\mathrm{std}}
\newcommand{\denote}{:=}
\newcommand{\titlename}{Bounded Degree Approximations of Stochastic Networks}
\ShortHeadings{\titlename}{Quinn, Pinar, and Kiyavash}
\begin{document}

\title{\titlename}

\author{\name Christopher J.\ Quinn \email cjquinn@purdue.edu \\
       \addr School of Industrial Engineering\\
       Purdue University\\
       West Lafayette, Indiana 47907, USA
       \AND
       \name Ali Pinar \email apinar@sandia.gov \\
       \addr Data Science \& Cyber Analytics Department\\
       Sandia National Laboratories\\
       Livermore, CA 94551, USA
			 \AND
       \name Negar Kiyavash \email kiyavash@illinois.edu \\
       \addr Department of Industrial and Enterprise Systems Engineering\\
       University of Illinois\\
      Urbana, Illinois 61801, USA}

\editor{}

\maketitle

\begin{abstract}
We propose algorithms to approximate directed information graphs.  Directed information graphs are probabilistic graphical models that depict causal dependencies between stochastic processes in a network.  The proposed algorithms  identify optimal and near-optimal approximations in terms of Kullback-Leibler divergence.  The user-chosen sparsity trades off the quality of the approximation against visual conciseness and computational tractability.  One class of approximations contains graphs with specified in-degrees.  Another class additionally requires that the graph is connected.  For both classes, we propose algorithms to identify the optimal approximations and also near-optimal approximations, using a novel relaxation of submodularity.  We also propose algorithms to identify the $r$-best approximations among these classes, enabling robust decision making. 
\end{abstract}

\begin{keywords}
  probabilistic graphical models, network inference, causality, submodularity, approximation algorithms
\end{keywords}

\section{Introduction}

Many fields of the sciences and engineering require analysis, modeling, and decision making using networks, typically represented by graphs.  Social networks, financial networks, and biological networks are a few categories that are relevant not only academically but also in daily life.  A major challenge for studying networks is identifying a concise topology, such as who strongly influences whom in a social network.  Real world networks are often large---there are tens of thousands of human genes and trillions of connections in the human brain.  Such scales make human visual processing of and decision making with the whole network prohibitive.  This paper investigates algorithms to identify provably good approximations of the network topology, which capture important system dynamics while significantly reducing the number of edges  to enable tractable analysis. 
  
Across different domains, edges  are used to model various kinds of ties such as physical connections or dynamic relationships. For instance, in depicting a computer network, an edge might correspond to a physical wire or a packet exchange between a sender and a receiver. For the human brain, edges could correspond to information flow between different cells or brain regions \citep{takahashi2015large, kim2014dynamic}.  For online social networks, edges might represent user-defined relationships or pairs of users that frequently message each other \citep{VerSteeg2012information, ver2013information}.  Edges that represent dynamics often must be inferred from activity, in some cases statistically.  

There is a large literature on graphs whose edges depict statistical relationships.  Markov and Bayesian networks are well-known probabilistic graphical models whose edges represent correlation.  Many methods have been proposed to infer and approximate the networks from i.i.d.\ data, often relying on heuristics.  For an overview, see Chapters~18~and~20 in \cite{koller2009probabilistic}.  For applications involving agents interacting with each other over time, whether in finance, biology, social networks, or other domains, there is interest in identifying and representing causal influences between the agents, not just correlation.

One approach uses known families of models, such as with structural equation modeling, to distinguish cause and effect \citep{ZhaZhaSch15, ZhangWZS2014, ChenZC2012}.  A recent work uses belief propagation to infer directionality \citep{chang2014causal}.  Alternatively, under appropriate conditions such as with expert labeling and no feedback, Bayesian networks can depict causal relationships using Pearl's interventional calculus \citep{pearl2009causality}.   We consider the general setting when such conditions or modeling assumptions might not hold.

Recently, directed information graphs were introduced to address this issue \citep{quinn2011estimating, amblard2011directed}.  Edges in directed information graphs depict statistical causation between non-i.i.d.\ time-series. In this work, ``statistical causation'' is in the sense of Granger causality \citep{granger1969investigating}, where a process $\X$ statistically causes $\Y$ if in sequentially predicting $Y_t$, knowledge of the past $X^{t-1}$ helps in prediction even when $Y^{t-1}$ and the past of all the other processes are already known.  These graphs use directed information, an information theoretic quantity, which is well-defined for any class of stochastic processes.  Directed information has been applied to a range of settings, such as neuroscience \citep{quinn2011estimating, kim2011Granger,so2012assessing, kim2014dynamic}, gene regulatory networks \citep{rao2007motif,rao2008using}, and online social networks \citep{VerSteeg2012information,ver2013information, quinn2012directed}.

For networks with thousands or millions of edges, directed information graphs become too  complicated for direct humans analysis.  A major approach to simplifying the graphs is to only keep a few edges which together best approximate the dynamics of the system.  For example, a directed tree is among the simplest graphs.  See Figure~\ref{fig:diag:tree_in2}.  Each node has only  one parent.  Trees have the fewest number of edges possible while being connected.  There is a root node and a path from the root to every other node.  The graph is concise, facilitating human analysis and decision making.  A recent work proposed an efficient algorithm to identify the best directed tree approximation, where   goodness of approximation is measured by Kullback-Leibler (KL) divergence from the full joint distribution to the distribution induced by the directed tree \citep{quinn2013efficient}.  In addition to being computationally efficient, the algorithm in \cite{quinn2013efficient} only uses joint statistics for pairs of processes and  does not require the full joint distribution to find the best approximation.  

\begin{figure}[t]
\centering
  \subfigure[A directed tree.]{\label{fig:diag:tree} \includegraphics[width=.30\columnwidth]{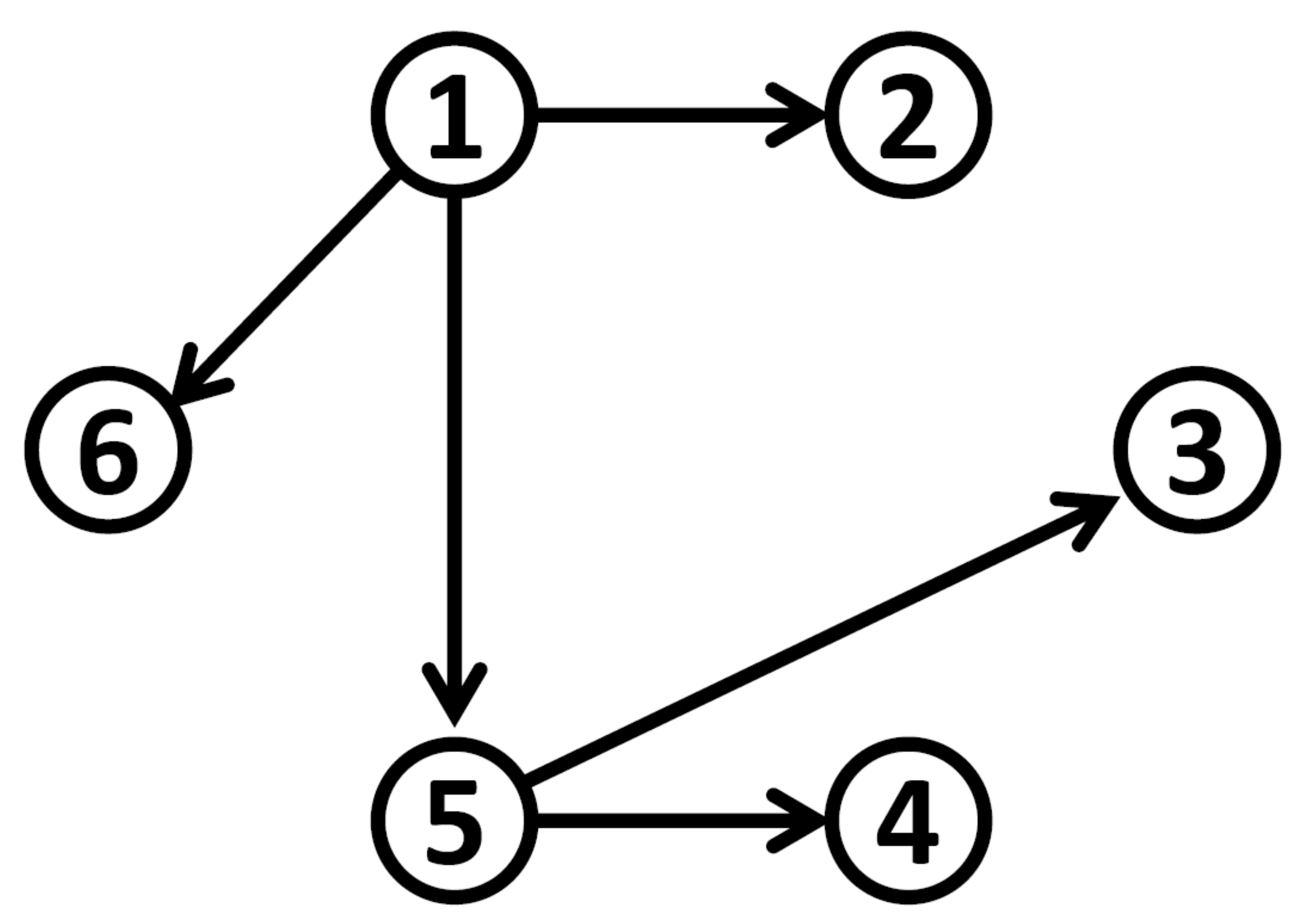}} 
  \hspace{0.1cm}
  \subfigure[A graph with in-degree two containing a directed tree.]{\label{fig:diag:tree2in} \includegraphics[width=.30\columnwidth]{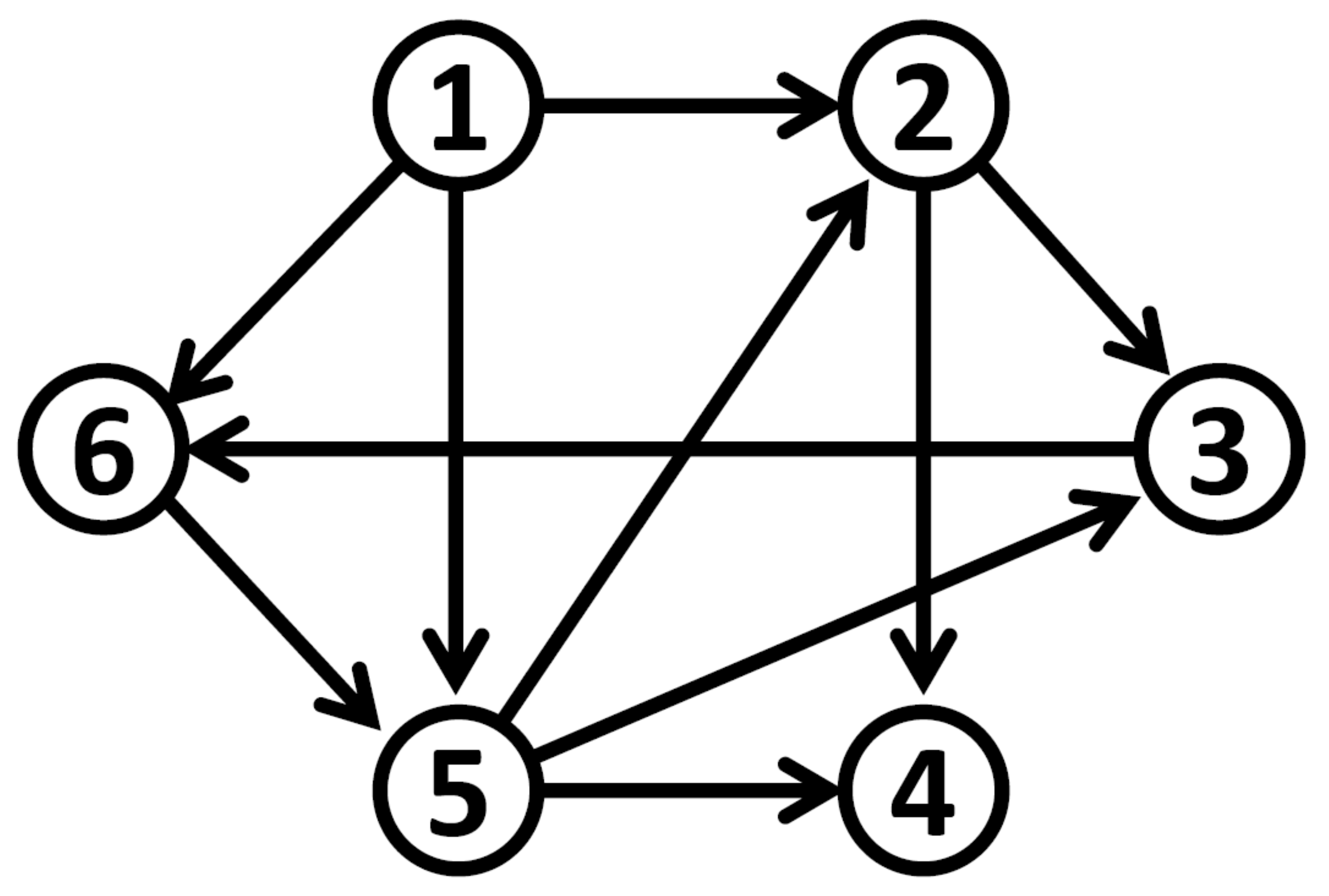}}  
  \caption{\small Diagrams for two network approximations. 
	}
  \label{fig:diag:tree_in2}
\end{figure}

Though directed tree approximations are easy to comprehend and efficient to construct, they cannot depict feedback. Feedback is essential in many networks, such as in the brain and gene regulatory networks.  Thus, for some applications, it is necessary to consider higher order approximations.  For instance, a graph with in-degrees two and three and containing a directed spanning tree as a subgraph would trade-off some simplicity and computational efficiency in order to capture more complex relationships in the network.  

\subsection{Our Contributions}

We propose an algorithm to identify the optimal connected bounded in-degree approximations.  The algorithm requires only low-dimensional statistics, similar to the algorithm for directed tree approximations.  The user decides how  complex to make the approximations, changing the in-degrees to trade off visual and computational simplicity against the accuracy of the approximation.  

Identifying optimal approximations becomes prohibitive for large in-degrees.  For situations where a near-optimal approximation would suffice, we propose algorithms using a greedy search.  We identify sufficient conditions, namely a relaxed form of submodularity, that ensure near-optimality.  

Additionally, having multiple, good approximations can aid in understanding network dynamics.  Instead of just having the best approximation, having the five or ten best approximations in order can yield insight into which edges are most important---those that persist in the top approximations---and those that are less significant.  Being able to identify the top-$r$ approximations also enables the user to identify the best approximation of more restricted classes of topologies.  For example, suppose that the best directed tree approximation for a network had a height of six.  If the user desires the best directed tree approximation with height less than four, he/she can look among the top-$r$ approximations until  finding a tree with height less than four and it would necessarily be the best such approximation.  We develop algorithms to identify the top-$r$ approximations with similar complexity as finding the optimal approximation.

Lastly, we use simulations to validate the quality of the approximations found.

\subsection{Related Work}


There is a large body of work on approximating Bayesian and Markov networks.  One well known result is an algorithm to identify optimal tree approximations \citep{chow1968approximating}.  The algorithm finds a maximum weight spanning tree using mutual information for weights and only requires distributions of pairs of variables.   

In general, identifying more complex approximations cannot be done in a computationally efficient manner.   Bayesian networks are NP-hard to approximate for topologies with specified in-degree larger than one \citep{chickering1996learning} and even polytrees with in-degree two \citep{dasgupta1999learning}.  Some works have focused on identifying optimal approximations of subclasses of polytrees.  One work finds the best bounded in-degree approximation that preserves the statistical dependencies in the best tree approximation \citep{carvalho2007learning}.  Another work finds an optimal polytree that can be converted to a tree with a bounded number of edge or node deletions  \citep{gaspers2012on}.  

Other approaches to approximating graphical models include using $l_1$-regularized regression to identify sparse Ising models for Markov networks with binary variables \citep{ravikumar2010high}.  Another approach proposes a linear programming relaxation coupled with branch and bound to find an optimal approximation \citep{jaakkola2010learning}.  Annealed importance sampling is used in \cite{niinimaki2013annealed}; see references therein for Markov chain Monte Carlo based techniques.  The performance of a forward-backward greedy search for Markov networks in a high-dimensional setting is studied in \cite{jalali2011}.  In \cite{pernkopf2010efficient}, an algorithm is proposed to first identify an variable ordering and then greedily select parents.

There has been much less work developing approximations for directed information graphs.  In \cite{quinn2013efficient}, an algorithm is proposed to identify the best directed spanning tree approximation for directed information graphs.  In \cite{quinn2012directed}, several algorithms are introduced for inferring the exact topology.  One of the algorithms can be also used to compute the best approximation where the only topological constraints are user-specified in-degrees.  That is discussed here as Algorithm~1 in Section~\ref{sec:optbnd}.  Several works investigated sparse approximations using lasso and related penalties when processes are jointly autoregressive with Gaussian noise \citep{charbonnier2010weighted, haufe2010sparse, bolstad2011causal,jung2013compressive, basu2012network}.

In our preliminary work \cite{quinn2013optimal}, we developed an algorithm to identify the optimal bounded in-degree approximation containing a directed spanning tree subgraph.  This appears here as Algorithm 2.   Also, a sufficient condition for a greedy search to return near-optimal approximations was identified in \cite{quinn2013optimal}, presented here as Definition~\ref{def:grdsub}.

There has been research in the graphical models literature for finding the top-$r$ solutions for problems such as the MAP realizations for Bayesian or Markov networks \citep{nilsson1998efficient, yanover2003finding, fromer2009lp, flerova2012bucket, batra2012diverse}.  The present work focuses on finding the top-$r$ solutions for structure learning.

\subsection{Paper Organization}
The paper is organized as follows.  Definitions and notations are introduced in Section~\ref{sec:defns}.  Section~\ref{sec:DIrev} reviews directed information graphs. Section~\ref{sec:optbnd} presents algorithms to identify the optimal bounded in-degree approximations.  Section~\ref{sec:eff_near_opt_con} identifies a sufficient condition for the greedy search to construct near optimal approximations.  Section~\ref{sec:best_r} describes an algorithm to find the top-$r$ approximations. Algorithmic complexity is discussed in Section~\ref{sec:complexity}.  The algorithms are empirically evaluated in Section~\ref{sec:sims}.    Section~\ref{sec:concl} concludes the paper.  Proofs are in the appendix.

\section{Notation and Information-Theoretic Definitions}\label{sec:defns}
We now define notation.  We use ``\denote'' for denoting.
\begin{itemize} \setlength\itemsep{0.5em}
\item For a sequence $a_1,a_2,\ldots$, denote $(a_i,\ldots,a_j)$ as $a_i^j$ and $a^k := a_1^k$. Let $[m] := \{1,\ldots,m\}$ and the power set $2^{[m]}$ on $[m]$ to be the set of all subsets of $[m]$.

\item We consider $m$ finite-alphabet, discrete-time random processes over a horizon $n$.  Let $\calX$ denote the alphabet and $\probSimplex{\calX}$ the space of probability measures on $\calX$. Denote the $i$th random variable at time $t$ by $X_{i,t}$, 
the $i$th random process as $\bX_i = (X_{i,1},\ldots,X_{i,n})^{\top}$, the whole collection of all $m$ random processes as $\allX= (\bX_1,\ldots,\bX_m)^{\top}$, and a subset of $K$ processes indexed by $A \subseteq [m]$ as $\allX_{A}= (\bX_{A(1)},\ldots,\bX_{A(K)})^{\top}$.

\begin{remark} We consider the finite-alphabet setting to simplify the presentation.  The results extend to more general cases.
\end{remark}

\item Conditional and {\it causally conditioned} distributions \citep{kramer1998directed} of $\X_i$ given $\X_j$ are %
\beqa %
P_{\X_i|\X_j}(\x_i|\x_j)  &:=& \prod_{t=1}^n P_{X_{i,t}|X_{i}^{t-1},X_j^n}(x_{i,t}|x_{i}^{t-1},x_j^n)  \label{eq:def:cond_distr}\\
P_{\X_i\|\X_j}(\x_i\|\x_j) &:=& \prod_{t=1}^n P_{X_{i,t}|X_{i}^{t-1},X_j^{t-1}}(x_{i,t}|x_{i}^{t-1},x_j^{t-1}). \label{eq:def:causal_cond}
\eeqa
Note the similarity between \eqref{eq:def:cond_distr} and \eqref{eq:def:causal_cond}, though in \eqref{eq:def:causal_cond} the present and future, $x^n_{j,t}$, is not conditioned on.  In \cite{kramer1998directed}, the present $x_{j,t}$ was conditioned on in \eqref{eq:def:causal_cond}.  The reason we remove it will be made clear in Remark~\ref{rmrk:cond_fut}.

\item  Consider the set of processes $\allX_A$ for some $A \subseteq \setmi{m}{i}$.  Next consider two sets of causally conditioned distributions
$\{P_{\X_i\| \allX_A=\allx_A} \in \probSimplex{\calX}: \allx_A \in \calX^{|A|n}\}$ and 
$\{Q_{\X_i\| \allX_A=\allx_A} \in \probSimplex{\calX}: \allx_A \in \calX^{|A|n}\}$ along with
a marginal distribution $P_{\allX_A} \in \probSimplex{\calX^{|A|n}}$.  Then the conditional Kullback-Leibler (KL) divergence between causally conditioned distributions is given by
\begin{align}  
& \hspace{-0.5cm}\kldist{P_{\X_i\| \allX_A}}{Q_{\X_i\| \allX_A} | P_{\allX_A}} \nonumber \\
& \hspace{2.5cm}:= \sum_{t=1}^n \sum_{\ux_A^{t-1} 
} \!\kldist{P_{X_{i,t}| \uX_A^{t-1}=\ux_A^{t-1}}}{Q_{X_{i,t}| \uX_A^{t-1}=\ux_A^{t-1}}}  P_{\uX_A^{t-1}}(\ux_A^{t-1}). \label{eq:def:condKL}
\end{align}

\vfill

\item
Let $i,j \in [m]$ and $A \subseteq \setmi{m}{i,j}$.  The mutual information,  {\it directed information} \citep{marko1973bidirectional}, and causally conditioned directed information \citep{kramer1998directed} are 
\begin{align}
\I(\X_j;\X_i) :=& \ \kldist{P_{\X_i,\X_j}}{P_{\X_i} P_{\X_j}} = \kldist{P_{\X_i|\X_j}}{P_{\X_i} | P_{\X_j}} \label{eqn:defn:MutualInformation} \\
=& \sum_{t=1}^n \I(X_j^n; X_{i,t} | X_i^{t-1}) \nonumber \\
\I(\X_j \to \X_i) :=& \  \kldist{P_{\X_i\|\X_j}}{P_{\X_i} | P_{\X_j}}\nonumber 
\\
=& \sum_{t=1}^n \I(X_j^{t-1}; X_{i,t} | X_i^{t-1}) \nonumber\\
\I(\X_j \to \X_i \| \allX_A) :=&  \  \kldist{P_{\X_i\| \allX_{A \cup \{j\}}}}{P_{\X_i\|\allX_A} | P_{\allX_{A \cup \{j\}} }} \label{eqn:defn:ccDirectedInformation} 
\\
=& \sum_{t=1}^n \I(X_j^{t-1}; X_{i,t} | X_i^{t-1},\uX_A^{t-1}). \nonumber
\end{align}
While mutual information quantifies statistical correlation (in the colloquial sense of statistical interdependence), directed information quantifies statistical \emph{causation} in the sense of Granger causality \citep{quinn2012directed, amblard2012relation}.  Note that
$\I(\X_j;\X_i)=\I(\X_i;\X_j)$, but $\I(\X_j \to \X_i) \neq \I(\X_i \to \X_j)$ in general.

\begin{remark} \label{rmrk:cond_fut}
In \eqref{eq:def:causal_cond} and \eqref{eqn:defn:ccDirectedInformation}, there is no conditioning on the present $X_{j,t}$.  This follows Marko's definition \citep{marko1973bidirectional} and is consistent with Granger causality \citep{granger1969investigating}.  \cite{massey1990causality} and \cite{kramer1998directed} later included conditioning on $X_{j,t}$ for the specific setting of communication channels. 
\end{remark}

\end{itemize}

\section{Directed Information Graphs} \label{sec:DIrev}

In this section, we briefly review directed information graphs \citep{quinn2011estimating, amblard2011directed}.  
\begin{definition} A \emph{directed information graph} is a probabilistic graphical model where each node represents a process $\X_i$ and an edge $\X_j \to \X_i$ is drawn if \[ \I(\X_j \to \X_i \| \allX_{[m]\backslash\{i,j\}}) >0.\]
\end{definition}
It follows immediately that directed information graphs are unique for a given distribution $P_{\allX}$.  Under certain conditions, the directed information graph corresponds to a particular factorization of the joint distribution.  By the chain rule, the joint distribution $P_{\allX}$  factorizes over time as $P_{\allX}(\allx)  =  \prod_{t=1}^n P_{ \allX_{t}   |  \allX^{t-1}  } (  \allx_{t} |  \allx^{t-1}  ). $ If given the full past $\allX^{t-1}$, the processes $\{\X_1, \dots, \X_m\}$ at time $t$ are mutually independent, $P_{\allX}$ can be further factorized as
\begin{eqnarray}
P_{\allX}(\allx) \!\!\!&=&\!\!\!  \prod_{t=1}^n  \prod_{i=1}^m P_{X_{i,t}| \allX^{t-1}} (x_{i,t}| \allx^{t-1} ),  \label{eq:MGM:2}
\end{eqnarray} and $P_{\allX}$ is said to be \emph{strictly causal}.  Equation~\ref{eq:MGM:2} can be written using causal conditioning notation \eqref{eq:def:causal_cond} as
$ 
P_{\allX}(\allx) =  \prod_{i=1}^m  P_{\bX_i \parallel \allX_{\setmi{m}{i}} } (\bx_i \parallel \allx_{\setmi{m}{i}}). 
$ A distribution $P_{\allX}$ is said to be \emph{positive} if $P_{\allX}(\allx) > 0$ for all $\allx \in \calX^{mn}$.

\begin{theorem}\citep{quinn2012directed}
For a joint distribution $P_{\allX}$, if $P_{\allX}$ is positive and strictly causal, then the parent sets $\{A(i)\}_{i=1}^m$ in the directed information graph are the unique, minimal cardinality parent sets such that $ \D(P_{\allX} \|  \prod_{i=1}^m  P_{\X_i \parallel \allX_{A(i)} } ) = 0.$ 
\end{theorem}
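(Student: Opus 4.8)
The plan is to first reduce the global divergence to a per-node criterion, then identify the node-wise optimal parent set. \textbf{Step 1 (Decomposition).} Using strict causality to write $P_{\allX} = \prod_{i=1}^m P_{\X_i \| \allX_{\setmi{m}{i}}}$, expanding the logarithm of the ratio, and marginalizing each term over the irrelevant coordinates, I would establish the telescoping identity (with $B(i) := \setmi{m}{i}\setminus A(i)$ the non-selected processes)
\begin{align*}
\D\!\left(P_{\allX} \,\Big\|\, \prod_{i=1}^m P_{\X_i \| \allX_{A(i)}}\right) = \sum_{i=1}^m \I\!\left(\X_{B(i)} \to \X_i \,\Big\|\, \allX_{A(i)}\right).
\end{align*}
Each summand is a causally conditioned directed information, hence a conditional KL divergence and therefore nonnegative, so the left side vanishes if and only if every summand does; positivity of $P_{\allX}$ makes all the causal conditionals well defined and turns the vanishing into an everywhere statement. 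This decouples the problem across nodes: a collection $\{A(i)\}$ achieves zero divergence exactly when each $A(i)$ is \emph{causal Markov}, meaning $\I(\X_{B(i)} \to \X_i \| \allX_{A(i)}) = 0$, equivalently $P_{\X_i\|\allX_{A(i)}} = P_{\X_i\|\allX_{\setmi{m}{i}}}$.

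\textbf{Step 2 (Every causal Markov set contains the graph parents).} Let $A^{*}(i)$ be the parent set of node $i$ in the directed information graph. For any causal Markov $A(i)$ and any $j \in B(i)$, I would apply the chain rule for directed information, peeling $j$ off last so its conditioning set becomes $\setmi{m}{i,j}$,
\begin{align*}
\I(\X_{B(i)} \to \X_i \| \allX_{A(i)}) = \I(\X_{B(i)\setminus\{j\}} \to \X_i \| \allX_{A(i)}) + \I(\X_j \to \X_i \| \allX_{\setmi{m}{i,j}}).
\end{align*}
Since the left side is zero and both terms are nonnegative, $\I(\X_j \to \X_i \| \allX_{\setmi{m}{i,j}}) = 0$, i.e.\ $j$ is not a graph-parent. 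Hence $A^{*}(i) \subseteq A(i)$ for every causal Markov $A(i)$, which already gives the cardinality lower bound and will yield minimality and uniqueness once achievability is in hand.

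\textbf{Step 3 (The graph parents are causal Markov) --- the crux.} It remains to prove $A^{*}(i)$ is itself causal Markov. The edge definition only supplies, for each non-parent $j$, that $\I(\X_j \to \X_i \| \allX_{\setmi{m}{i,j}}) = 0$, i.e.\ unrolling over time, $X_{i,t} \perp X_j^{t-1} \mid \uX_{\setmi{m}{j}}^{t-1}$ for every $t$. I must upgrade these single-process independences, each conditioned on all remaining pasts, into the joint statement $X_{i,t} \perp \uX_{B^{*}(i)}^{t-1} \mid \uX_{A^{*}(i)\cup\{i\}}^{t-1}$; this is not implied by Step~2's chain rule, since directed information need not be monotone in its conditioning set. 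This is precisely where positivity is indispensable: invoking the intersection property of conditional independence, valid for strictly positive distributions, the pairwise family $\{X_{i,t} \perp X_b^{t-1} \mid (\uX_{A^{*}(i)\cup\{i\}}^{t-1}, \uX_{B^{*}(i)\setminus\{b\}}^{t-1})\}_{b\in B^{*}(i)}$ combines, by induction on $|B^{*}(i)|$, into the desired joint conditional independence. Summing over $t$ gives $\I(\X_{B^{*}(i)} \to \X_i \| \allX_{A^{*}(i)}) = 0$.

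\textbf{Conclusion and main obstacle.} Combining Steps~2 and~3, $A^{*}(i)$ is causal Markov and is contained in every causal Markov set, so any competing zero-divergence choice $A(i) \neq A^{*}(i)$ strictly contains $A^{*}(i)$ and thus has strictly larger cardinality. Therefore $\{A^{*}(i)\}_{i=1}^m$ is the unique collection of minimal-cardinality parent sets achieving $\D(P_{\allX} \| \prod_i P_{\X_i\|\allX_{A(i)}}) = 0$. I expect Step~3 to be the main obstacle: it is the sole step that genuinely uses positivity, and it must bridge the gap between the pairwise, fully-conditioned edge definition and the set-valued factorization, with the inductive application of the intersection property and the time-unrolling equivalences being the points requiring the most care.
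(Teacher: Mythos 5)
Your proposal is correct. Note that this paper does not actually prove the statement—it is imported as background from \citep{quinn2012directed}—so there is no in-paper proof to compare against; but your argument (decomposing $\D(P_{\allX} \| \prod_{i} P_{\X_i \parallel \allX_{A(i)}})$ into the per-node nonnegative terms $\I(\allX_{B(i)} \to \X_i \| \allX_{A(i)})$, peeling off single processes with the chain rule to show every zero-divergence parent set must contain the graph parents, and then invoking the intersection property of conditional independence under positivity to show the graph parents themselves achieve zero divergence) is essentially the standard proof given in that reference, and you correctly identify the intersection-property step as the one place positivity is genuinely needed.
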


A graphical separation criterion, similar to d-separation for Bayesian networks, applies to directed information graphs \citep{eichler2012graphical}.


\section{Optimal Bounded In-Degree Approximations}\label{sec:optbnd}

When the exact topology is not necessary or is prohibitive to learn, approximations can be useful.   Approximations with simple topologies facilitate visual comprehension and in some cases can be efficient to identify.  We investigate  algorithms to identify optimal approximations for two settings.  Goodness of the approximations is measured by the KL divergence between the full joint distribution and the distribution induced by the approximation. The researcher specifies the in-degrees, controlling the complexity.  Also, the optimal approximations will be identified using low dimensional statistics, not the whole joint distribution. 

We consider approximations of the form
\beqa
\Phat_{\allX}( \allx ) := \prod_{i=1}^m P_{\bX_{i}\parallel\bX_{A(i)}} (\bx_{i}\parallel \bx_{A(i)}), \label{eq:best_parent_appx}
\eeqa
where the $A(i) \subseteq [m] \backslash\{i\}$ are candidate parent sets and the marginal distributions $\{P_{\bX_{i}\parallel\bX_{A(i)}}\}_{i=1}^m$ are exact.  Let $\calG$ denote the set of such approximations.  The goal is to find the $\Phat_{\allX} \in \calG$ that minimizes the KL divergence $\D (P_{\allX} \parallel \Phat_{\allX})$.  The following theorem characterizes an important decomposition property for evaluating the quality of an approximation $\Phat_{\allX}$.  The approximation that minimizes the KL divergence is the one that maximizes a sum of directed informations from parent sets to children.

\begin{theorem} \citep{quinn2013efficient} For any distribution $P_{\allX}$,
\label{thm:apx:gen_appx}
 \beqa
 \argmin_{\Phat_{\allX} \in  \calG}   \D ( P_{\allX} \parallel \Phat_{\allX} )  &=&  \argmax_{\Phat_{\allX} \in  \calG }  \sum_{i=1}^{m}  \I ( \X_{A(i)} \to  \X_{i} ). \label{eq:apx:gen:thm}
\eeqa
\end{theorem}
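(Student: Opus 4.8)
The plan is to expand the divergence $\D(P_{\allX} \parallel \Phat_{\allX})$ into a difference of a cross-entropy term and the entropy of the true law, and then show that every piece except the sum of directed informations is a constant that does not depend on the chosen parent sets $\{A(i)\}$. Writing
\[
\D(P_{\allX} \parallel \Phat_{\allX}) = \E_{P_{\allX}}\brackets{\log \frac{P_{\allX}(\allX)}{\Phat_{\allX}(\allX)}} = -\H(\allX) - \E_{P_{\allX}}\brackets{\log \Phat_{\allX}(\allX)},
\]
the term $-\H(\allX)$ depends only on $P_{\allX}$ and not on the approximation $\Phat_{\allX} \in \calG$, so it is irrelevant to the $\argmin$. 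Hence minimizing the divergence is equivalent to maximizing the cross-entropy-like quantity $\E_{P_{\allX}}\brackets{\log \Phat_{\allX}(\allX)}$.

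Next I would substitute the product form of $\Phat_{\allX}$ from \eqref{eq:best_parent_appx} and the causal-conditioning factorization \eqref{eq:def:causal_cond}, turning the log into a double sum over coordinates $i$ and time $t$:
\[
\E_{P_{\allX}}\brackets{\log \Phat_{\allX}(\allX)} = \sum_{i=1}^m \sum_{t=1}^n \E_{P_{\allX}}\brackets{\log P_{X_{i,t}|X_i^{t-1},\uX_{A(i)}^{t-1}}}.
\]
Because the marginals appearing in $\calG$ are \emph{exact}, i.e.\ computed from $P_{\allX}$ itself, each inner expectation is taken of the true log-conditional under the true law and therefore equals $-\H(X_{i,t} \mid X_i^{t-1}, \uX_{A(i)}^{t-1})$, an honest conditional entropy under $P_{\allX}$.

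The key step is then to split off the part of this conditional entropy that depends on $A(i)$ from the part that does not. I would add and subtract $\sum_t \H(X_{i,t} \mid X_i^{t-1})$ and recognize, from the chain-rule expression for (set-to-node) directed information consistent with \eqref{eqn:defn:ccDirectedInformation}, that
\[
\sum_{t=1}^n \brackets{\H(X_{i,t} \mid X_i^{t-1}) - \H(X_{i,t} \mid X_i^{t-1}, \uX_{A(i)}^{t-1})} = \I(\X_{A(i)} \to \X_i).
\]
The leftover term $\sum_t \H(X_{i,t} \mid X_i^{t-1}) = \H(\X_i)$ is the entropy of process $\X_i$ alone and is independent of the choice of $A(i)$. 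Summing over $i$ gives $\E_{P_{\allX}}[\log \Phat_{\allX}(\allX)] = \sum_{i=1}^m \I(\X_{A(i)} \to \X_i) - \sum_{i=1}^m \H(\X_i)$, and since the second sum is constant over $\calG$, the $\argmax$ of the cross-entropy coincides with the $\argmax$ of $\sum_i \I(\X_{A(i)} \to \X_i)$, which is the claim.

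The main obstacle is purely bookkeeping in the key step: correctly isolating which entropy terms are constant across the feasible family $\calG$ and verifying that the surviving $A(i)$-dependent quantity is exactly the directed information $\I(\X_{A(i)} \to \X_i)$. Once the chain-rule identity for directed information is invoked, the remaining manipulations are routine. The only other point needing care is the justification that each per-coordinate expectation is a genuine conditional entropy under $P_{\allX}$; this relies essentially on the assumption that the marginals $\{P_{\bX_i \parallel \bX_{A(i)}}\}$ are exact, which is what makes the cross term a plain entropy rather than a more complicated average.
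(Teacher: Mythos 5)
Your proof is correct: the cross-entropy decomposition of the KL divergence, the exactness of the causally conditioned marginals (which makes each per-coordinate term a genuine conditional entropy), and the identity $\sum_{t=1}^n\brackets{\H(X_{i,t}\mid X_i^{t-1}) - \H(X_{i,t}\mid X_i^{t-1},\uX_{A(i)}^{t-1})} = \I(\X_{A(i)}\to\X_i)$ yield exactly the claimed equivalence, with $\sum_i \H(\X_i) - \H(\allX)$ as the parent-set-independent constant. This is essentially the same Chow--Liu-style argument as in \citep{quinn2013efficient}, to which the paper defers the proof (there stated for $|A(i)|=1$), written out for general parent-set sizes exactly as the paper's remark says it extends.
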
 
\begin{remark}
 In \cite{quinn2013efficient}, only the specific case $|A(i)|=1$ was considered but the proof naturally extends to the general case.
\end{remark}

This decomposition property will be important for the following results.

\subsection{An Unconstrained Formulation}

Consider finding an optimal approximation of the form \eqref{eq:best_parent_appx} where the only constraint is that the in-degrees are $|A(i)|=K\geq 1$.   We assume uniform $K$ for simplicity.  The results hold if $K$ is a function of $i$.  Let $\calG_K$ denote the set of all such approximations.  The formula \eqref{eq:apx:gen:thm} simplifies.

\begin{corollary} \citep{quinn2012directed}
\label{cor:apx:gen_K_apx} For any distribution $P_{\allX}$, the parent sets $\{A^*(i)\}_{i=1}^m$ corresponding to an optimal approximation $\Phat^* \in \argmin_{\Phat_{\allX} \in  \calG_K}   \D ( P_{\allX} \parallel \Phat_{\allX} )$ satisfy
 \beqa
  A^*(i) \in  \argmax_{A(i)  : |A(i)|=K }  \I ( \X_{A(i)} \to  \X_{i} ). \nonumber 
\eeqa
\end{corollary}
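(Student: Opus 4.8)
The plan is to derive the corollary directly from the decomposition in Theorem~\ref{thm:apx:gen_appx}. That theorem already reduces the problem of minimizing $\D(P_{\allX} \parallel \Phat_{\allX})$ over $\calG_K$ to the problem of maximizing $\sum_{i=1}^m \I(\X_{A(i)} \to \X_i)$ over the same class, so the only work remaining is to show that this maximization factorizes across nodes once the in-degree constraint defining $\calG_K$ is imposed.

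First I would make the feasible region precise. An element of $\calG_K$ is specified by a tuple of parent sets $(A(1),\dots,A(m))$, and the only requirement is that $A(i) \subseteq \setmi{m}{i}$ with $|A(i)|=K$ for each $i$. Crucially, there is no constraint linking the choice of $A(i)$ to the choice of $A(j)$ for $i \neq j$; unlike the connected class treated later, $\calG_K$ places no connectivity or acyclicity requirement on the tuple. Hence the feasible region is a Cartesian product $\prod_{i=1}^m \mathcal{A}_i$, where $\mathcal{A}_i := \{A \subseteq \setmi{m}{i} : |A|=K\}$.

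Next I would invoke separability of the objective. The summand $\I(\X_{A(i)} \to \X_i)$ depends only on the $i$th coordinate $A(i)$ of the tuple. A sum of functions, each depending on a distinct coordinate, is maximized over a product set by maximizing each function independently over its own factor, so $\max_{(A(1),\dots,A(m)) \in \prod_i \mathcal{A}_i} \sum_{i=1}^m \I(\X_{A(i)}\to\X_i) = \sum_{i=1}^m \max_{A(i)\in\mathcal{A}_i} \I(\X_{A(i)}\to\X_i)$, and any maximizing tuple is obtained by choosing each $A^*(i)$ to be a maximizer of $\I(\X_{A(i)}\to\X_i)$ over $\mathcal{A}_i$. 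Combining this with Theorem~\ref{thm:apx:gen_appx} yields exactly the stated condition $A^*(i) \in \argmax_{A(i) : |A(i)|=K} \I(\X_{A(i)}\to\X_i)$.

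There is no substantial obstacle here: the content lies entirely in Theorem~\ref{thm:apx:gen_appx}, and the corollary is simply the observation that the uniform in-degree constraint decouples the optimization node by node. The only point requiring minor care is to confirm that $\calG_K$ imposes no inter-node coupling, which is immediate from its definition, guaranteeing the product-set structure and hence the validity of the separable maximization.
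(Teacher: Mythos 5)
Your proposal is correct and takes essentially the same route as the paper: the paper presents this corollary (citing its earlier work) as an immediate consequence of Theorem~\ref{thm:apx:gen_appx}, precisely because the objective $\sum_{i=1}^{m} \I(\X_{A(i)} \to \X_i)$ is separable across nodes and $\calG_K$ imposes no coupling between the parent sets of different nodes, so the maximization decouples coordinate by coordinate. Your extra care in spelling out the Cartesian-product structure of the feasible set is exactly the observation the paper compresses into the sentence that finding the optimal structure is equivalent to finding the best individual parent set for each node.
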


Thus, finding the optimal structure is equivalent to finding the best individual parent sets for each node.  The process is described in Algorithm~1.  A modified Algorithm~1 for exact structure learning was presented in  \cite{quinn2012directed}.  Algorithm~1 takes as input the following set of directed information values, \begin{eqnarray} %
\mathcal{DI}_{\mathrm{BndInd}} &=& \left\{ \I( \allX_{ B(i)} \to \bX_i  ) : i\in [m], B(i) \subseteq \setmi{m}{i}, |B(i)|=K  \right \} . \nonumber
\end{eqnarray}

\begin{table}[t!] \begin{normalsize} \begin{center}
\begin{tabular*}{\linewidth}{@{}llrr@{}}
{\bfseries Algorithm 1.  {\sc OptimalGeneral}} \citep{quinn2012directed}\\
\hline {\bf Input:} $\mathcal{DI}_{\mathrm{BndInd}}, K, \ m$ \\
\hline

~1. {\bf For} $i \in [m]$  \\
~2. $\quad\  A(i) \gets \emptyset$ \\
~3. \(\quad \ \calB \gets \{ B:B\subseteq \setmi{m}{i}, \ |B| = K  \}\) \\
~4. $\quad $  \( A(i) \gets  \argmax_{B \in \calB} \ \I( \allX_{B} \to \bX_i)\) \\
~5. {\bf Return} $\{A(i)\}_{i=1}^m$\\
\hline
\end{tabular*}\end{center}\label{alg:StructRecov:optgen} \end{normalsize}
\end{table}

\begin{theorem} \citep{quinn2012directed} \label{thm:alg1:opt_gen}
Algorithm~1 returns an optimal approximation $\PT \in \calG_K $.
\end{theorem}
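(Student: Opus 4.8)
The plan is to reduce the claim to the decomposition of Theorem~\ref{thm:apx:gen_appx} together with the product structure of the feasible set $\calG_K$. First I would specialize Theorem~\ref{thm:apx:gen_appx} to the subclass $\calG_K \subseteq \calG$. The identity
$$\argmin_{\Phat_{\allX} \in \calG} \D ( P_{\allX} \parallel \Phat_{\allX} ) = \argmax_{\Phat_{\allX}\in\calG} \sum_{i=1}^m \I(\X_{A(i)}\to\X_i)$$
rests on a pointwise rewriting of $\D ( P_{\allX} \parallel \Phat_{\allX} )$ that holds for every individual $\Phat_{\allX}$; consequently the same equivalence descends to any subset of $\calG$, in particular to $\calG_K$. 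Thus minimizing the KL divergence over $\calG_K$ is equivalent to maximizing $\sum_{i=1}^m \I(\X_{A(i)}\to\X_i)$ subject to the per-node cardinality constraints $|A(i)|=K$.

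Second, I would exploit the separability of that constraint. An element of $\calG_K$ is an $m$-tuple $(A(1),\dots,A(m))$ whose only restriction is $A(i)\subseteq\setmi{m}{i}$ with $|A(i)|=K$ for each $i$ individually; there is no constraint coupling $A(i)$ and $A(j)$ for $i\neq j$ (in contrast to the connected class treated later, which is exactly why that case needs a different algorithm). Hence the feasible region is the Cartesian product of the per-node choice sets $\calB_i := \{B\subseteq\setmi{m}{i}: |B|=K\}$. Since the objective is additively separable across $i$, the maximum of the sum over the product set equals the sum of the per-node maxima,
$$\max_{(A(1),\dots,A(m))\in\calG_K}\sum_{i=1}^m \I(\X_{A(i)}\to\X_i) = \sum_{i=1}^m \max_{B\in\calB_i}\I(\X_B\to\X_i),$$
and any tuple in which each coordinate $A(i)$ attains its corresponding inner maximum is a global maximizer.

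Finally I would observe that this is precisely what Algorithm~1 computes: line~4 sets $A(i)\gets\argmax_{B\in\calB}\I(\allX_B\to\X_i)$ for each $i$ separately, so each returned $A(i)$ lies in $\argmax_{B\in\calB_i}\I(\X_B\to\X_i)$. By the separability argument the induced $\PT$ attains the maximum of $\sum_i \I(\X_{A(i)}\to\X_i)$ over $\calG_K$, and therefore by the specialized form of Theorem~\ref{thm:apx:gen_appx} it minimizes $\D ( P_{\allX} \parallel \PT )$ over $\calG_K$, i.e.\ it is optimal. This simultaneously supplies the converse of the necessary condition recorded in Corollary~\ref{cor:apx:gen_K_apx}.

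As for difficulty, there is essentially no obstacle here; the theorem is a bookkeeping consequence of the decomposition result. The single point that warrants explicit care is the interchange of the sum and the maximum, which is justified solely by verifying that $\calG_K$ carries no inter-node constraints so that its feasible region is a genuine product set. Once that separability is noted, the conclusion is immediate.
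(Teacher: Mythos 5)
Your proposal is correct and takes essentially the same approach as the paper: the paper defers this theorem's proof to the cited prior work, but its own reasoning---Theorem~\ref{thm:apx:gen_appx} combined with Corollary~\ref{cor:apx:gen_K_apx} and the remark that finding the optimal structure is equivalent to finding the best individual parent sets for each node---is precisely your argument of specializing the KL/directed-information decomposition to $\calG_K$ and exploiting the product structure of the per-node in-degree constraints. Your explicit observation that the decomposition holds pointwise for each $\Phat_{\allX}$ (and hence descends to any subset of $\calG$) is exactly what licenses that specialization, so nothing is missing.
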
  

We next consider a more specific class of graph structures.

\subsection{Finding a Connected Graph}

Algorithm~1 might return an unconnected graph.  For situations where information or influence propagates in the network, it can be better to work with connected structures.  Directed trees, the simplest connected structure, were investigated in \cite{quinn2013efficient}.  While visually simple and computationally easy to identify, they cannot depict complex dynamics such as feedback.  We next consider a balance between the properties of unconstrained bounded in-degree approximations and directed trees.  The new approximations contain a directed spanning tree as a subgraph and have user-specified in-degrees.  See Figure~\ref{fig:diag:tree2in}.  Note that the root node has no parents.  Remark~\ref{remark:rootpar} will explain how to obtain graphs where the root also has parents.

Let $\widetilde{\calG}_K$ be the set of all graphs containing a spanning tree and all nodes except the root have in-degree $K\geq 1$.  Let $\tildeA(i,j)$ be the best set of $K$ parents  for $\X_i$ that contains the edge $\X_j \to \X_i$,
\beqa
\tildeA(i,j) = \argmax_{A(i): A(i) \subseteq [m]\backslash \{i\} ,j \in A(i)} \I(\allX_{A(i)} \to \X_i). \label{eq:algs:optconAtil}
\eeqa
Then assign weight $\I( \X_{\tildeA(i,j)} \to \X_i)$ to edge $\X_j \to \X_i$ in the complete graph and run a maximum weight directed spanning tree (MWDST) algorithm. Each edge $\X_j \to \X_i$ in the spanning tree induces the corresponding parent set $\tildeA(i,j)$ for $\X_i$.  This process is described in Algorithm~2. 

\begin{table}[t!] \begin{normalsize} \begin{center}
\begin{tabular*}{\linewidth}{@{}llrr@{}}
{\bfseries Algorithm 2. {\sc OptimalConnected}}\\
\hline {\bf Input:} $\mathcal{DI}_{\mathrm{BndInd}}, K, \ m$ \\
\hline

~1. {\bf For} $i \in [m]$  \\
~2. $\quad \  A(i) \gets \emptyset$ \\
~3. $\quad$  {\bf For} $j \in [m]\backslash \{i\}$  \\
~4. \(\quad \quad \ \ \calB \gets \{ B:B\subseteq \setmi{m}{i}, \ |B| = K, \ j \in B  \}\) \\
~5. \(\quad \quad \ \ \tildeA(i,j) \gets  \argmax_{B \in \calB} \ \I( \allX_{B} \to \bX_i)\) \\
~6. $\{a(i)\}_{i=1}^m \gets $ {\bf MWDST} $( \{ \I( \allX_{\tildeA(i,j)} \to \X_i) \}_{1\leq i\neq j \leq m})$ \\
~7. {\bf For} $i \in [m]$  \\
~8. $\quad\  A(i) \gets \tildeA(i,a(i))$ \\
~9. {\bf Return} $\{A(i)\}_{i=1}^m$\\
\hline
\end{tabular*}\end{center}\label{alg:StructRecov:optcon} \end{normalsize}
\end{table}

\begin{theorem} \label{thm:alg2:opt_con}
Algorithm~2 returns an optimal approximation $\PT \in \widetilde{\calG}_K $.
\end{theorem}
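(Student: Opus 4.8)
The plan is to invoke Theorem~\ref{thm:apx:gen_appx} to convert the minimization of the KL divergence over $\widetilde{\calG}_K$ into the combinatorial maximization of $\sum_{i=1}^m \I(\allX_{A(i)} \to \X_i)$, taken over all parent-set assignments $\{A(i)\}$ that are feasible for $\widetilde{\calG}_K$: namely $A(r) = \emptyset$ for the designated root $r$, $|A(i)| = K$ for every $i \neq r$, and the induced edge set $\{(j,i) : j \in A(i)\}$ contains a directed spanning tree. Since $\I(\allX_{\emptyset} \to \X_r) = 0$, I would record the objective of a feasible assignment as $V(\{A(i)\}) = \sum_{i \neq r} \I(\allX_{A(i)} \to \X_i)$. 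In parallel, for a directed spanning tree $T$ with tree-parent map $a_T$, I define its weight under the edge weights used by Algorithm~2 as $W(T) = \sum_{i \neq r} \I(\allX_{\tildeA(i, a_T(i))} \to \X_i)$. The crux is the identity $\max_{\{A\}} V = \max_T W(T)$, together with the fact that Algorithm~2 realizes a maximizer.

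Second, I would prove this identity by two inequalities. For ``$\le$'', take any feasible $\{A(i)\}$; its induced graph contains a directed spanning tree $T$, which is necessarily rooted at $r$ (the unique node of in-degree zero), so $a_T(i) \in A(i)$ for every $i \neq r$. Because $A(i)$ is itself a size-$K$ set containing $a_T(i)$, the defining optimality of $\tildeA(i, a_T(i))$ in \eqref{eq:algs:optconAtil} gives $\I(\allX_{A(i)} \to \X_i) \le \I(\allX_{\tildeA(i, a_T(i))} \to \X_i)$ term by term; summing over $i \neq r$ yields $V(\{A(i)\}) \le W(T) \le \max_T W(T)$. For ``$\ge$'', take any tree $T$ and set $A_T(i) = \tildeA(i, a_T(i))$ for $i \neq r$ and $A_T(r) = \emptyset$. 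Each such $A_T(i)$ has size $K$ and contains $a_T(i)$, so the induced graph contains $T$ and is therefore feasible, with $V(\{A_T(i)\}) = W(T)$; hence $\max_{\{A\}} V \ge W(T)$ for every $T$, and so $\max_{\{A\}} V \ge \max_T W(T)$.

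Finally, I would observe that the weights fed to MWDST in line~6 are exactly $w(j,i) = \I(\allX_{\tildeA(i,j)} \to \X_i)$, so the weight of any arborescence equals $W$, and MWDST (e.g., the Chu--Liu--Edmonds algorithm, optimizing over the choice of root since the root is free in $\widetilde{\calG}_K$) returns a maximizer $T^{\mathrm{MW}}$ of $W$. Lines~7--8 then output the feasible assignment $A(i) = \tildeA(i, a^{\mathrm{MW}}(i)) = A_{T^{\mathrm{MW}}}(i)$, whose objective is $W(T^{\mathrm{MW}}) = \max_T W(T) = \max_{\{A\}} V$; by Theorem~\ref{thm:apx:gen_appx} this assignment minimizes $\D(P_{\allX} \| \Phat_{\allX})$ over $\widetilde{\calG}_K$, which is the claim. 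I expect the main obstacle to be the ``$\le$'' direction: precisely justifying that every feasible graph contains an $r$-rooted arborescence and that one may replace its actual parent sets, edge by edge, with the corresponding $\tildeA$ sets without losing feasibility or decreasing the objective. This \emph{decoupling} of the per-node parent optimization from the global spanning-tree constraint is exactly what makes the two-stage scheme (compute each $\tildeA(i,j)$, then run a single MWDST) optimal rather than merely heuristic.
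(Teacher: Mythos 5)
Your proposal is correct and follows essentially the same route as the paper's proof: both decompose the feasible set by the directed spanning tree contained in the approximation, observe via \eqref{eq:algs:optconAtil} that for a fixed tree $T$ the optimal feasible assignment is $A(i)=\tildeA(i,a_T(i))$ edge by edge, and conclude that the MWDST over the weights $\I(\allX_{\tildeA(i,j)} \to \X_i)$ recovers the global optimum. Your two-inequality formulation of the identity $\max_{\{A\}} V = \max_T W(T)$ is just a more explicit rendering of the paper's statement that $\widetilde{\calG}_K = \bigcup_{T\in\calT}\widetilde{\calG}_K^T$ with the per-tree maximum attained by the $\tildeA$ sets.
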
  

The proof is in Appendix~\ref{apdx:thm:alg2:opt_con}.

\begin{remark} \label{remark:rootpar}
The approximations $\PT \in \widetilde{\calG}_K $ have root nodes with no inward edges.  Algorithm~2 can be modified to find the best approximation where all nodes have in-degree $K$ and there is a directed spanning tree as a subgraph.  Namely, create a dummy node $\X_0$, set edge weights $\I(\X_j \to \X_0) \gets - \infty$ and $\I(\X_0 \to \X_j) \gets - 1$ for all $j\in [m]$.  Note that all the other edge weights are directed informations, which are KL divergences and hence non-negative.  Then Algorithm~2 will set $\X_0$ as the root with a single outward edge. 
\end{remark}

Algorithms~1~and~2 find optimal approximations in terms of KL divergence $\D( P_{\allX} \| \PT)$.  They only need distributions over $K+1$ processes, not the full joint distribution.  However, they compute  $m{ m-1 \choose K}$ directed informations involving $K$ processes.  If $K$ is large, this could be computationally difficult.  For some applications, instead of reducing $K$, it is better to efficiently identify near-optimal approximations.

\section{Near-Optimal Bounded In-Degree Approximations}\label{sec:eff_near_opt_con}

We next find sufficient conditions to identify near-optimal approximations in time polynomial in $K$.

\subsection{Greedy Submodularity}

Consider the following greedy procedure to select a parent set for $\X_i$. Initially, set $\X_i$'s parent set as the best individual parent $\Z = {\argmax}_j \hspace{0.1cm} \I(\X_j \to \X_i)$.  Then look for the second best parent $\Z'={\argmax}_j \hspace{0.1cm} \I(\X_j \to \X_i \| \Z).$   Repeat this $K-2$ times, adding one parent at each iteration. 

In general, greedy methods are not provably good.  We next describe sufficient conditions to guarantee near-optimality.

\begin{definition} \label{def:grdsub} A joint distribution $P_{\allX}$ is called \emph{greedily-submodular} if there exists an $\alpha>0$, such that for any process $\Y$ and any subset $\allX_{\uW}$ of other processes, \beqa  \I(\X_{j} \to \Y \| \X_1, \dots, \X_{j-2}, \X_{j-1})  \leq \alpha  \I(\X_{j-1} \to \Y \| \X_1, \dots, \X_{j-2}), \label{eq:def:grdsub} \eeqa for all $1\leq j < |\uW|$ where the processes in $\allX_{\uW}$ are indexed according to the order in which they are selected by the greedy algorithm.
\end{definition}

This is a weaker condition than submodularity, a discrete analog of concavity \citep{nemhauser1978analysis}.  If $P_{\allX}$ has submodular directed information values, then for all pairs of processes $\{\X_j, \Y\}$ and sets of other processes $\allX_{S} \subseteq \allX_{S'} \subseteq \allX \backslash \{\X_j,\Y \}$, 
\beqa \I (\X_j \to \Y \| \allX_{S'}) \leq \I (\X_j \to \Y \| \allX_{S}). \label{eq:submod:def} 
\eeqa  
Submodularity implies conditioning does not increase directed information.  %
\begin{corollary} If $P_{\allX}$ is submodular, it is also greedily-submodular with $\alpha \leq 1$. 
\end{corollary}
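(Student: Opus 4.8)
The plan is to establish the defining inequality \eqref{eq:def:grdsub} directly with $\alpha = 1$, which certifies greedy-submodularity with a constant $\alpha \leq 1$. The whole argument rests on two facts that are available for free: the greedy ordering of $\allX_{\uW}$ supplies a selection-optimality property at every step, and submodularity \eqref{eq:submod:def} relates two directed informations conditioned on nested sets.

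First I would fix the response process $\Y$, the subset $\allX_{\uW}$, and an admissible index $j$, where the processes are indexed according to the order in which the greedy algorithm selects them. I would then invoke submodularity \eqref{eq:submod:def} with target pair $(\X_j,\Y)$, smaller conditioning set $S=\{1,\dots,j-2\}$ and larger set $S'=\{1,\dots,j-1\}$, so that $S \subseteq S'$, neither set contains the index $j$, and the single added process is $\X_{j-1}$. This yields
\[
\I(\X_{j} \to \Y \| \X_1, \dots, \X_{j-2}, \X_{j-1}) \leq \I(\X_{j} \to \Y \| \X_1, \dots, \X_{j-2}).
\]
Next I would use the defining property of the greedy procedure: at the step where $\X_1,\dots,\X_{j-2}$ have already been chosen, the algorithm selects the still-available candidate that maximizes the conditional directed information given $\X_1,\dots,\X_{j-2}$, and that maximizer is $\X_{j-1}$. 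Since $\X_j$ is selected strictly later in the order, it is still an available candidate at that step, so the greedy choice gives
\[
\I(\X_{j} \to \Y \| \X_1, \dots, \X_{j-2}) \leq \I(\X_{j-1} \to \Y \| \X_1, \dots, \X_{j-2}).
\]
Concatenating the two displayed inequalities immediately produces \eqref{eq:def:grdsub} with $\alpha = 1$, hence with some $\alpha \leq 1$.

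The proof is short, so the only real subtlety—and the step I would be most careful about—is the index bookkeeping that justifies the greedy inequality: I must confirm that $\X_j$ genuinely remains a feasible candidate at the moment $\X_{j-1}$ is chosen (it is unselected because it comes later in the greedy order, and it is distinct from the response $\Y$), and that the application of submodularity uses exactly the nested sets $S \subseteq S'$ with $\X_j$ playing the role of the process whose contribution is being bounded. Once these identifications are pinned down, no computation is required beyond chaining the two inequalities.
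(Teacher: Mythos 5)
Your proposal is correct and follows essentially the same argument as the paper: both apply submodularity \eqref{eq:submod:def} with the nested sets $S=\{1,\dots,j-2\}$ and $S'=S\cup\{j-1\}$ to drop the conditioning on $\X_{j-1}$, then use the greedy selection property (that $\X_{j-1}$ was chosen over the still-available $\X_j$) to swap $\X_j$ for $\X_{j-1}$, and chain the two inequalities to obtain \eqref{eq:def:grdsub} with $\alpha=1$. Your added care about why $\X_j$ remains a feasible candidate at the step where $\X_{j-1}$ is chosen is a sound bookkeeping point that the paper leaves implicit.
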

\begin{proof}
Let $S = \{1,\ldots,j-2\}$ and $S' = S\cup \{j-1\}$, and let the processes be labeled in the order they are selected in a greedy search for parents for $\Y$.  Then if $P_{\allX}$ is submodular, 
\begin{align}  \I(\X_{j} \to \Y \| \X_1, \dots, \X_{j-2}, \X_{j-1})  
	&\leq    \I(\X_{j} \to \Y \| \X_1, \dots, \X_{j-2}),\label{eq:submod:a} \\
	&\leq   \I(\X_{j-1} \to \Y \| \X_1, \dots, \X_{j-2}),\label{eq:submod:b}  
\end{align} where \eqref{eq:submod:a} would hold by \eqref{eq:submod:def} and \eqref{eq:submod:b} would hold because $\X_j$ is picked after $\X_{j-1}$ in a greedy search.  Thus, \eqref{eq:submod:def} holds with $\alpha \leq 1$.
\end{proof}

Entropy is submodular \citep{fujishige1978polymatroidal}. However, in general mutual information and directed information are not, as shown in the following example.

\begin{example} \label{ex:DInotsubmod} Let $\{\N,\X,\Z\}$ be mutually independent, zero-mean, i.i.d.\ Gaussian processes. Let $Y_{t+1} = X_t + Z_t + N_t$.  Then using stationarity \cite[pg.~256]{cover2006elements},
 \beqas \I(\X \to \Y) 
 = \I(X_1; Y_2) 
 &=& \frac{1}{2} \log \left( 1 + \frac{\var (X_1)}{\var (Z_1)+\var(N_1)} \right) \\
 &<& \frac{1}{2} \log \left( 1 + \frac{\var (X_1)}{\var(N_1)} \right) \\ 
 &=& \I(X_1; Y_2 | Z_1) 
 =\I(\X \to \Y \| \Z).  
 \eeqas  Since conditioning can {\em increase} directed information, it is not submodular.  
\end{example}

\begin{remark} The authors are not aware of this property being discussed in the literature previously.  Two other conditions that are weaker than submodularity are discussed in \cite{cevher2011greedy} and \cite{das2011submodular}.  The former uses submodularity up to an additive error.  The latter uses submodularity up to multiplicative error. Both measure the increase in conditioning of the terms in \eqref{eq:submod:def}, unlike \eqref{eq:def:grdsub} which only bounds sequential increases while greedily selecting a parent set.
\end{remark}

\begin{assumption} \label{assump:greedysub} We assume that $P_{\allX}$ is greedily-submodular.
\end{assumption}

When Assumption~\ref{assump:greedysub} holds, the greedy search yields a near-optimal approximation.  Let $A$ denote the set of indices for an optimal set of $K$ parents and $B$ the indices for the greedily selected set of $L\leq K$ parents. 

\begin{theorem} \label{thm:grd_bnd} Under Assumption~\ref{assump:greedysub}, \[ \I(\allX_{B} \to \Y) \geq \left(  1 - \exp \left( \frac{-L}{\sum_{i = 0}^{K-1} \alpha^i} \right)  \right) \I(\allX_{A} \to \Y). \]
\end{theorem}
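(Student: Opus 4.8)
The plan is to adapt the classical greedy analysis for monotone submodular maximization, with the relaxed condition of Definition~\ref{def:grdsub} substituting for true submodularity. Write $f(S) := \I(\allX_S \to \Y)$ for the objective, let $B_\ell$ denote the greedy parent set after $\ell$ selections (so $B = B_L$), and let $\delta_{\ell+1} := \max_j \I(\X_j \to \Y \| \allX_{B_\ell})$ be the gain of the $(\ell{+}1)$-st greedy pick. Two elementary facts about $f$ are needed at the outset. From the chain rule for directed information, which decomposes $\I(\allX_{S\cup T}\to\Y)$ as $\I(\allX_S\to\Y) + \I(\allX_T \to \Y\|\allX_S)$ term-by-term from the mutual-information chain rule applied to the definitions in Section~\ref{sec:defns}, together with the nonnegativity of causally conditioned directed information, it follows that $f$ is monotone and $f(\emptyset)=0$.

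The core of the argument is the per-step inequality
\[ f(A) - f(B_\ell) \;\le\; \Big(\textstyle\sum_{i=0}^{K-1}\alpha^i\Big)\,\delta_{\ell+1}. \]
First, monotonicity and the chain rule give $f(A) - f(B_\ell) \le f(A\cup B_\ell) - f(B_\ell) = \I(\allX_{A\setminus B_\ell}\to\Y \| \allX_{B_\ell})$. I would then expand this conditional term by the chain rule into a telescoping sum of single-parent marginal gains, ordering the elements of $A\setminus B_\ell = \{a_1,\dots,a_p\}$ (with $p\le K$) according to the ordering produced by running the greedy procedure on the restricted ground set $\uW := A\cup B_\ell$. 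Because each greedy choice over the full process set is in particular the maximizer over the subset $\uW$, greedy on $\uW$ reproduces $b_1,\dots,b_\ell$ and then selects $a_1,\dots,a_p$; hence $b_1,\dots,b_\ell,a_1,\dots,a_p$ is a genuine greedy ordering to which Definition~\ref{def:grdsub} applies. Applying \eqref{eq:def:grdsub} repeatedly along this ordering bounds the marginal gain of $a_k$ by $\alpha^{k-1}\,\I(\X_{a_1}\to\Y\|\allX_{B_\ell})$, and this first marginal is at most $\delta_{\ell+1}$ by greedy optimality. Summing over $k=1,\dots,p\le K$ then produces the geometric factor $\sum_{i=0}^{K-1}\alpha^i$.

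Given the per-step inequality, the theorem follows by the standard recursion. Writing $\Delta_\ell := f(A)-f(B_\ell)$ and $c:=\sum_{i=0}^{K-1}\alpha^i$, and using $\delta_{\ell+1}=f(B_{\ell+1})-f(B_\ell)=\Delta_\ell-\Delta_{\ell+1}$, the inequality reads $\Delta_\ell \le c(\Delta_\ell - \Delta_{\ell+1})$, i.e.\ $\Delta_{\ell+1} \le (1-1/c)\,\Delta_\ell$. Unrolling from $\Delta_0 = f(A)$ gives $\Delta_L \le (1-1/c)^L f(A) \le e^{-L/c} f(A)$, which rearranges to the claimed bound $\I(\allX_B\to\Y)=f(B)\ge \big(1 - e^{-L/c}\big)f(A)$.

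The step I expect to be the main obstacle is justifying the use of Definition~\ref{def:grdsub} on the cross term $\I(\allX_{A\setminus B_\ell}\to\Y\|\allX_{B_\ell})$: the definition only controls ratios of \emph{consecutive} gains along a greedy ordering, so I must argue carefully that greedy on the augmented ground set $A\cup B_\ell$ yields exactly the block ordering $b_1,\dots,b_\ell,a_1,\dots,a_p$ and that the telescoping chain-rule expansion may be taken in that same order. Some care is also needed at the boundary index of \eqref{eq:def:grdsub} (the gain of the final selected element) and with the fact that $p=|A\setminus B_\ell|$ may be strictly smaller than $K$, which only helps since it means fewer terms enter the sum.
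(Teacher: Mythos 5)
Your proof is correct and follows essentially the same route as the paper's: the same per-step inequality obtained from monotonicity plus the chain rule taken in the greedy order of the augmented set $A \cup B_\ell$, the same chaining of Definition~\ref{def:grdsub} to get the factor $\sum_{i=0}^{K-1}\alpha^i$, and the same geometric recursion $\Delta_{\ell+1} \le (1-1/c)\Delta_\ell$ unrolled via $(1-1/c)^L \le e^{-L/c}$. The only difference is cosmetic: you handle a possible overlap $A \cap B_\ell \neq \emptyset$ directly by working with $A \setminus B_\ell$, whereas the paper restricts to the disjoint case and notes the other case yields a tighter bound.
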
  The proof is in Appendix~\ref{apnd:prf:grd_bnd}.

Recall from Theorem~\ref{thm:apx:gen_appx} that the larger the sum of directed information values from parent sets to children is, the better the approximation is.  Theorem~\ref{thm:grd_bnd} implies that greedy approximations are near-optimal approximations.  Figure~\ref{fig:bnd_vals} shows the bound coefficient in Theorem~\ref{thm:grd_bnd} for $\alpha \in \{ 1.3, 1.7, 2.5\}$. In Example~\ref{ex:DInotsubmod}, if the variances were equal, $\alpha = 1.71$ would suffice.

\newcommand{\Aopt} {\uA_{\text{opt}}}
\newcommand{\Agrd} {\uA_{\text{grd}}}

We can also bound how close an optimal parent set $A_L$ with in-degree $L$ is to an optimal parent set $A_K$  with in-degree $K>L$.  
\begin{corollary} \label{cor:KbndL} Under Assumption~\ref{assump:greedysub}, with $\alpha \neq 1$, 
\[  \I(\allX_{A_L} \to \Y) \geq  \left( \frac{\alpha^{L} - 1}{\alpha^{K} - 1} \right)  \I(\allX_{A_{K}} \to \Y).\] 
\end{corollary}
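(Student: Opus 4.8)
The plan is to decompose the directed information of the larger optimal set $A_K$ by the chain rule, use greedy submodularity to control the resulting marginal gains as a near-geometric sequence, and then lower-bound the ratio of partial sums. First I would relabel the $K$ processes comprising $A_K$ as $\X_1,\ldots,\X_K$ according to the order in which a greedy search (restricted to $A_K$) would select them as parents of $\Y$. By the chain rule for causally conditioned directed information, $\I(\allX_{A_K}\to\Y)=\sum_{\ell=1}^{K}\delta_\ell$, where $\delta_\ell:=\I(\X_\ell\to\Y\|\X_1,\ldots,\X_{\ell-1})$ denotes the $\ell$-th marginal gain (with $\delta_1=\I(\X_1\to\Y)$). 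The same chain rule gives $\I(\allX_{\{1,\ldots,L\}}\to\Y)=\sum_{\ell=1}^{L}\delta_\ell$ for the first $L$ of these relabeled processes, which is precisely the size-$L$ subset of $A_K$ that greedy would produce. Since $A_L$ is an \emph{optimal} size-$L$ parent set, it is at least as good as this particular size-$L$ set, so $\I(\allX_{A_L}\to\Y)\geq\sum_{\ell=1}^{L}\delta_\ell$.

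Next I would invoke Assumption~\ref{assump:greedysub} with $\uW=A_K$ and this greedy ordering. By Definition~\ref{def:grdsub}, this yields $\delta_\ell\leq\alpha\,\delta_{\ell-1}$ for every $\ell$. Iterating this single-step inequality upward and downward from index $L$ produces two one-sided geometric estimates relative to $\delta_L$: for the tail, $\delta_\ell\leq\alpha^{\ell-L}\delta_L$ when $\ell>L$, and for the head, $\delta_\ell\geq\alpha^{\ell-L}\delta_L$ when $\ell\leq L$. Assuming $\delta_L>0$ (otherwise all later gains vanish and the bound is immediate), summing these gives $\sum_{\ell=L+1}^{K}\delta_\ell\leq\delta_L\sum_{k=1}^{K-L}\alpha^{k}$ and $\sum_{\ell=1}^{L}\delta_\ell\geq\delta_L\sum_{k=0}^{L-1}\alpha^{-k}$.

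Finally I would combine these two estimates. Writing $S_L=\sum_{\ell=1}^{L}\delta_\ell$ and $S_K=\sum_{\ell=1}^{K}\delta_\ell$, they give $\frac{S_K-S_L}{S_L}\leq\frac{\sum_{k=1}^{K-L}\alpha^{k}}{\sum_{k=0}^{L-1}\alpha^{-k}}$. Evaluating the two finite geometric series (valid because $\alpha\neq 1$) and simplifying, the right-hand side collapses to $\frac{\alpha^{K}-\alpha^{L}}{\alpha^{L}-1}$, whence $\frac{S_K}{S_L}\leq\frac{\alpha^{K}-1}{\alpha^{L}-1}$, i.e.\ $\frac{S_L}{S_K}\geq\frac{\alpha^{L}-1}{\alpha^{K}-1}$. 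Chaining this with $\I(\allX_{A_L}\to\Y)\geq S_L$ and $\I(\allX_{A_K}\to\Y)=S_K$ delivers the stated inequality.

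I expect the main obstacle to be the geometric-series bookkeeping in the last step: carefully tracking the head and tail sums against the pivot gain $\delta_L$ and verifying that the ratio telescopes exactly to $\frac{\alpha^{K}-1}{\alpha^{L}-1}$ for all $\alpha\neq 1$ (both $\alpha>1$ and $\alpha<1$). A secondary point requiring care is confirming that the greedy-within-$A_K$ ordering is exactly the ordering to which Definition~\ref{def:grdsub} applies, so that the marginal-gain bound $\delta_\ell\leq\alpha\,\delta_{\ell-1}$ is genuinely licensed; this, together with the optimality of $A_L$ over all size-$L$ sets, is what lets the chain-rule decomposition of $A_K$ be compared against $A_L$.
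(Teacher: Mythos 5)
Your proof is correct, and its skeleton coincides with the paper's: order $A_K$ by a greedy pass restricted to $A_K$, expand $\I(\allX_{A_K}\to\Y)$ by the chain rule into marginal gains, bound the sum of the first $L$ gains by $\I(\allX_{A_L}\to\Y)$ using optimality of $A_L$, and apply Definition~\ref{def:grdsub} with $\uW=A_K$ to get the one-step ratio bound on consecutive gains (so the ``licensing'' concern you flag is a non-issue: the definition quantifies over arbitrary subsets equipped with their greedy ordering). Where you genuinely diverge is in extracting the final constant. The paper abstracts the three facts above into an auxiliary optimization problem --- maximize $\sum_{i=1}^K b_i$ subject to $\sum_{i=1}^L b_i\leq c$ and $0\leq b_i\leq\alpha b_{i-1}$ --- and proves via two exchange arguments (Lemmas~\ref{lem:prf:KbndL:1} and~\ref{lem:prf:KbndL:2}) that both constraints bind at any maximizer, so the extremal gain sequence is exactly geometric and the optimum equals $c(1-\alpha^K)/(1-\alpha^L)$ (Lemma~\ref{lem:prf:KbndL:3}); the corollary then follows because the true gains are feasible for that problem. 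You avoid the auxiliary problem entirely by pivoting at the $L$-th gain: iterating $\delta_\ell\leq\alpha\delta_{\ell-1}$ downward gives $\delta_\ell\geq\alpha^{\ell-L}\delta_L$ on the head, iterating upward gives $\delta_\ell\leq\alpha^{\ell-L}\delta_L$ on the tail, and the ratio of the two geometric sums collapses to $(\alpha^K-\alpha^L)/(\alpha^L-1)$, hence $S_K/S_L\leq(\alpha^K-1)/(\alpha^L-1)$ --- the same constant, valid for all $\alpha\neq 1$. Your route is shorter and more elementary: no exchange lemmas, and your explicit treatment of $\delta_L=0$ and of the positivity of the lower bound $\delta_L\sum_{k=0}^{L-1}\alpha^{-k}$ on $S_L$ (which is what legitimizes the division when $\alpha<1$) closes exactly the points where such a direct computation could fail. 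What the paper's formulation buys in exchange is an explicit description of the worst case --- gains exactly geometric with the budget constraint tight --- which makes the tightness of the constant transparent and packages the argument into reusable standalone lemmas.
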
   The proof is in Appendix~\ref{apnd:prf:cor:KbndL}.  The bound coefficient is plotted in Figure~\ref{fig:bnd_vals_LK}.

\begin{figure}[t]
\centering
  \subfigure[The bound in Theorem~\ref{thm:grd_bnd} with $L=K$.
  ]{\label{fig:bnd_vals} \includegraphics[width=.45\columnwidth]{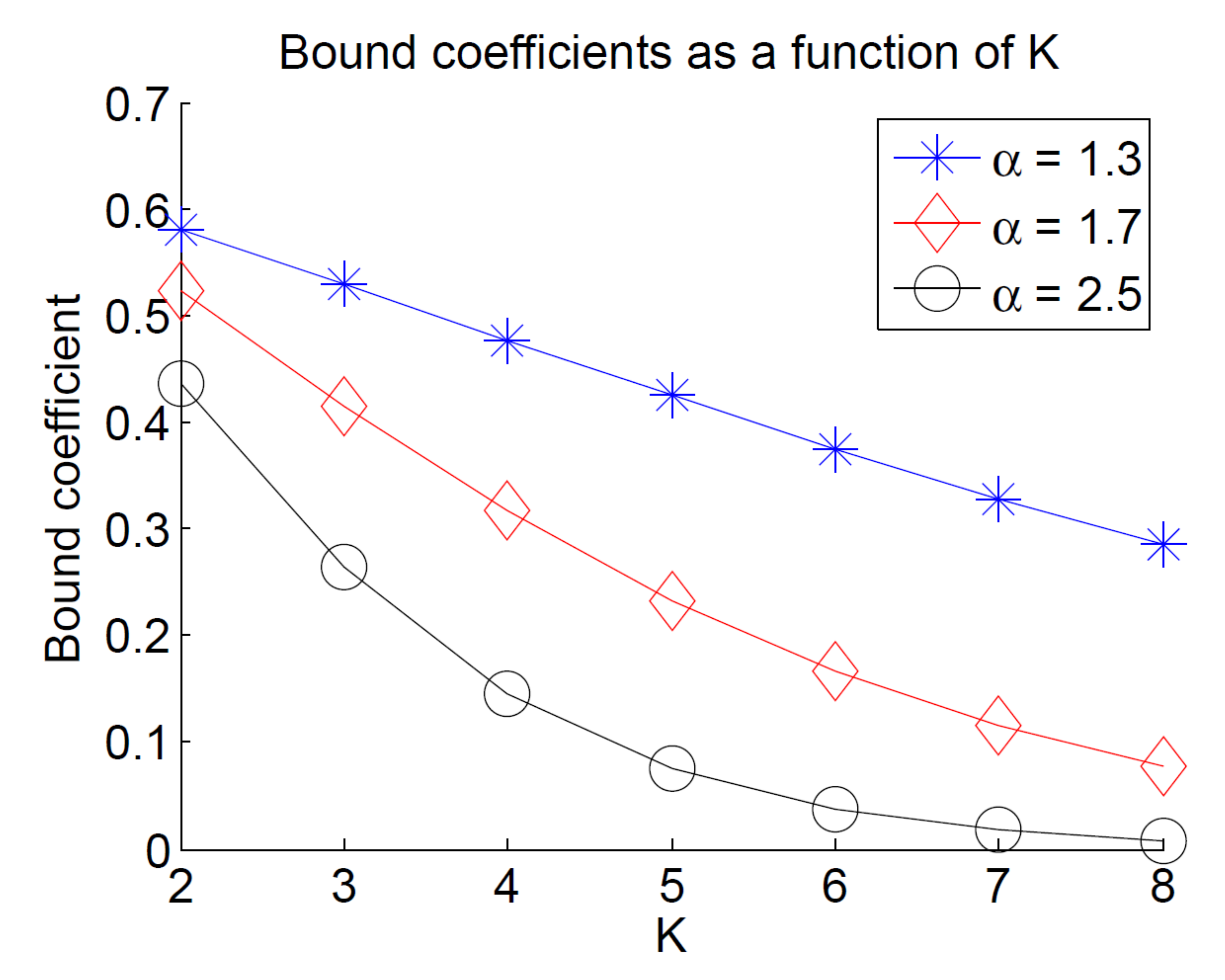}} \hspace{0.3cm}
  \subfigure[The bound in Corollary~\ref{cor:KbndL} with $L=2$.
  ]{\label{fig:bnd_vals_LK} \includegraphics[width=.45\columnwidth]{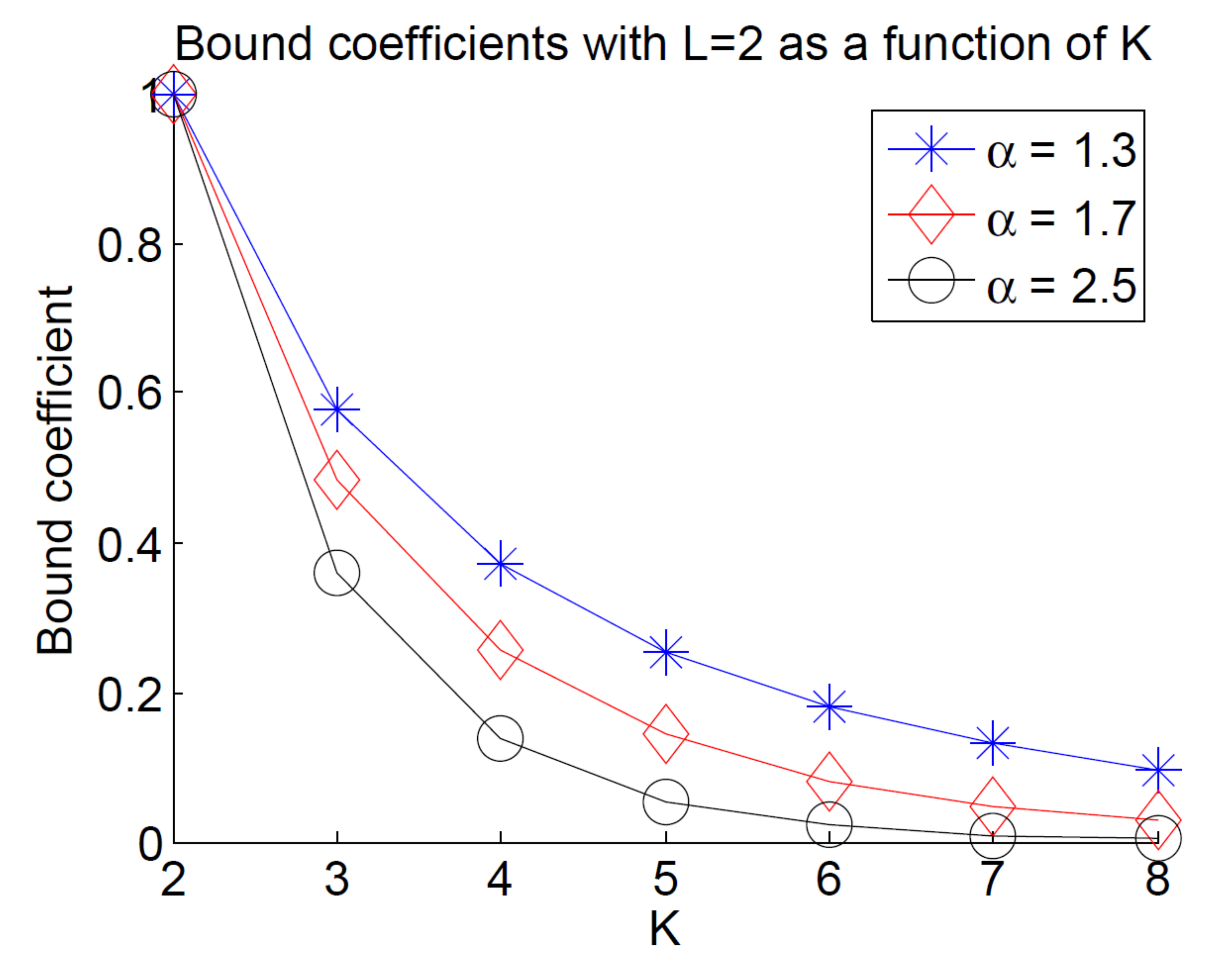}}  
  \caption{\small Plots for the bounds in Theorem~\ref{thm:grd_bnd} and Corollary~\ref{cor:KbndL} respectively.  }
  \label{fig:bnds}
\end{figure}

\subsection{Near-Optimal Solutions for  the  Unconstrained Problem}

We next consider Algorithm~3 which uses a greedy search to find a near-optimal solution to \eqref{eq:apx:gen:thm} where the only constraints are in-degree bounds, similar to Algorithm~1.   Since the greedy search is adaptive, directed information values will be computed as needed.   Let $\{B(i)\}_{i=1}^m$ and $\{A(i)\}_{i=1}^m$  denote the parent sets returned by Algorithms 3 and 1 respectively.

\begin{table}[t] \begin{normalsize} \begin{center}
\begin{tabular*}{\linewidth}{@{}llrr@{}}
{\bfseries Algorithm 3. {\sc Near-OptimalGeneral}}\\
\hline {\bf Input:} $L$, $m$ \\
\hline
~1. {\bf For} $i \in \{1, \dots, m\}$  \\
~2. \quad $B(i) \gets \emptyset$ \\
~3. \quad {\bf While} $|B(i)| < L$ \\
~4. \quad \quad {\bf For} $l \in \{ 1, \dots, m\} \backslash \{B(i) \bigcup \{i\}  \}$ \\
~5. \quad \quad \quad {\bf Compute} $\I( \X_l \to \X_i \| \allX_{B(i)})$\\
~6. \quad \quad  $B(i)  \gets  B(i) \bigcup \argmax_{l } \hspace{0.02cm} \I( \X_l \to \X_i \| \allX_{B(i)})$ \\
~7. {\bf Return} $\{B(i)\}_{i=1}^m$\\
\hline
\end{tabular*}\end{center}\label{alg:caus_dep_tree2:optgen} \end{normalsize}
\end{table}

\begin{theorem}\label{thm:grd_gen} Under Assumption~\ref{assump:greedysub}, 
\beqas 
 \sum_{i=1}^m \I(\allX_{B(i)} \to \X_i) \geq \left( \hspace{-0.0cm} 1 - \exp\left( \frac{-L}{\sum_{i = 0}^{K-1} \alpha^i} \right)  \right)  \hspace{-0.0cm} \sum_{i=1}^m \I(\allX_{A(i)}  \to \X_i) .
\eeqas
\end{theorem}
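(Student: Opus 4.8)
The plan is to apply the single-process bound of Theorem~\ref{thm:grd_bnd} separately to each node $\X_i$ and then sum the resulting $m$ inequalities. The key structural fact enabling this is that, by the decomposition in Theorem~\ref{thm:apx:gen_appx} and Corollary~\ref{cor:apx:gen_K_apx}, both the optimal problem solved by Algorithm~1 and the greedy problem solved by Algorithm~3 decouple into independent per-node parent-selection problems. Thus a global bound will follow from a per-node bound whose coefficient is common to all nodes.

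First I would verify that the hypotheses of Theorem~\ref{thm:grd_bnd} are met for each fixed $\X_i$. By Corollary~\ref{cor:apx:gen_K_apx}, the parent set $A(i)$ returned by Algorithm~1 satisfies $A(i) \in \argmax_{|A|=K} \I(\allX_A \to \X_i)$, so $A(i)$ is an optimal in-degree-$K$ parent set for $\X_i$ in precisely the sense required by Theorem~\ref{thm:grd_bnd}. Likewise, the inner loop of Algorithm~3 (lines~3--6) constructs $B(i)$ by exactly the greedy procedure described just before Definition~\ref{def:grdsub}: starting from $B(i)=\emptyset$, at each step it adds the process maximizing $\I(\X_l \to \X_i \| \allX_{B(i)})$ given the parents selected so far, halting once $|B(i)|=L$. (When $B(i)=\emptyset$ the conditional directed information reduces to $\I(\X_l\to\X_i)$, matching the first greedy step.) Hence $B(i)$ is the greedily-selected set of $L\leq K$ parents for $\X_i$.

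Next I would instantiate Theorem~\ref{thm:grd_bnd} with $\Y=\X_i$, $A=A(i)$, and $B=B(i)$, which gives, for every $i\in[m]$,
\[
\I(\allX_{B(i)} \to \X_i) \geq \left( 1 - \exp\left( \frac{-L}{\sum_{j=0}^{K-1} \alpha^j} \right) \right) \I(\allX_{A(i)} \to \X_i).
\]
Since the bound coefficient is independent of $i$, summing these $m$ inequalities over $i\in[m]$ and factoring the common coefficient out of the right-hand side yields exactly the claimed inequality, completing the argument.

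There is no substantive obstacle here: the real work—where Assumption~\ref{assump:greedysub} is actually used—is already carried out in the proof of Theorem~\ref{thm:grd_bnd}. The only points requiring care are the bookkeeping checks that the per-node search of Algorithm~3 coincides with the greedy selection underlying Theorem~\ref{thm:grd_bnd}, that $A(i)$ is genuinely optimal at in-degree $K$ (via Corollary~\ref{cor:apx:gen_K_apx}), and that the convention $L\leq K$ relating the greedy and optimal in-degrees holds so that the single-node bound applies termwise. Each of these follows directly from the algorithm descriptions, so the proof is essentially a one-line reduction to the already-established per-node estimate.
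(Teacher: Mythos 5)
Your proposal is correct and takes essentially the same approach as the paper: the paper's own proof is precisely the observation that Theorem~\ref{thm:grd_bnd} applies to each $i \in [m]$ separately and the $m$ inequalities sum with a common coefficient. Your additional bookkeeping (verifying via Corollary~\ref{cor:apx:gen_K_apx} that Algorithm~1 yields per-node optimal parent sets and that Algorithm~3's inner loop matches the greedy selection of Theorem~\ref{thm:grd_bnd}) is sound and simply makes explicit what the paper leaves implicit.
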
  \begin{proof} The proof follows from Theorem~\ref{thm:grd_bnd} holding for each $i \in \{1,\dots, m\}$. \end{proof} %
\begin{remark}
The edge weight $\I( \allX_{B(i)} \to \X_i)$ can be computed using a chain rule \citep{kramer1998directed}. Let $\{j_1, j_2, \dots, j_{L} \}$ denote $B(i)$.  Then \[ \I( \allX_{B(i)} \to \X_i) = \sum_{l = 1}^{L}  \I ( \X_{j_l} \to \X_i \| \X_{j_1},  \dots \X_{j_{l-1}}) .\]
\end{remark}

\subsection{Near-Optimal Solutions for  Finding Connected Graphs}

We now propose Algorithm~4 which uses a greedy search to find a near-optimal connected solution to \eqref{eq:apx:gen:thm}.  Similar to Algorithm~3, it precomputes parent sets for each possible directed edge.  Then a MWDST algorithm is called.   In Algorithm~4, $\widetilde{B}(i,j)$ is the set of parents for $\X_i$ with $\X_j$ as one of the parents, selected in a greedy fashion. The value $\I( \allX_{\wtB(i,j)} \to \X_i)$ is the weight of edge $\X_j \to \X_i$ given to the MWDST algorithm.  Let $\{A(i)\}_{i=1}^m$ denote the parent sets returned by Algorithm~2.

\begin{table}[t!] \begin{normalsize} \begin{center}
\begin{tabular*}{\linewidth}{@{}llrr@{}}
{\bfseries Algorithm 4. {\sc Near-OptimalConnected}}\\
\hline {\bf Input:} $L$, $m$ \\
\hline
~\phantom{0}1. {\bf For} $i \in \{1, \dots, m\}$  \\
~\phantom{0}2. \quad $B(i) \gets \emptyset$ \\
~\phantom{0}3. \quad  {\bf For} $j \in \{1, \dots, m\}\backslash \{i\}$ \\
~\phantom{0}4. \quad \quad $\wtB(i,j) \gets \{j\}$\\
~\phantom{0}5. \quad \quad {\bf While} $|\wtB(i,j)| <L$ \\
~\phantom{0}6. \quad \quad  \quad {\bf For} $l \in \{ 1, \dots, m\} \backslash \{\wtB(i,j) \bigcup \{i\}  \}$ \\
~\phantom{0}7. \quad \quad  \quad \quad {\bf Compute} $\I( \X_l \to \X_i \| \allX_{\wtB(i,j)})$\\
~\phantom{0}8. \quad \quad  \quad  $\wtB(i,j) \!\! \gets \!\!\wtB(i,j) 
\bigcup \argmax_{l } \hspace{0.02cm} \I( \X_l \to \X_i \| \allX_{\wtB(i,j)})$ \\
~\phantom{0}9. $\{b(i)\}_{i=1}^m \gets $ {\bf MWDST} $( \{ \I( \allX_{\wtB(i,j)} \to \X_i) \}_{1\leq i\neq j \leq m})$ \\
~10. {\bf For} $i \in \{1, \dots, m\}$  \\
~11. $\quad\  B(i) \gets \wtB(i,b(i))$ \\
~12. {\bf Return} $\{B(i)\}_{i=1}^m$\\
\hline
\end{tabular*}\end{center}\label{alg:caus_dep_tree2:optcon} \end{normalsize}
\end{table}

\begin{theorem}\label{thm:grd_connect} Under Assumption~\ref{assump:greedysub}, for Algorithm~4, 
\beqas 
 \sum_{i=1}^m \I(\allX_{B(i)} \to \X_i) \geq \left( \hspace{-0.0cm} 1 - \exp\left( \frac{-L}{\sum_{i = 0}^{K-1} \alpha^i} \right)  \right)  \hspace{-0.0cm} \sum_{i=1}^m \I(\allX_{A(i)}  \to \X_i) .
\eeqas
\end{theorem}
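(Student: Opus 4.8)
The plan is to reduce the theorem to two ingredients: a per-edge comparison between the greedy parent sets $\wtB(i,j)$ and the optimal constrained parent sets $\tildeA(i,j)$, and the optimality of the maximum-weight directed spanning tree (MWDST) computed in line~9 of Algorithm~4. Write $w(i,j) \denote \I(\allX_{\wtB(i,j)} \to \X_i)$ for the greedy edge weight used by Algorithm~4 and $w^*(i,j) \denote \I(\allX_{\tildeA(i,j)} \to \X_i)$ for the optimal edge weight used by Algorithm~2. Since $A(i) = \tildeA(i,a(i))$ and $B(i) = \wtB(i,b(i))$, where $a(\cdot)$ and $b(\cdot)$ are the tree-parent functions returned by the two MWDST calls, the quantities to be compared are exactly the weights of the two optimal arborescences, $\sum_i w(i,b(i))$ and $\sum_i w^*(i,a(i))$.

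First I would establish the edge-wise bound
\[
w(i,j) \geq c\, w^*(i,j), \qquad c \denote 1 - \exp\parenth{\frac{-L}{\sum_{s=0}^{K-1}\alpha^s}},
\]
for every ordered pair $(i,j)$. For a fixed pair, $\wtB(i,j)$ is produced by the greedy parent search for $\X_i$ started from the forced first parent $\X_j$ (lines~4--8), while $\tildeA(i,j)$ is the exact maximizer of $\I(\allX_{B}\to\X_i)$ over size-$K$ sets $B$ containing $\X_j$. Because both the greedy set and the comparison set are pinned to contain $\X_j$, this is precisely the situation of Theorem~\ref{thm:grd_bnd} once $\X_j$ is regarded as the first selected parent, and Assumption~\ref{assump:greedysub} supplies the greedy-submodularity needed to run its recursion. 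I would therefore invoke Theorem~\ref{thm:grd_bnd} with $\Y = \X_i$ to obtain the displayed inequality.

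Next I would invoke the optimality of the two MWDST computations. The arborescence returned in line~9 of Algorithm~4 maximizes $\sum_i w(i,\cdot)$ over all spanning arborescences; since the arborescence with parent function $a(\cdot)$ returned by Algorithm~2 is a feasible competitor over the same node set, we get $\sum_i w(i,b(i)) \geq \sum_i w(i,a(i))$. Applying the edge-wise bound on each tree edge $(i,a(i))$ and then recognizing the right-hand side as the optimal connected value yields
\[
\sum_{i=1}^m \I(\allX_{B(i)}\to\X_i) = \sum_i w(i,b(i)) \geq \sum_i w(i,a(i)) \geq c\sum_i w^*(i,a(i)) = c\sum_{i=1}^m \I(\allX_{A(i)}\to\X_i),
\]
which is the claim.

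The main obstacle I expect is the edge-wise bound, specifically justifying that Theorem~\ref{thm:grd_bnd} yields the coefficient $c$ when the first parent $\X_j$ is forced rather than chosen greedily. A naive reduction that conditions everything on $\X_j$ and applies Theorem~\ref{thm:grd_bnd} to the remaining $L-1$ greedy picks against the remaining $K-1$ optimal picks produces the smaller coefficient $1 - \exp(-(L-1)/\sum_{s=0}^{K-2}\alpha^s)$; getting the stated $c$ requires running the greedy recursion of Theorem~\ref{thm:grd_bnd} with $\X_j$ counted as a genuine first step whose marginal gain is controlled in the same way, using that the constrained optimum $\tildeA(i,j)$ shares the forced element. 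Verifying that this bookkeeping goes through cleanly---and that the conditioned problem inherits greedy-submodularity with the same $\alpha$---is the delicate point; the subsequent MWDST argument is then routine.
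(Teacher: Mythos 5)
Your proposal is correct and follows essentially the same route as the paper: the paper's proof establishes the identical edge-wise bound $w_4(e) \geq c\, w_2(e)$ for every edge of the complete graph by invoking Theorem~\ref{thm:grd_bnd}, and then runs the same MWDST exchange argument, $\sum_{e \in T_4} w_4(e) \geq \sum_{e \in T_2} w_4(e) \geq c \sum_{e \in T_2} w_2(e)$, where $T_2$ and $T_4$ are the trees chosen by Algorithms~2 and~4. The forced-first-parent subtlety you flag is genuine---Theorem~\ref{thm:grd_bnd} as stated compares an unconstrained greedy set against an unconstrained optimum, whereas here both sets are pinned to contain $\X_j$---but the paper passes over it silently, simply asserting that the edge-wise bound ``follows from Theorem~\ref{thm:grd_bnd},'' so your more careful treatment of that step is a refinement of, not a departure from, the paper's argument.
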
   The proof is in Appendix~\ref{apnd:prf:grd_connect}.

\section{Best $r$ Bounded In-Degree Approximations} \label{sec:best_r}

For a given $P_{\allX}$, Algorithms~1--4 each return a single solution.  For many applications, knowing the $r$-best approximations, where $r$ might be five, twenty, or a hundred, can be advantageous.  For instance, there is no guarantee of uniqueness for optimal approximations.  Additionally, when data is limited or noisy, the actual optimal solution might appear as second or third best due to estimation errors.  Also, for more complex constraints, such as directed trees with depth at most five, it might be easier to find the $r$-best solutions to the more general problem of directed tree approximations, and then pick the highest ranking solution that satisfies the extra constraint.  Lastly, edges that persist in all of the $r$-best approximations are likely to be important.

We next discuss methods to identify the $r$-best bounded in-degree approximations.  For simplicity, we will focus on altering Algorithm~1 and then discuss differences for modifying the other algorithms.  A strategy for identifying the $r$-best solutions for assignment problems is discussed in \cite{lawler1972procedure} based on branching candidate solutions.  The method would be impractical to apply here.   We develop an alternative algorithm, which, like that in \cite{lawler1972procedure}, will be based on a branching of candidate solutions.

\subsection{Optimal Solutions for the Unconstrained Problem} \label{sec:rbest:optgen}

Recall from Theorem~\ref{thm:apx:gen_appx} that the sum of directed information values from parent sets to children corresponds to how good an approximation is.  When the only constraint is a user specified in-degree $K$, the parent sets can be chosen independently, as in Algorithm~1.  This property simplifies searching for the $r$-best approximations.  For instance, the second best approximation will only differ from the first by one parent set.  

Algorithm~5 identifies the $r$-best bounded in-degree approximations in order.  It maintains a list of candidate approximations.  It instantiates that list by calling Algorithm~1.  Each time an approximation is selected from the list, Algorithm~6 generates new candidate approximations from that ``seed.''  Algorithm~6 finds $m$ solutions by keeping all but one parent set, replacing it with the next best one.

To simplify the presentation, we will make the following assumption to avoid ties.
\begin{assumption} \label{assump:nonequalpar} For a given joint distribution $P_{\allX}$,  for any process $\X_i$, no two parent sets $A(i), B(i) \subseteq [m] \backslash \{i\}$, with $|A(i)| = |B(i)|$ have identical directed information values,
$\I(\allX_{A(i)} \to \X_i) \neq \I(\allX_{B(i)} \to \X_i).
$
\end{assumption}

For cases where Assumption~\ref{assump:nonequalpar} does not hold, line~4 in Algorithm~6 can be modified to check not only values, but elements of parent sets as well.

\begin{table}[t!] \begin{normalsize} \begin{center}
\begin{tabular*}{\linewidth}{@{}llrr@{}}
{\bfseries Algorithm 5. {\sc TopRGeneral}}\\
\hline {\bf Input:} $\mathcal{DI}_{\mathrm{BndInd}}, K, m, r$ \\
\hline
~1. $Top \gets \emptyset$  \\
~2. $l \gets 0$  \\
~3. $\calS \gets \text{\sc OptimalGeneral}(\mathcal{DI}_{\mathrm{BndInd}}, K, \ m)$  \\
~4. {\bf While} $l < r$  \\
~5. $\quad\  l  \gets l+1$ \\
~6. $\quad\  Top(l) \gets \argmax_{\Phat_{\allX} \in \calS} \sum_{i=1}^m \I(\allX_{A(i)} \to \X_i)$ \\
~7. $\quad\  \calS \gets \calS \bigcup \text{\sc GetNewSoln}(\mathcal{DI}_{\mathrm{BndInd}}, K, m,Top(l))$ \\
~8. $\quad\  \calS \gets \calS \ \backslash \ Top$ \\
~9. {\bf Return} $Top$\\
\hline
\end{tabular*}\end{center}\label{alg:TopRGeneral} \end{normalsize}
\end{table}

\begin{table}[t!] \begin{normalsize} \begin{center}
\begin{tabular*}{\linewidth}{@{}llrr@{}}
{\bfseries Algorithm 6. {\sc GetNewSolns}}\\
\hline {\bf Input:} $\mathcal{DI}_{\mathrm{BndInd}}, K,  m, \{A'(i)\}_{i=1}^m$ \\
\hline

~1. $\calS \gets \emptyset$  \\
~2. {\bf For} $i \in \{1,\dots,m\}$  \\
~3. $\quad\ \{A(i)\}_{i=1}^m \gets \{A'(i)\}_{i=1}^m$ \\
~4. \(\quad \ \calB \gets  \{ B:B\subseteq \setmi{m}{i}, \ |B| = K, \  \I(\allX_B \to \X_i) < \I(\allX_{A'(i)} \to \X_i)   \} \) \\
~5. $\quad $  \( A(i) \gets  \argmax_{B \in \calB} \ \I( \allX_{B} \to \bX_i)\) \\
~6. $\quad\  \calS \gets \calS \bigcup \{ \{A(i)\}_{i=1}^m \}$ \\
~7. {\bf Return} $\calS$\\
\hline
\end{tabular*}\end{center}\label{alg:GetNewSoln} \end{normalsize}
\end{table}

\begin{theorem}
\label{thm:apx:TopRGeneral} Under Assumption~\ref{assump:nonequalpar}, Algorithm~5 returns the $r$-best bounded in-degree approximations.
\end{theorem}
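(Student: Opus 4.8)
The plan is to first collapse the ranking of approximations into a separable combinatorial maximization, and then to recognize Algorithm~5 as a correct best-first enumeration over that structure. By Theorem~\ref{thm:apx:gen_appx} (more precisely, by the additive decomposition of $\D$ underlying it, in which $\D(P_{\allX}\parallel\Phat_{\allX})$ equals a constant independent of $\Phat_{\allX}$ minus $\sum_{i}\I(\allX_{A(i)}\to\X_i)$), ranking the members of $\calG_K$ by increasing KL divergence is identical to ranking them by the decreasing score $f(\Phat_{\allX}) := \sum_{i=1}^m \I(\allX_{A(i)}\to\X_i)$. Because the in-degree constraint decouples across nodes (Corollary~\ref{cor:apx:gen_K_apx}), the feasible set is the product $\prod_{i=1}^m \calB_i$ with $\calB_i = \{B\subseteq\setmi{m}{i} : |B|=K\}$, and $f=\sum_i g_i$ is additively separable with $g_i(B) := \I(\allX_B\to\X_i)$. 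Under Assumption~\ref{assump:nonequalpar} the values $\{g_i(B)\}_{B\in\calB_i}$ are distinct, so I can list the parent sets of each node as $B_i^{(1)},B_i^{(2)},\dots$ in strictly decreasing order of $g_i$ and encode every approximation by its rank-tuple $(t_1,\dots,t_m)$.

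Next I would pin down the neighbor structure produced by Algorithm~6. Given a seed with rank-tuple $(t_1,\dots,t_m)$, lines~4--5 replace the parent set of a single node $i$ by the best parent set whose value is strictly below the current one; by the distinctness from Assumption~\ref{assump:nonequalpar}, this is exactly $B_i^{(t_i+1)}$. Hence \textsc{GetNewSolns} returns precisely the $m$ ``increment one coordinate'' neighbors $(t_1,\dots,t_i+1,\dots,t_m)$. The structural fact I then record is: every non-optimal approximation $s$ has a coordinate with $t_i>1$, and decrementing it yields a predecessor $s'$ with $f(s')>f(s)$ (since $g_i(B_i^{(t_i-1)})>g_i(B_i^{(t_i)})$), and $s\in\textsc{GetNewSolns}(s')$.

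The core is an induction on $l$ with the invariant: after the $l$-th pass, $\{Top(1),\dots,Top(l)\}$ is a valid list of $l$ highest-scoring approximations, and $\calS$ is the set of \textsc{GetNewSolns}-children of those $l$ seeds with the already-selected ones removed, so that $\max_{s\in\calS} f(s)=\max_{s\notin Top} f(s)$. For the step, let $s^*$ attain $\max_{s\notin Top} f(s)$. Since $Top$ already contains the global maximizer (the output of Algorithm~1, by Theorem~\ref{thm:alg1:opt_gen}), $s^*$ is non-optimal, so by the structural fact it has a predecessor $s'$ with $f(s')>f(s^*)$; maximality of $s^*$ among unselected solutions forces $s'\in Top$, say $s'=Top(j)$. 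Then $s^*$ was inserted into $\calS$ when $Top(j)$ was processed and, not lying in $Top$, was never deleted by line~8, so $s^*\in\calS$. Since line~8 also guarantees $\calS\cap Top=\emptyset$, we get $\max_{s\in\calS}f(s)=f(s^*)$, and line~6 therefore selects a valid $(l+1)$-st best, restoring the invariant. The base case $l=1$ holds because $\calS$ is initialized with the optimal approximation.

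I expect the main obstacle to be the completeness half of the invariant: guaranteeing that the next-best approximation is already present in $\calS$ at the moment it must be selected. Everything hinges on the claim that a solution's strictly-better predecessor is necessarily among the seeds already selected; this is where separability and the per-node strict ordering (Assumption~\ref{assump:nonequalpar}) do the real work, and it is also what lets the argument tolerate ties in the total score $f$ and repeated generation of the same neighbor (both absorbed by the set semantics of $\calS$ together with the removal in line~8). A minor loose end to verify is the boundary case where a node's ranked list is exhausted, which can occur only when $r$ exceeds the total number of feasible approximations.
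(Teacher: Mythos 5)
Your proposal is correct and follows essentially the same route as the paper: your ``structural fact'' (every non-optimal solution is obtained from a strictly better solution by replacing one parent set with its immediate predecessor in the per-node ranking, and is therefore a \textsc{GetNewSolns}-child of it) is precisely the paper's Lemma~\ref{lem:topR}, and your invariant-based induction on $l$ is the paper's concluding argument that every approximation selected by Algorithm~5 is used as a seed to generate the next candidates. If anything, your write-up is slightly more careful than the paper's, which allows $A'(j)$ to be \emph{any} better parent set in the lemma and leaves the bookkeeping of $\calS$ (insertion at line~7, non-deletion, disjointness from $Top$ via line~8) implicit.
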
  

The proof is in Appendix~\ref{apnd:prf:TopRGeneral}.  We also provide a discussion for how to index approximations for efficient implementation of lines~6--8 of Algorithm~5 in Appendix~\ref{apnd:disc:TopRGeneral}.

\subsection{Optimal Solutions to Find Connected Graphs}\label{sec:rbest:optcon}

Identifying the $r$-best {\em connected} approximations is more complicated than the general case in Section~\ref{sec:rbest:optgen}.  Algorithm~5 would not need to be modified, but Algorithm~6 would.  Recall from Algorithm~2 line~5 that each edge $\X_j\to \X_i$ in the complete graph is assigned an edge weight $\I(\allX_{\tildeA(i,j)} \to \X_i)$.  If the edge $\X_j \to \X_i$ is selected by the MWDST algorithm in line~6, then $\tildeA(i,j)$ is the parent set assigned to $\X_i$.  To find the $r$-best connected approximations, as in Algorithm~6,  ``seed'' approximations should be modified to generate new candidate approximations.  However, Algorithm~6 will not work properly.  Some candidate approximations might be identical to the seed.

To see this, let $\{j_1, j_2, \dots, j_K\}$ denote $\tildeA(i,j_1)$.  Suppose $\tildeA(i,j_1) = \tildeA(i,j_2)$ and $\X_{j_1} \to \X_i$ was an edge in the MWDST that induced the seed approximation.  Thus, $\tildeA(i,j_1)$ is the parent set selected for $\X_i$.  In generating new approximations, even if $\X_{j_1} \to \X_i$ is given a smaller weight, $\I(\allX_{\tildeA'(i,j_1)} \to \X_i)$, edge $\X_{j_2} \to \X_i$ might be selected by the MWDST algorithm instead, yielding the same parent set.

One approach to generate candidate approximations involves checking whether modifying a single edge weight in the complete graph, such as setting $\X_{j_1} \to \X_i$ to have weight $\I(\allX_{\tildeA'(i,j)} \to \X_i)$, does result in a candidate approximation different than the seed.  If not, then all subsets of edges inducing the same parent set $\tildeA(i,j_1)$ should be modified.  Thus if $\tildeA(i,j_1) = \tildeA(i,j_2) = \tildeA(i,j_3)$, then the weights of edges $\{\X_{j_1}\to \X_i\}$, $\{\X_{j_1}\to \X_i, \X_{j_2}\to \X_i$\} $\{\X_{j_1}\to \X_i, \X_{j_3}\to \X_i\}$, and $\{\X_{j_1}\to \X_i, \X_{j_2}\to \X_i, \X_{j_3}\to \X_i\}$ should be modified.  Any resulting candidate parent sets that differ from the seed should be retained.

\subsection{Near-Optimal Solutions for the Unconstrained Problem}

Algorithm~5 generates the $r$-best approximations, but calls Algorithms~6 which uses an exhaustive search.  A greedy search can be used instead to generate $r$ approximations.  Consider the first time that Algorithm~6 is called.  Let $\{ j_1, j_2, \dots, j_K \}$ denote the parent set, in order they were added, for node $i$.  When $i$'s parent set is changed, in line~5 set \[ \calB \gets \{ B: \{ j_1, \dots, j_{K-1} \} \subseteq B\subseteq \setmi{m}{i, j_K}, \ |B| = K  \}. \]  So only the parent added last in the greedy search is changed.   

Consider the first time that Algorithm~6 is called where $i$'s parents in the seed approximation is $\{ j_1, \dots, j_{K-1}, j_K' \}$ for some $j_K' \neq j_K$.  Then set 
\beqas  \calB \gets \{ B: \{ j_1, \dots, j_{K-1} \} \subseteq B\subseteq \setmi{m}{i, j_K, j_K'},  \ |B| = K  \}. 
\eeqas  This can be repeated $m-K-2$ more times until $|\calB| = \emptyset$.  When this occurs, the $ (K-1)$th parent needs to be changed.  Set 
\beqas  \calB \gets \{ B: \{ j_1, \dots, j_{K-2},j'_{K-1} \} \subseteq B\subseteq \setmi{m}{i,j_{K-1}},  \ |B| = K  \}. 
\eeqas where \[j'_{K-1} = \argmax_{j \in [m]\backslash \{i,j_{K-1} \}} \I( \X_{j} \to \X_i \| \allX_{j_1, \dots, j_{K-2}}),\] the next $(K-1)$th parent selected in a greedy order.  Continue in this manner until Algorithm~5 selects $r$ approximations.

Note also that we can combine the modifications discussed here with those in Section~\ref{sec:rbest:optcon} to identify the top $r$ connected approximations using a greedy search.

\section{Complexity of Proposed Algorithms} \label{sec:complexity}

This section explores the computational complexity of the algorithms and storage complexity of the approximations.  

First, calculating $\I(\X, \Z \to \Y)$ in general has exponential complexity.  Note that \[ \I(\X, \Z \to \Y) = \sum_{t=1}^n \I(Y_t ; X^{t-1}, Z^{t-1} | Y^{t-1}). \]  The last term in particular, $\I(Y_n ; X^{n-1}, Z^{n-1} | Y^{n-1})$, involves a sum over all realizations of $3n - 2$ random variables.  Thus, with $\calX$ denoting the alphabet, the last term has complexity $\calO( |\calX|^{3n})$.  We will assume Markovicity of a fixed order $l$, so\[ \I(\X, \Z \to \Y) = \sum_{t=1}^n \I(Y_t ; X_{t-l}^{t-1}, Z_{t-l}^{t-1} | Y_{t-l}^{t-1}).\]  The complexity of computing $\I(\X, \Z \to \Y)$ then becomes $\calO( n |\calX|^{3l+1})= \calO( n )$.  More generally, computing $\I(\allX_{B} \to \Y \| \allX_{B'})$, where $|B| + |B'| = K$, has $\calO( n |\calX|^{(K+1)l+1})$ complexity assuming Markovicity.

\begin{assumption} \label{assump:complexity} We assume Markovicity of order $l$.
\end{assumption}

\subsection{Algorithm~1. {\sc OptimalGeneral}} For each process $\X_i$, for each of the $m-1 \choose K $ possible subsets $B$ with $|B| = K$, $\I (\allX_{B} \to \X_i)$ is computed.  Each computation has complexity $\calO( n |\calX|^{(K+1)l+1})$.  Thus, the total complexity for Algorithm~1 under Assumption~\ref{assump:complexity} is $\calO( m {m-1 \choose K } n |\calX|^{(K+1)l+1})$, or $\calO( m^{K+1} n)$ for fixed $K$.

\subsection{Algorithm~2. {\sc OptimalConnected}} Algorithm~2 computes the same directed information terms as Algorithm~1.   It also computes a MWDST, which takes $\mathcal{O}(m^2)$ time \citep{edmonds1967optimum}.  
Under Assumption~\ref{assump:complexity}, the total complexity is $\calO( m {m-1 \choose K } n |\calX|^{(K+1)l+1}+m^2)$.  If $K$ is fixed, the complexity becomes $\calO( m^{K+1} n)$.

\subsection{Algorithm~3. {\sc Near-OptimalGeneral}}
For each process $\X_i$ there are $(m-1)$ directed information terms computed involving two processes, of the form $\I(\X_j \to \X_i)$.  Next there are $(m-2)$ computed involving three processes, and so on.  The complexity is thus
\beqas 
 \calO ( m \sum_{i = 1}^{K} (m-1)n |\calX|^{(i+1)l+1}  )  = \calO (  m^2 K n |\calX|^{(K+1)l+1}).\eeqas 
For constant $K$, this becomes $\calO \left( m^2 n \right)$.

\subsection{Algorithm~4. {\sc Near-OptimalConnected}} For each ordered pair of processes $(\X_i,\X_j)$, first there are $(m-2)$ terms computed involving three processes, such as $\I(\X_k \to \X_i \| \X_j)$.  Next there are $(m-3)$ computed involving four processes, and so on.  Then a MWDST algorithm is called.  The complexity is thus
\beqas \calO ( m^2+ m (m-1)  \sum_{i = 1}^{K-1} (m-1-i)n |\calX|^{(i+2)l+1}  )   = \calO ( K m^3 n |\calX|^{(K+1)l+1}).
 \eeqas 
For constant $K$, this becomes $\calO \left( m^3 n \right)$.

\subsection{Algorithms~5. {\sc TopRGeneral}}

There are three main bottlenecks in generating the top-$r$ solutions.  The first is computing the directed information terms, the same as used in Algorithms~1~and~2, $\calO(m^{K+1} n) $ for fixed $K$.  The second is sorting those values for each process $\X_i$.  Merge sort, for example, can sort an array of $h$ elements in $\calO(h \log h)$ time \citep{katajainen1996practical}.  For each of the $m$ processes, there are ${m-1 \choose K}$ values, so sorting takes $\calO(m {m-1 \choose K} \log {m-1 \choose K}) = \calO(m^{K+1} \log m)$ for fixed $K$.  The third bottleneck is the search for candidate solutions.  Algorithm~6 generates $m$ new solutions each time it is called, replacing one parent set for each approximation.  The branching overall generates $\calO(rm)$ candidates.  For small $r$, such as $r = \calO(\log {m-1 \choose K}^m)= \calO(m\log m^K) = \calO(m\log m)$ for fixed $K$, the computing and sorting the directed information values dominates.  Recall that ${m-1 \choose K}^m$ is the total number of bounded in-degree approximations.   For large $r$, such as $r = {m-1 \choose K}^m/c$ for some $c>1$, then $r = \calO(m^{Km})$ and so the branching dominates.  The total complexity with fixed $K$ is $\calO(m^{K+1} (n +  \log m) + rm)$.

\subsection{Storage Complexity}  

An important benefit of using approximations  is that they require substantially less storage than the full joint distribution.  The full joint distribution has $mn$ random variables, and so requires $\mathcal{O}( |\calX|^{mn} )$ storage.  Under Assumption~\ref{assump:complexity}, the storage complexity of the full joint distribution is $\mathcal{O}(n |\calX|^{m(l+1)} )$ and of approximations of the form  \eqref{eq:best_parent_appx} is $\mathcal{O}(mn |\calX|^{(K+1)l+1} ) = \mathcal{O}(mn )$ for fixed $K$.

\section{Simulations} \label{sec:sims}

We investigated the performances of Algorithms~1-5 using simulated networks.  Comparisons of greedy and optimal search, unconstrained and connected approximations, and the top-$r$ approximations were studied.
\subsection{Greedy vs. Optimal Search}
We first compare the near-optimal and optimal approximations identified by  Algorithm~3 and Algorithm~1 respectively.  

\subsubsection{Setup} \label{sec:sim:grdopt:setup}
Markov order-1 autoregressive (AR) networks, of the form $\allX_t = C \allX_{t-1} + N_t$, were simulated for a given $m$ by $m$ coefficient matrix $C$ and i.i.d.\ noise vector $N_t$.   Two network sizes $m \in \{ 6, 15\}$ were tested.  For each $m$, there were $250$ trials.  In each trial, the coefficient matrix $C$ was randomly generated.  Edges (non-zero off-diagonal entries in $C$) were selected i.i.d.\ with probability $1/2$.  Non-zero AR coefficients were drawn i.i.d.\ from a standard normal distribution.  $C$ was then scaled to be stationary.  The noise process $\{N_t\}_{t=1}^n$ had i.i.d.\ entries drawn from a normal distribution with mean zero and variance $1/4$.  Data was generated for $n = 1000$ time-steps.

For each network, unconstrained bounded in-degree approximations $\{A_{\mathrm{OPT}}(i)\}_{i=1}^m$ and $\{A_{\mathrm{GRD}}(i)\}_{i=1}^m$ were computed using Algorithms~1~and~3 respectively.  For $m=6$ and $m=15$, in-degrees $K=2$ and $K=4$ were used respectively.  Performance was measured by the ratio \beqa
\frac{\sum_{i=1}^m \I(\allX_{A_{\mathrm{GRD}}(i)} \to \X_i) }{\sum_{i=1}^m  \I(\allX_{A_{\mathrm{OPT}}(i)} \to \X_i)}. \label{eq:ratio_grd_opt}
\eeqa  The value of each sum corresponds to how good that approximation is.  For  \eqref{eq:ratio_grd_opt}, the directed information values were calculated exactly using approximated parent sets.

Both algorithms computed directed information estimates using the simulated data.  The estimate for a directed information of the form $\I(\X \to \Y \| \Z)$ was computed as follows.  Least square estimates for the coefficients in two AR models, 
\begin{align}
Y_t &= b_1 Y_{t-1} + b_2 Z_{t-1} + b_3 X_{t-1} + N_t \\
Y_t &= b_1' Y_{t-1} + b_2' Z_{t-1}  + N_t', 
\end{align} were computed.  Let $\sigma$ and $\sigma'$ denote $\std(N_t)$ and $\std(N_t')$ respectively. 
The entropy $\H(Y_t | Y_{t-1}, Z_{t-1}, X_{t-1})$ is $1/2 \log_2( 2 \pi e \sigma^2)$ \citep[Theorem~8.4.1]{cover2006elements}, so
\begin{align}
\widehat{\I}(\X \to \Y \| \Z) &= \frac{1}{n} \sum_{t=1}^n \widehat{\H}(Y_t | Y_{t-1},  Z_{t-1})  -\widehat{\H}(Y_t | Y_{t-1}, Z_{t-1}, X_{t-1})    \nonumber\\
&=\frac{1}{2} \log( 2 \pi e (\sigma')^2) - \frac{1}{2} \log( 2 \pi e \sigma^2) \nonumber\\
&=\log \sigma'/ \sigma. \nonumber
\end{align}

\subsubsection{Results} \label{sec:sim:grdopt:results} The approximations found by the greedy and optimal search were largely identical.  Figure~\ref{fig:greedy_optim} shows histograms of percentages of the ratio \eqref{eq:ratio_grd_opt}, normalized by the number of trials.  The rightmost column in each histogram corresponds to  \eqref{eq:ratio_grd_opt} being one, when the greedy search returned the optimal approximation.  

For $m = 6$, the greedy search found the optimal solution in  96.4\% of the trials.  The average ratio was $99.9\%\pm0.6\%$.  The minimum ratio was 92.0\%.

For $m = 15$, the greedy search found the optimal solution in  57.6\% of the trials.  The average ratio was $99.6\%\pm0.9\%.$  The minimum ratio was 93.3\%.

By Theorem~\ref{thm:grd_bnd}, the best case lower bound of \eqref{eq:ratio_grd_opt} expected is $63.2\%$, which corresponds to $\alpha=1$.  On average, the greedy algorithm performed much better than the lower bound, often the same as or close to the optimal.

\begin{figure}[t]
\centering
  \subfigure[$m = 6$.]{\label{fig:greedy_optim:m_6_K_2} \includegraphics[width=.40\columnwidth]{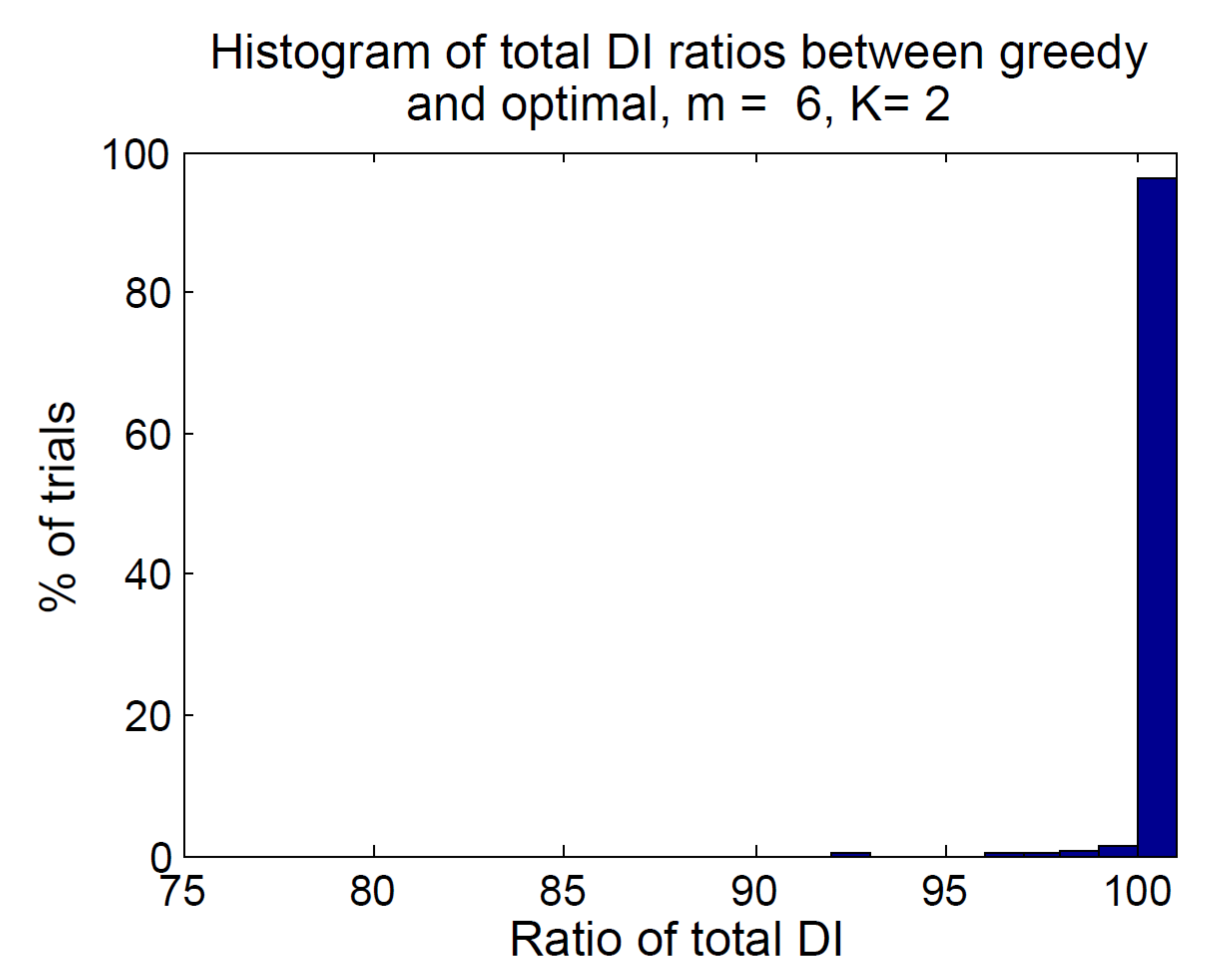}} \hspace{0.3cm}
  \subfigure[$m=15$.]{\label{fig:greedy_optim:m_15_K_4} \includegraphics[width=.40\columnwidth]{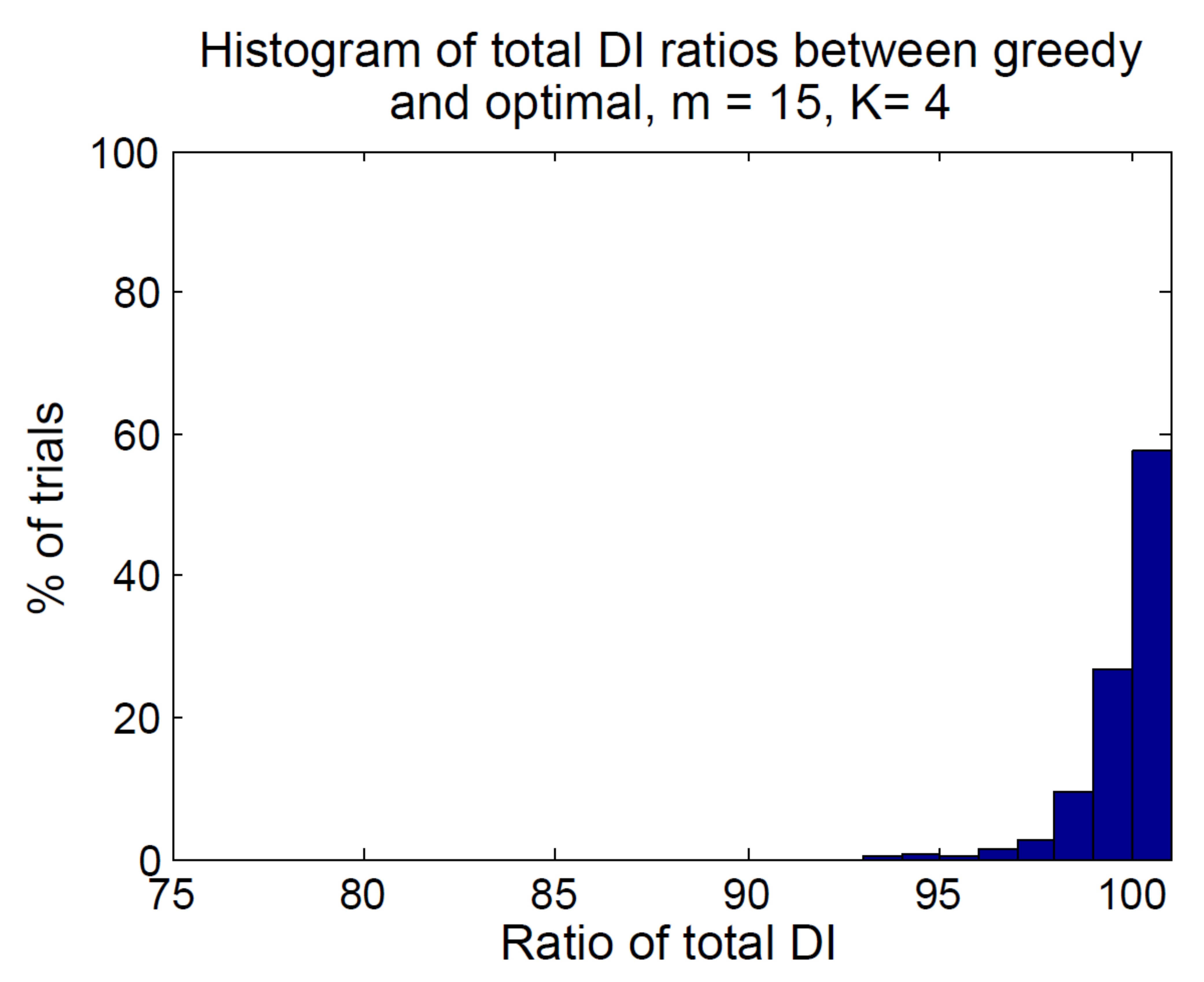}}  
  \caption{\small Histograms of the relative performance of Algorithms~1 and 3  using the ratio \eqref{eq:ratio_grd_opt}.  The right columns correspond to when both algorithms identified the same approximation.  
  }
  \label{fig:greedy_optim}
\end{figure}

\subsection{Comparison of Top-$r$ Approximations for Algorithms~1,2,3,4}

We next investigated how well the approximations from Algorithms~1, 2, 3, and 4 compared to the true parent sets.  We used modified versions of the algorithms to produce the top-$r$ approximations for each class, as discussed in Section~\ref{sec:best_r} (such as Algorithm~5).  For the MWDST algorithm we used \cite{choudhary2009edmonds}.

\subsubsection{Setup} 
The setup was similar to that described in Section~\ref{sec:sim:grdopt:setup}.  For the approximations, there were multiple in-degree $K$ values.  For $m=6$, in-degrees $K \in \{1, 2, 4\}$ were used, and for $m=15$, $K \in \{2, 4, 8\}$ were used.  Performance was measured by the ratio \beqa
\frac{\sum_{i=1}^m \I(\allX_{A(i)} \to \X_i) }{\sum_{i=1}^m  \I(\allX_{A_{\mathrm{True}}(i)} \to \X_i)}, \label{eq:sim:topR:ratio_appx_true}
\eeqa  where $A(i)$ denotes a parent set induced by an approximation, and $A_{\mathrm{True}}(i)$ is the true parent set. The ratio \eqref{eq:sim:topR:ratio_appx_true} was calculated exactly.  For each type of approximation, the top $r=10$ approximations were found and performance was averaged across trials.

\subsubsection{Results} 
Overall, the different approximations (unconstrained or connected, using optimal or greedy search) performed comparably for each of the $(m,K)$ pairs.  The results are shown in Figure~\ref{fig:avgDIrat_mKr}.  The most noticeable variation in performance was due to different in-degree $K$ values.  For both $m=6$ and $m=15$, increasing $K$, especially when $K$ was small, substantially improved performance.  

The connected approximations performed only slightly worse than the unconstrained.  For larger $m$ and $K$ the difference diminished.  For each $(m,K)$ pair, on average the approximations returned by the greedy search performed almost as well as that of the optimal.  This is consistent with the results in Section~\ref{sec:sim:grdopt:results} (see Figure~\ref{fig:greedy_optim}).  

Performance did not decay appreciably among the top approximations.  For $m=6$ and $K=1$, the performance difference between the first and tenth approximation is distinguishable, but for others it is not.

\begin{figure*}[t]
\centering
  \subfigure[$m = 6$, $K=1$.]{\label{fig:avgDIrat_mKr:m_6_K_1} \includegraphics[width=.30\columnwidth]{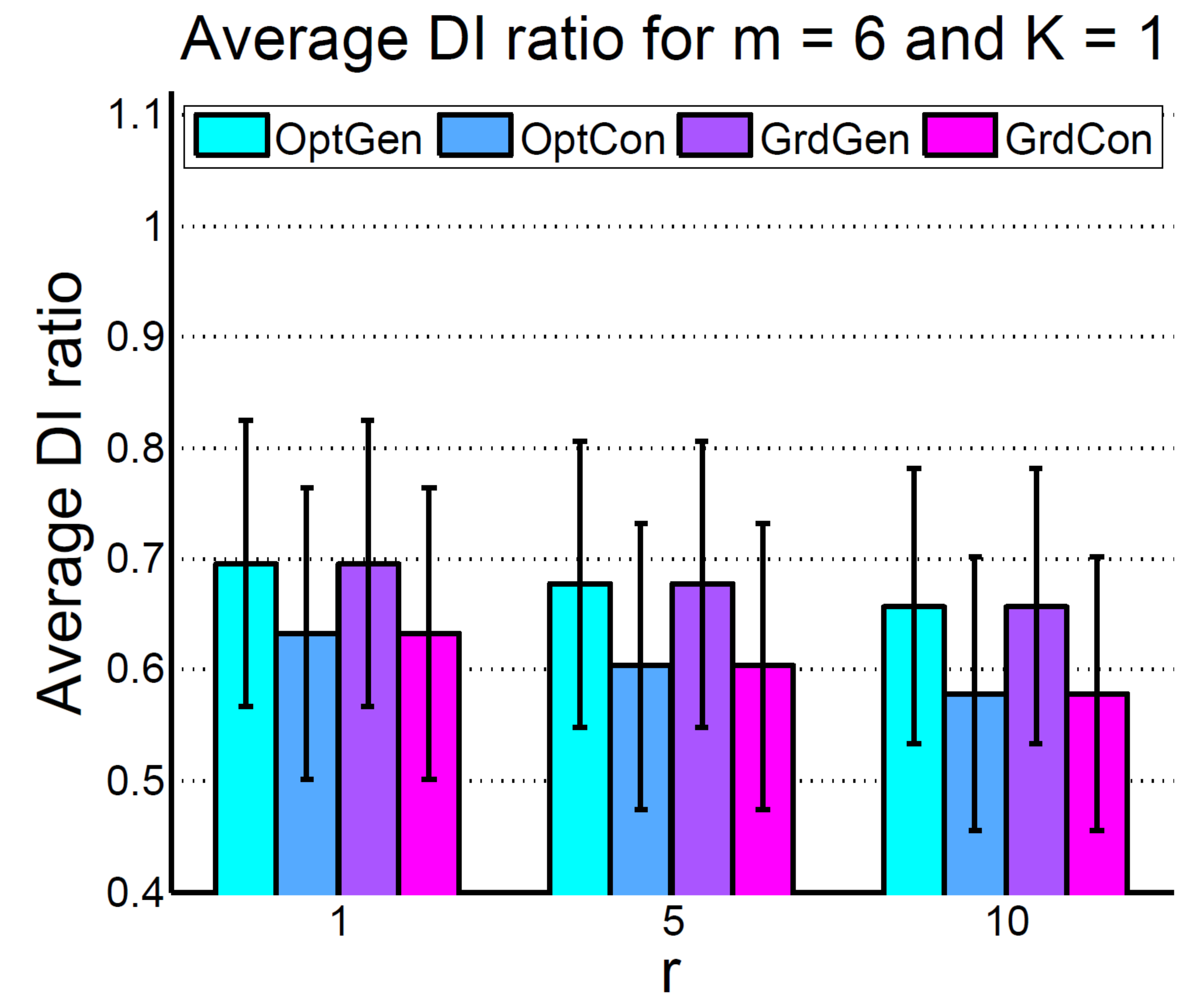}} 
  \hspace{0.1cm}
  \subfigure[$m=6$, $K=2$.]{\label{fig:avgDIrat_mKr:m_6_K_2} \includegraphics[width=.30\columnwidth]{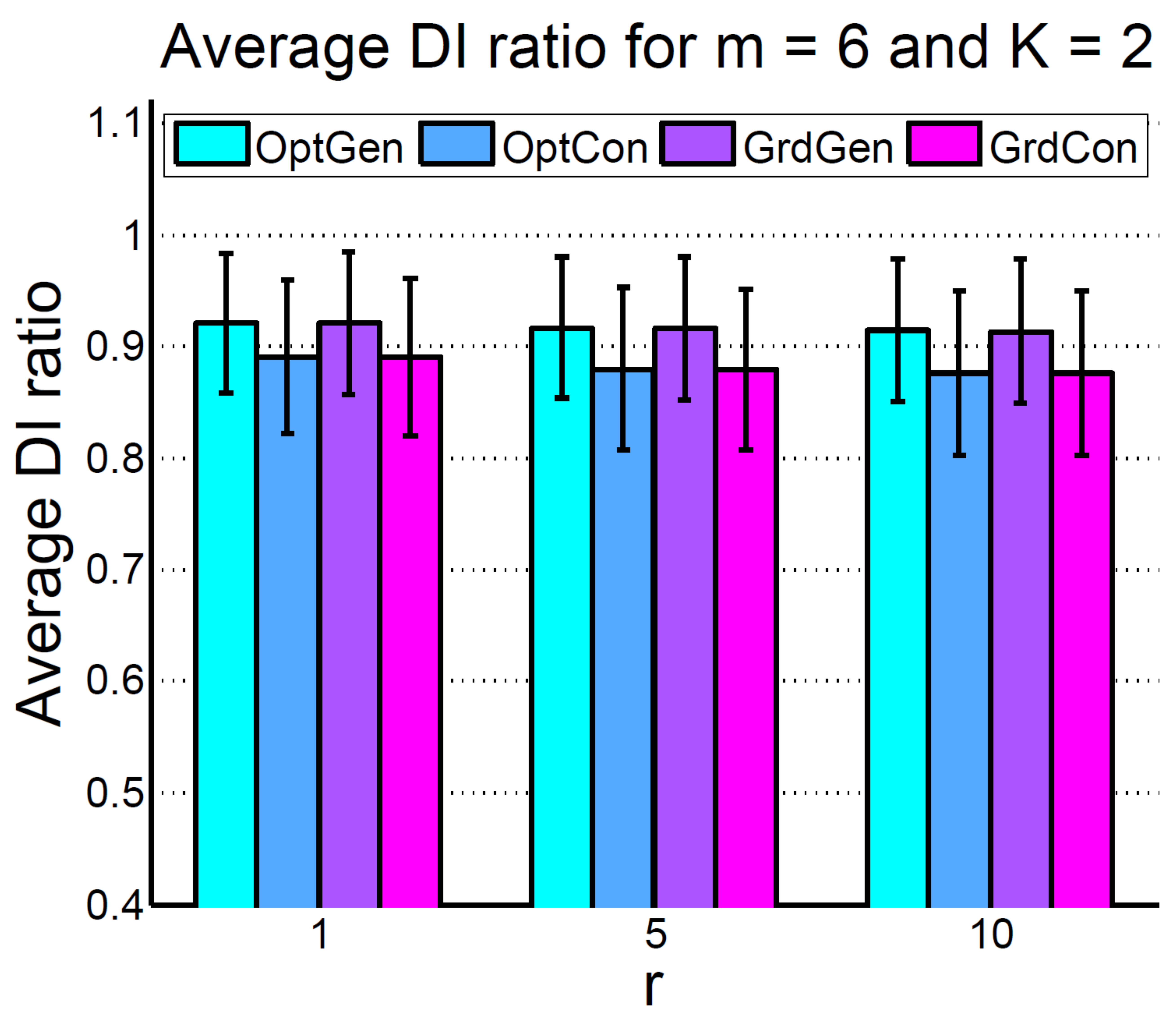}}  
    \hspace{0.1cm}
  \subfigure[$m=6$, $K=4$.]{\label{fig:avgDIrat_mKr:m_6_K_4} \includegraphics[width=.30\columnwidth]{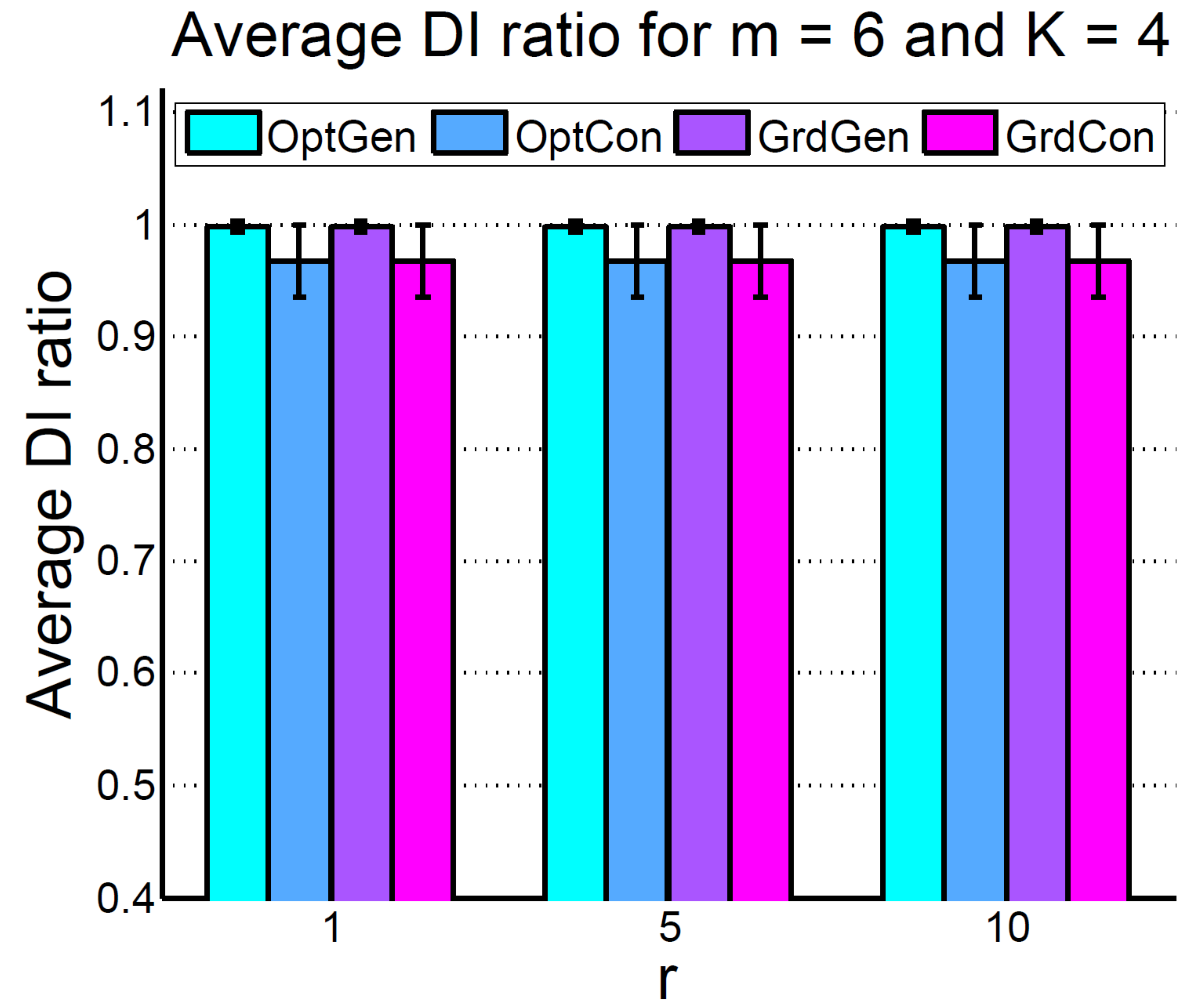}} 
    \\
  \subfigure[$m = 15$, $K=2$.]{\label{fig:avgDIrat_mKr:m_15_K_2} \includegraphics[width=.30\columnwidth]{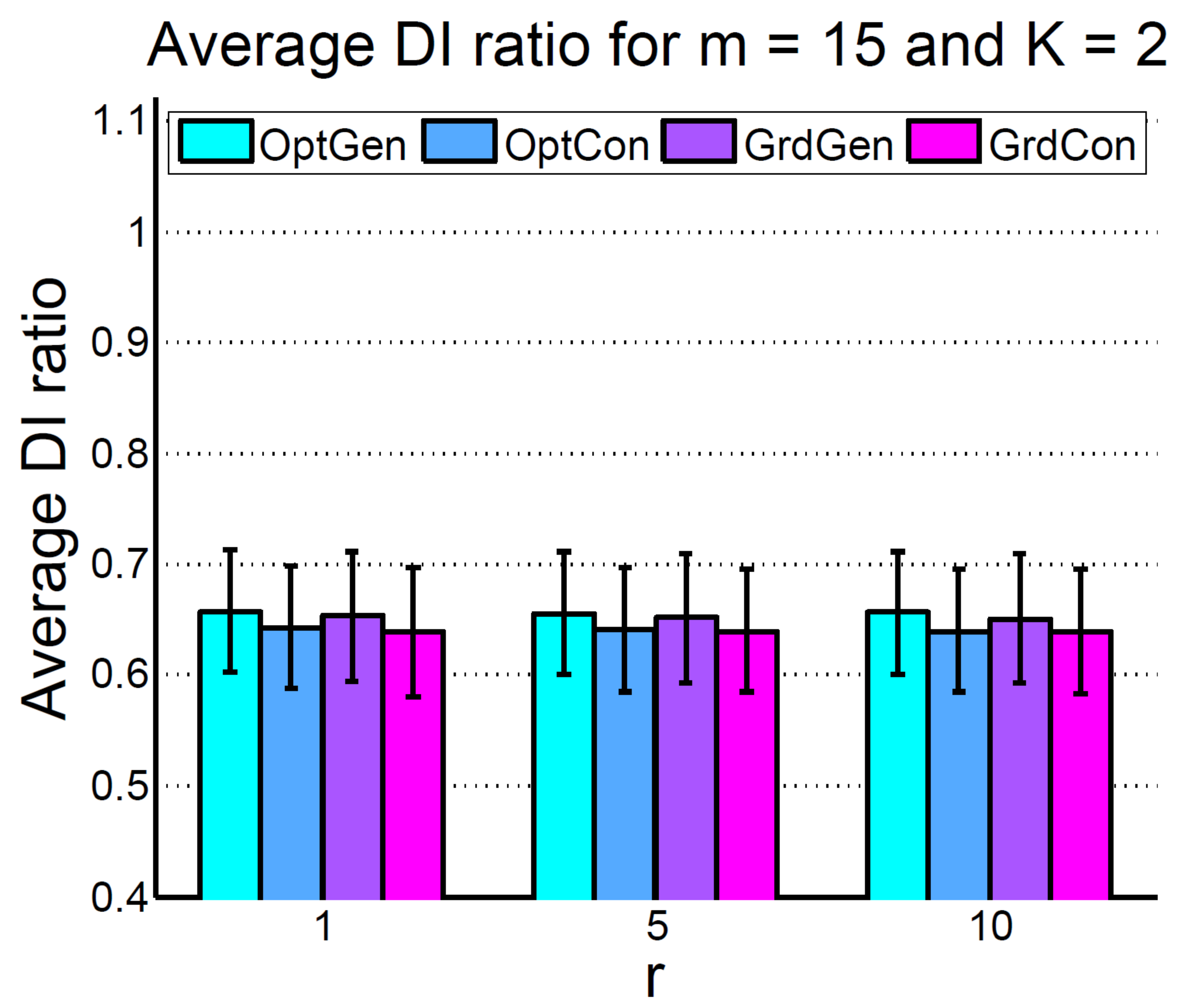}} 
  \hspace{0.1cm}
  \subfigure[$m=15$, $K=4$.]{\label{fig:avgDIrat_mKr:m_15_K_4} \includegraphics[width=.30\columnwidth]{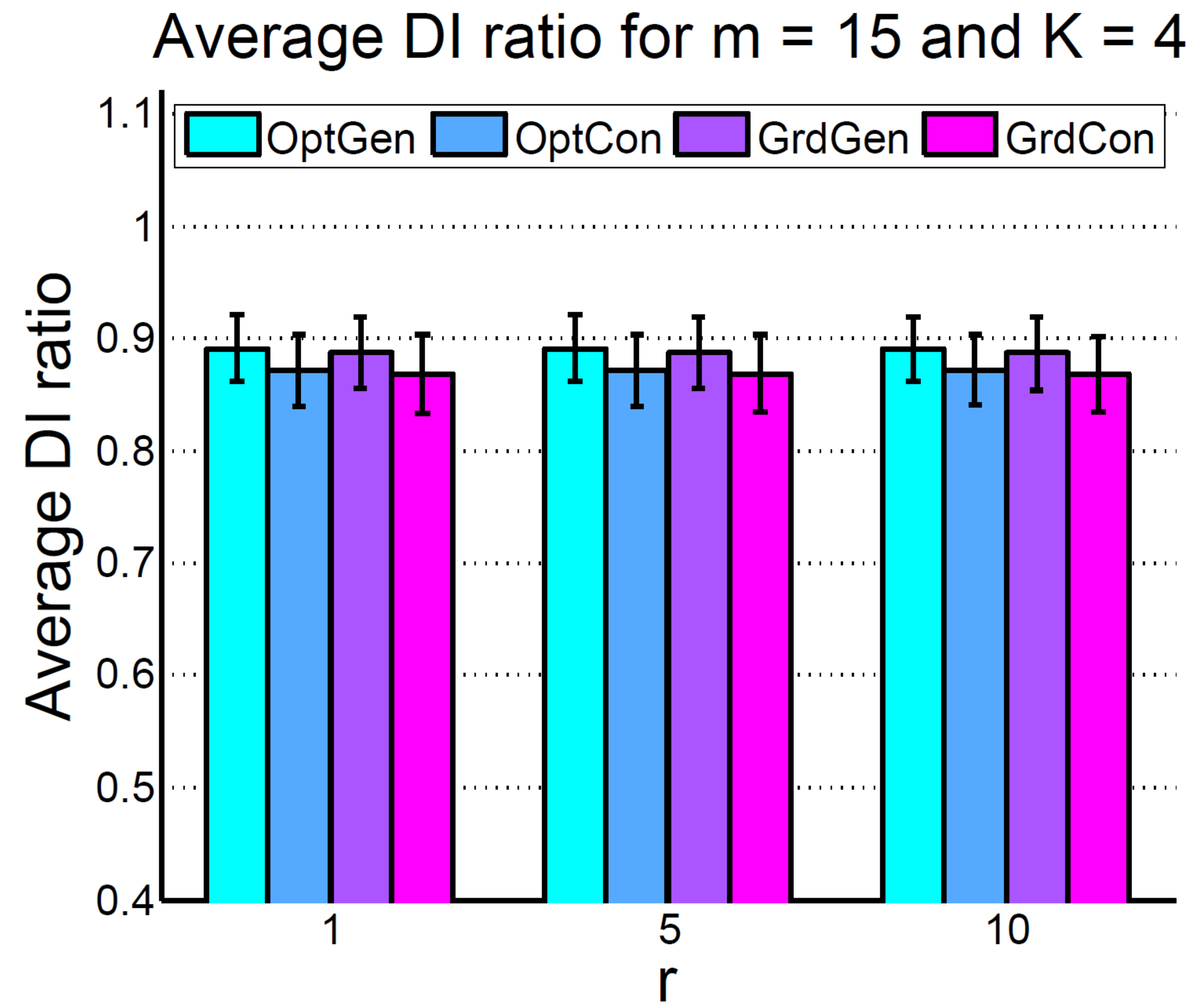}}  
    \hspace{0.1cm}
  \subfigure[$m=15$, $K=8$.]{\label{fig:avgDIrat_mKr:m_15_K_8} \includegraphics[width=.30\columnwidth]{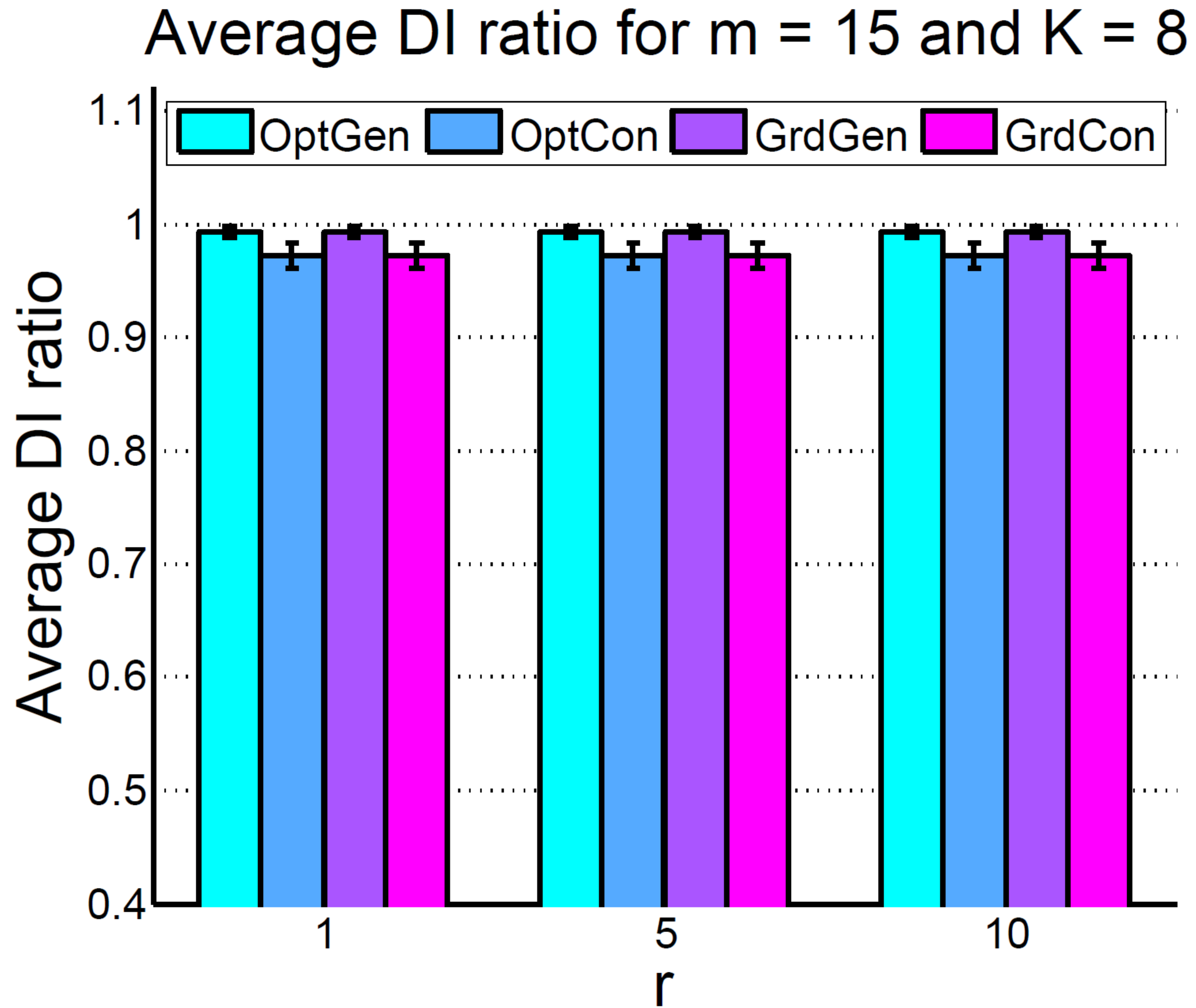}} 
  \caption{\small Plots of the average ratio \eqref{eq:sim:topR:ratio_appx_true} of approximations to the true parent sets.  Standard deviation error bars are shown.  For each type of search, optimal (``Opt'') and greedy (``Grd''), and each type of approximation, unconstrained (``Gen'') and connected (``Con''), the top-$r$ approximations are shown, with $r \in \{1, 5, 10\}$.  }
  \label{fig:avgDIrat_mKr}
\end{figure*}

Figure~\ref{fig:optgencon_r1234} shows diagrams of the top $r=4$ unconstrained and connected approximations for a single trial with $m=6$ and $K=2$.  Figures~\ref{fig:optgen_r1} and \ref{fig:optcon_r1} are the optimal unconstrained and connected approximations respectively.  The figures to the right show differences between the optimal and $r$th best approximations for $r\in \{2,3,4\}$.  Dashed gray edges are those removed and solid black ones are edges included.  For example, the third best unconstrained approximation, Figure~\ref{fig:optgen_r3}, has all of the same edges as the optimal, Figure~\ref{fig:optgen_r1}, except it has $\X_2 \to \X_4$ instead of $\X_1 \to \X_4$.

For both the optimal unconstrained and connected approximations, the several next best approximations had only minor changes.  Many edges were preserved.  For this particular trial, among the top four unconstrained approximations the only differences involved the parent of $\X_4$.  For the connected approximations, the second, third, and fourth best approximations mostly varied for the parent of $\X_5$.  The parents of $\X_1$, $\X_2$, and $\X_3$ were identical for all of the approximations shown.

\begin{figure*}[t]
\centering
  \subfigure[General, $r\!=\!1$.]{\label{fig:optgen_r1} \includegraphics[width=.20\columnwidth]{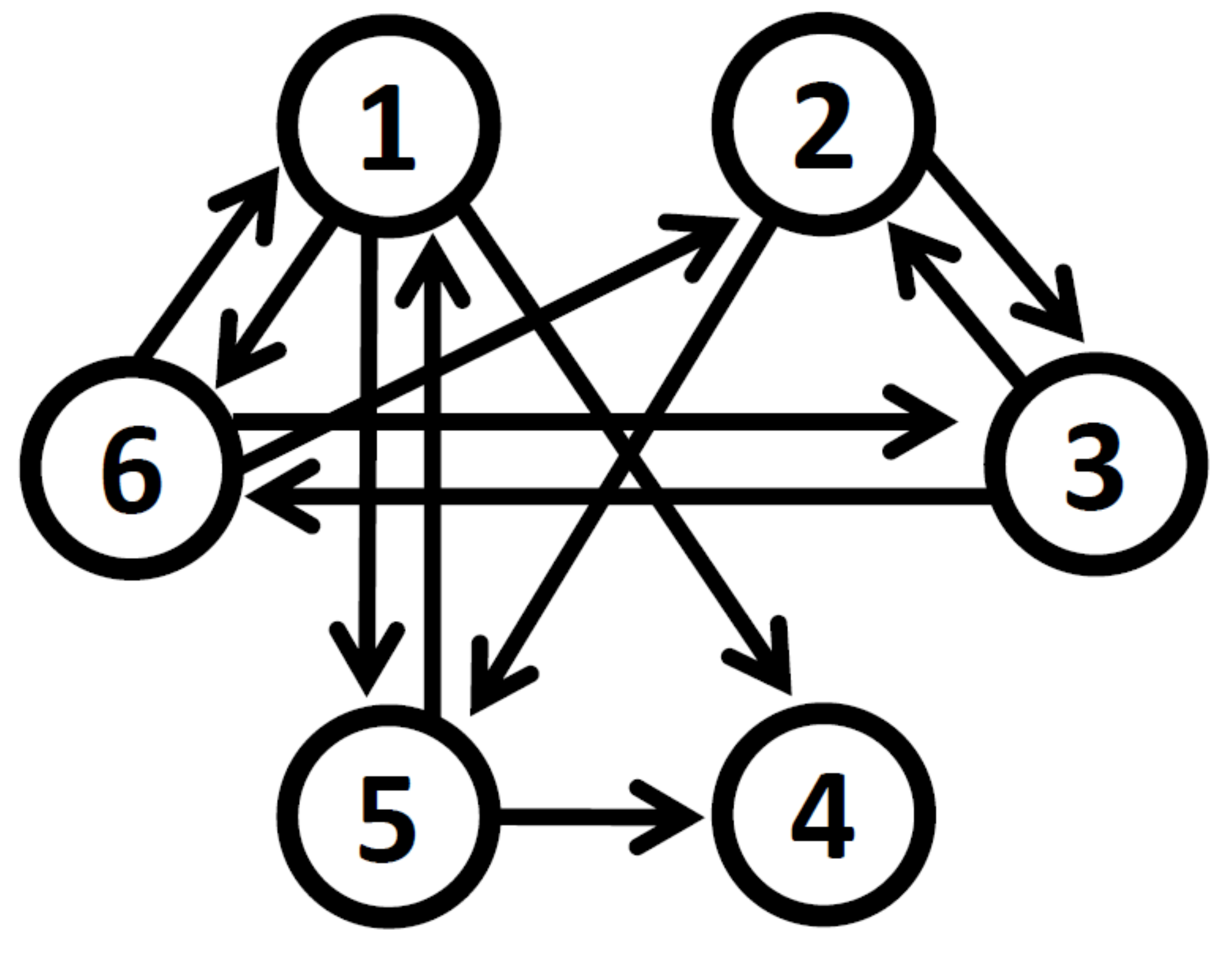}} 
  \hspace{0.1cm}
  \subfigure[General, $r\!=\!2$.]{\label{fig:optgen_r2} \includegraphics[width=.20\columnwidth]{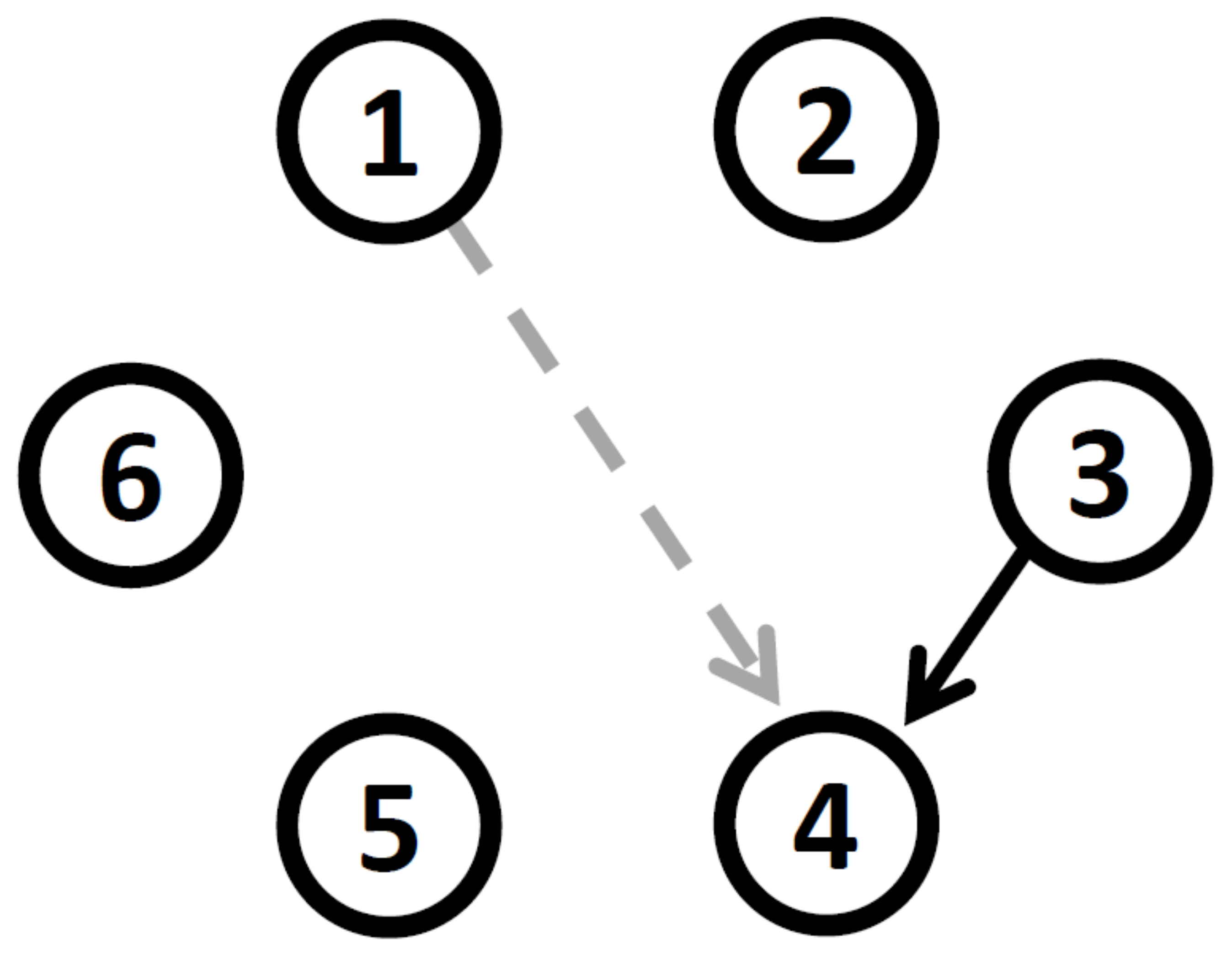}}  
    \hspace{0.1cm}
  \subfigure[General, $r\!=\!3$.]{\label{fig:optgen_r3} \includegraphics[width=.20\columnwidth]{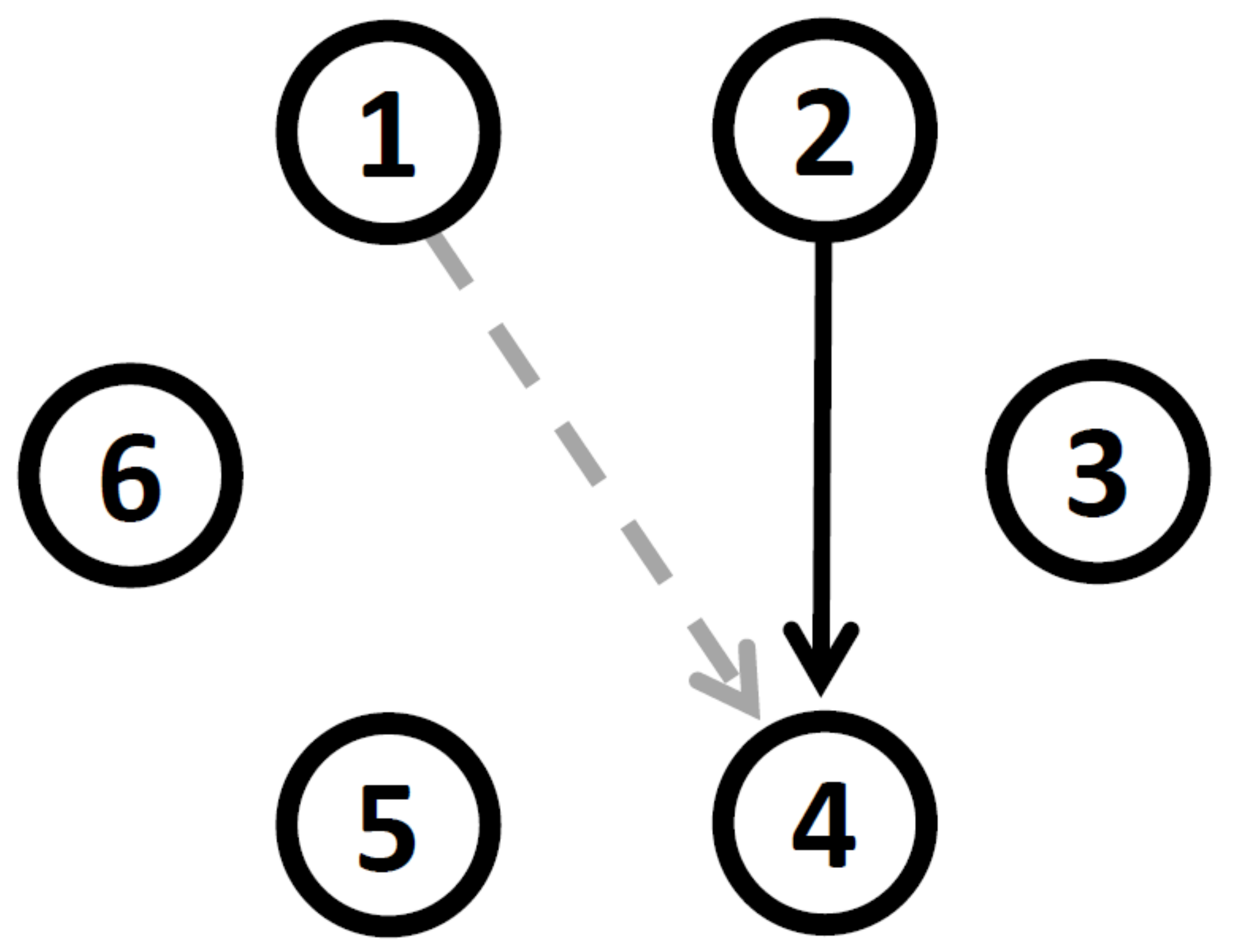}} 
      \hspace{0.1cm}
  \subfigure[General, $r\!=\!4$.]{\label{fig:optgen_r4} \includegraphics[width=.22\columnwidth]{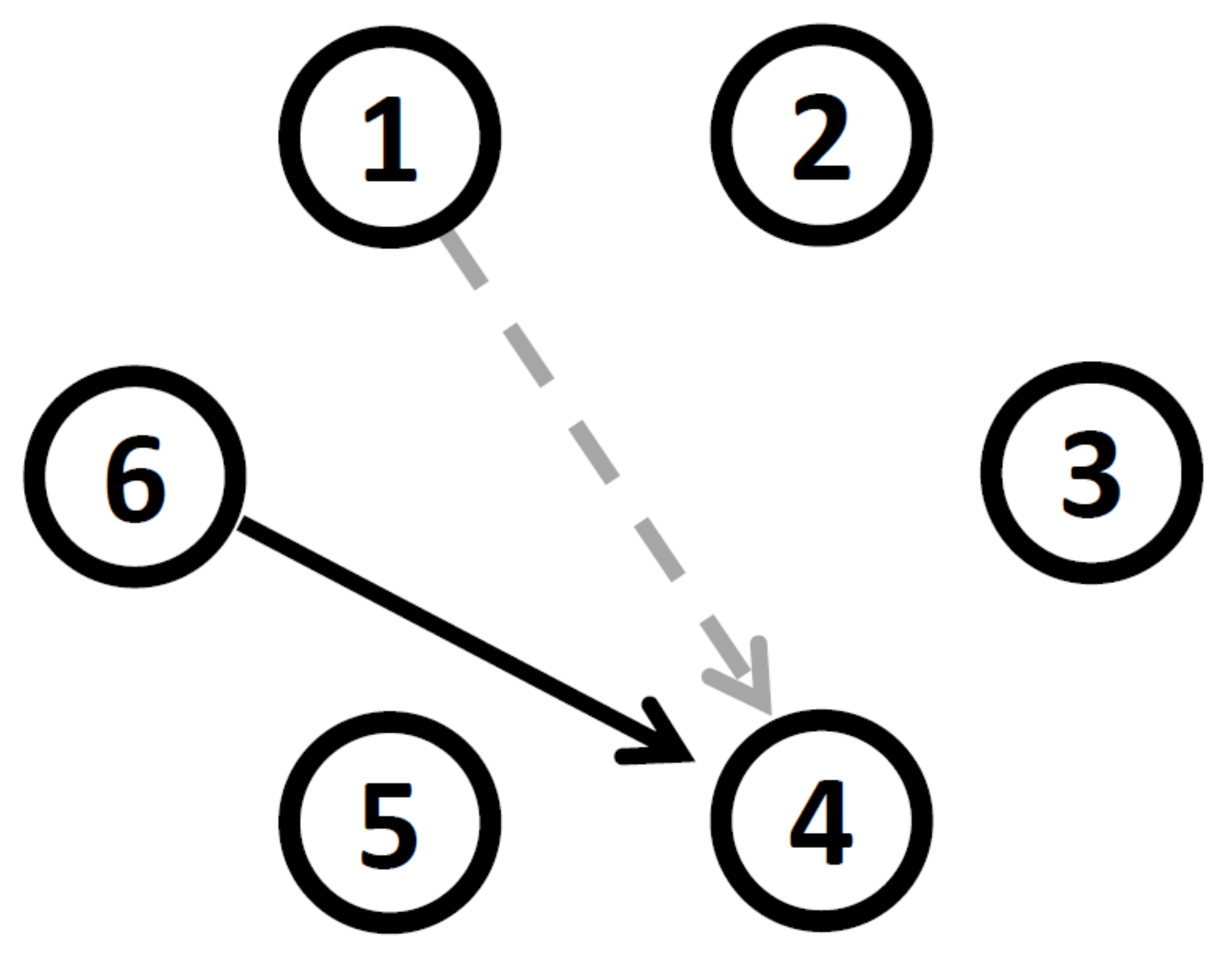}} 
    \\
  \subfigure[Connected, $r\!=\!1$.]{\label{fig:optcon_r1} \includegraphics[width=.22\columnwidth]{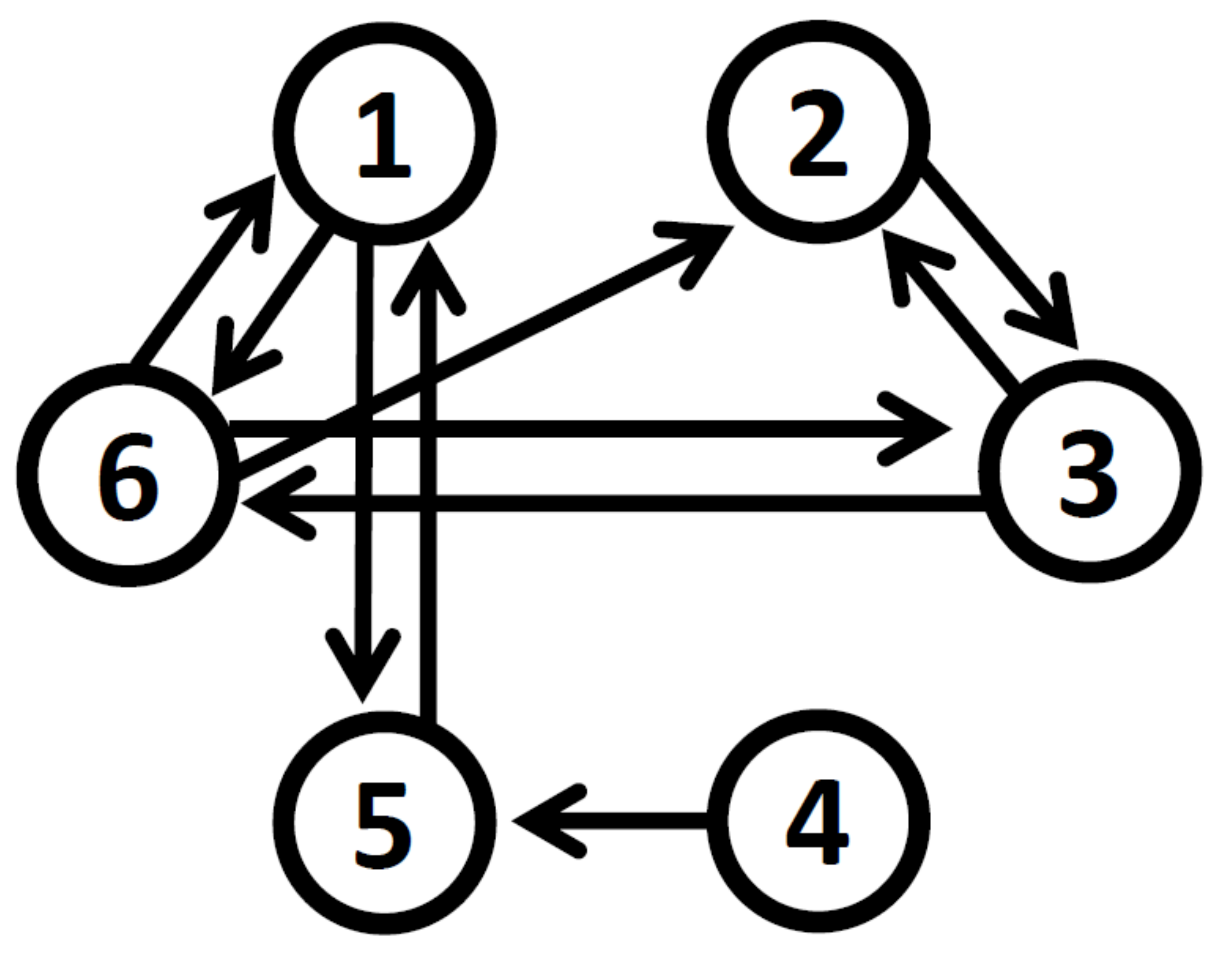}} 
  \hspace{0.1cm}
  \subfigure[Connected, $r\!=\!2$.]{\label{fig:optcon_r2} \includegraphics[width=.22\columnwidth]{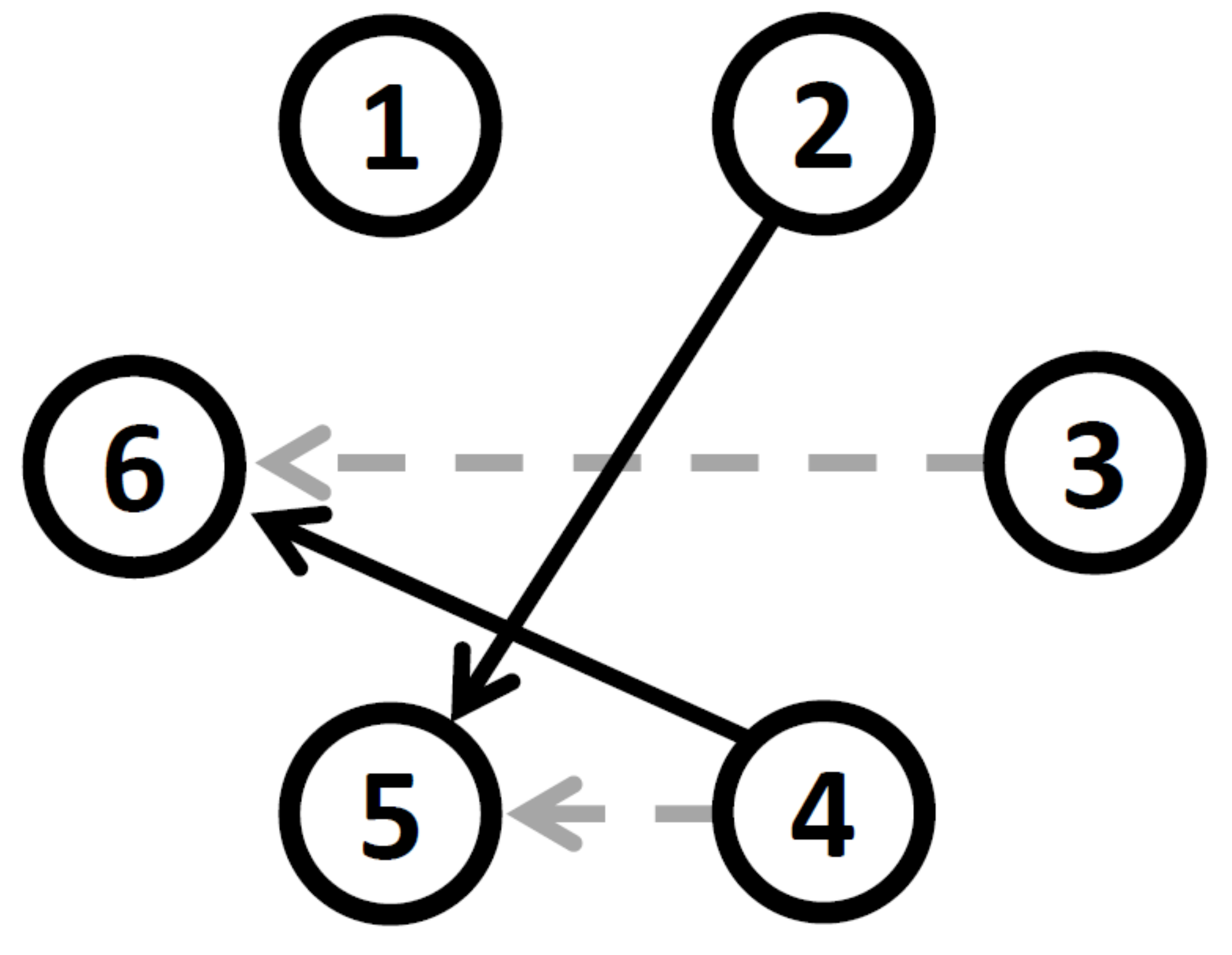}}  
    \hspace{0.1cm}
  \subfigure[Connected, $r\!=\!3$.]{\label{fig:optcon_r3} \includegraphics[width=.22\columnwidth]{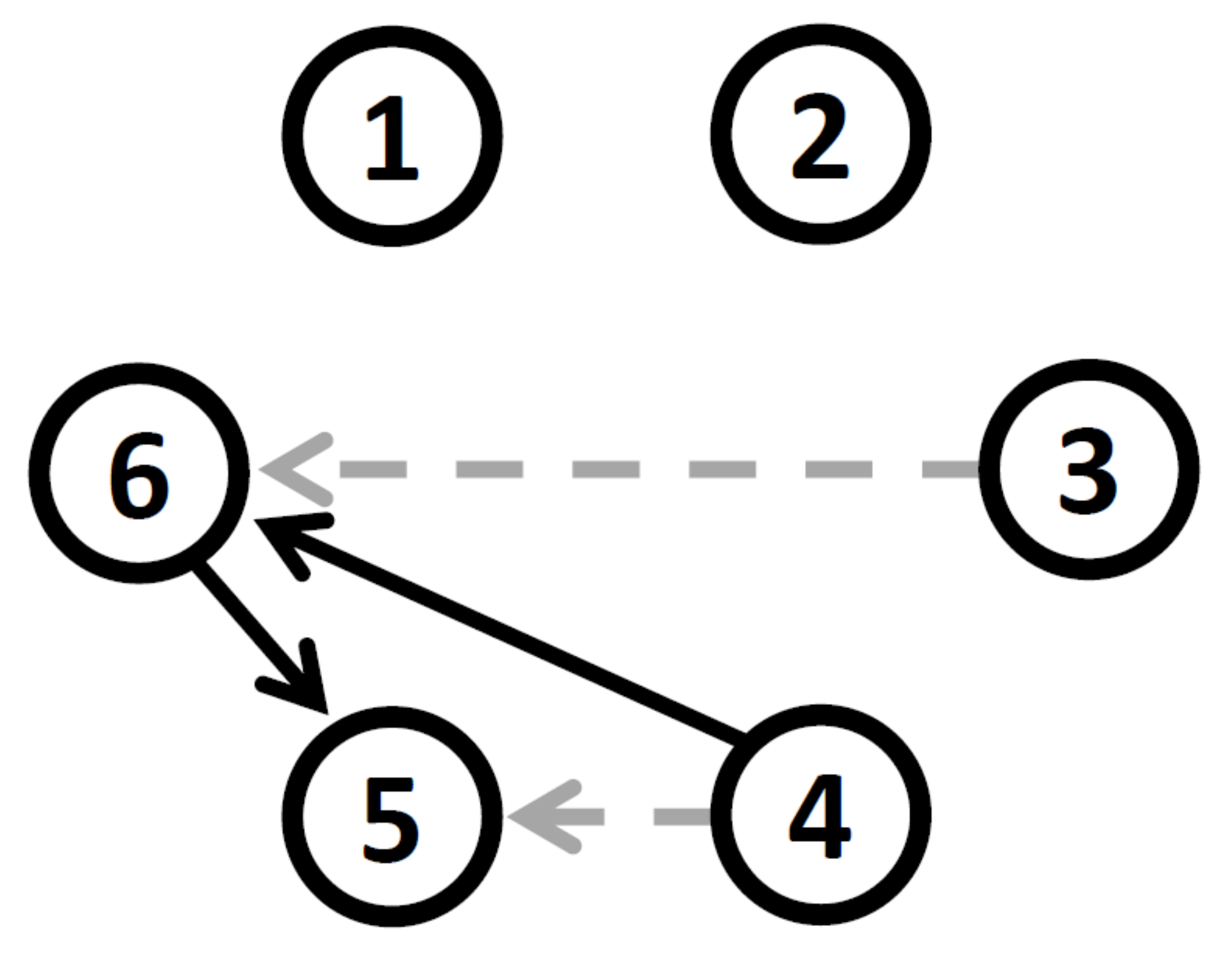}} 
      \hspace{0.1cm}
  \subfigure[Connected, $r\!=\!4$.]{\label{fig:optcon_r4} \includegraphics[width=.20\columnwidth]{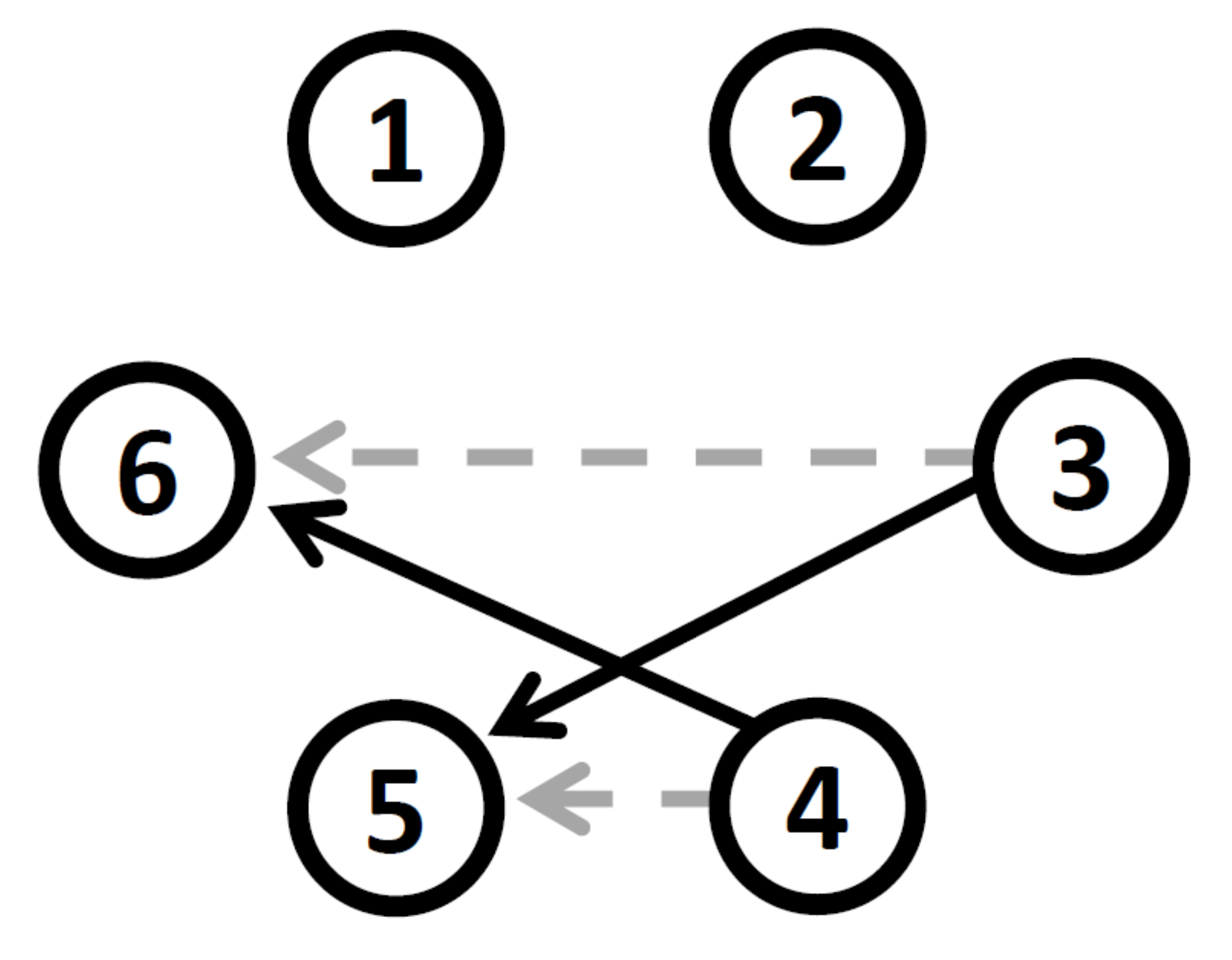}}  \caption{\small Diagrams of the top $r=4$  unconstrained and connected approximations using optimal search for a single trial with $m=6$ and $K=2$.  Figures~\ref{fig:optgen_r1} and \ref{fig:optcon_r1} are the optimal unconstrained and connected approximations respectively.  The figures to the right show differences between the optimal and $r$th best approximations.  Dashed edges were removed and solid edges were included.}
  \label{fig:optgencon_r1234}
\end{figure*}

\subsection{Performance Decay for Large $r$}

We also investigated the decay in performance quality as $r\to {m-1 \choose K}^m$.  

\subsubsection{Setup}

Using a simulation setup similar to Section~\ref{sec:sim:grdopt:setup}, 150 trials with $m=6$ processes and in-degree $K=2$ were run.  Unconstrained bounded in-degree approximations were obtained using Algorithm~5 and a similarly modified Algorithm~3 to obtain optimal and greedy search orderings respectively.  All approximations were computed, with $r = {m-1 \choose K}^m = 10^6$. The ratio \eqref{eq:sim:topR:ratio_appx_true} was used to measure performance.

\subsubsection{Results}

Figure~\ref{fig:3trialallr} depicts the ratio \eqref{eq:sim:topR:ratio_appx_true} for all approximations, ordered using Algorithm~5, for three trials.  Due to estimation error, the actual ratio values did not decay monotonically.  Figure~\ref{fig:3trialallr_est} plots the estimated values.

\begin{figure}[t]
\centering
  \subfigure[Actual ratio values.]{\label{fig:3trialallr} \includegraphics[width=.45\columnwidth]{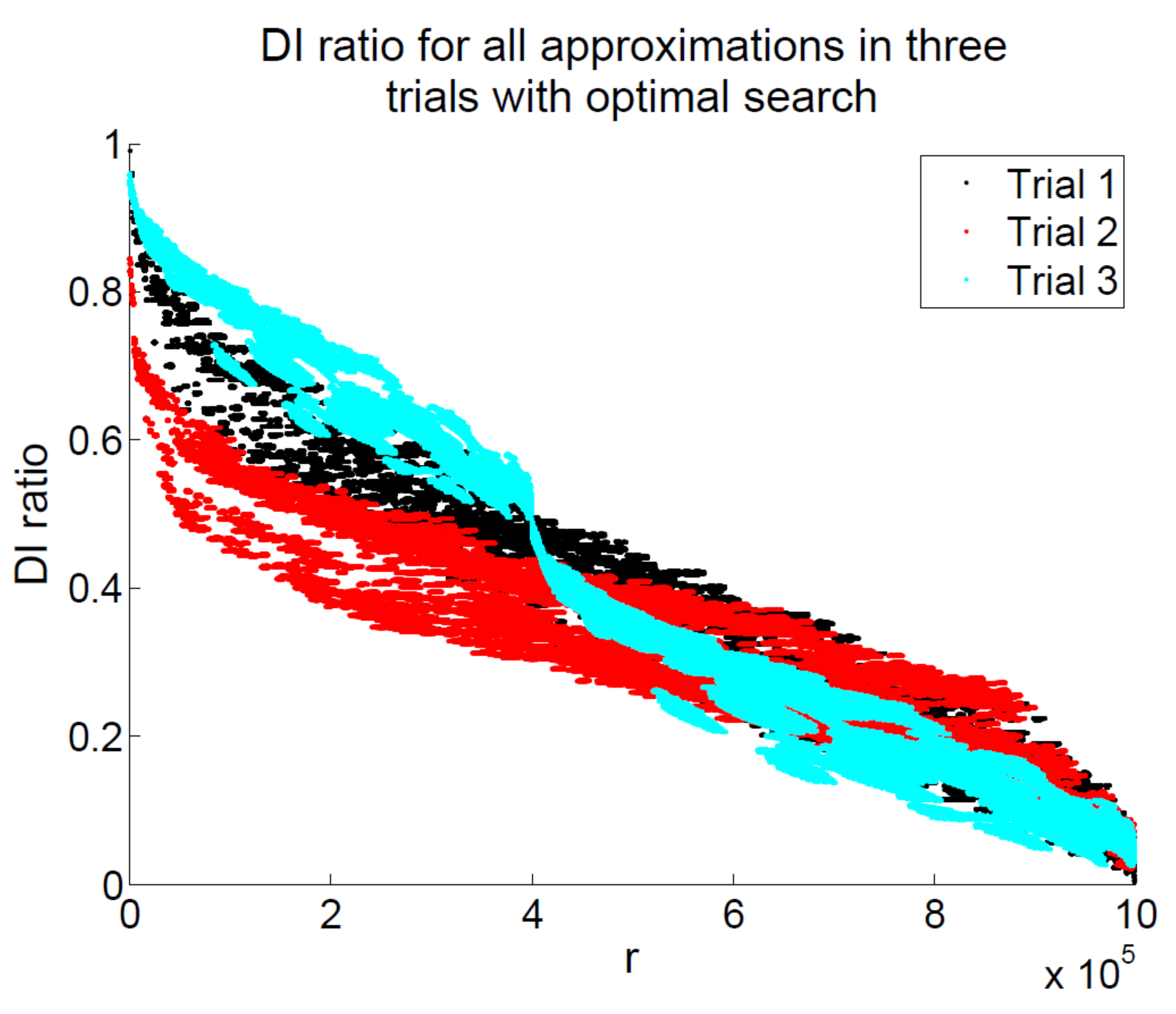}} 
  \quad
  \subfigure[Estimates used by Algorithm~5, scaled.]{\label{fig:3trialallr_est} \includegraphics[width=.45\columnwidth]{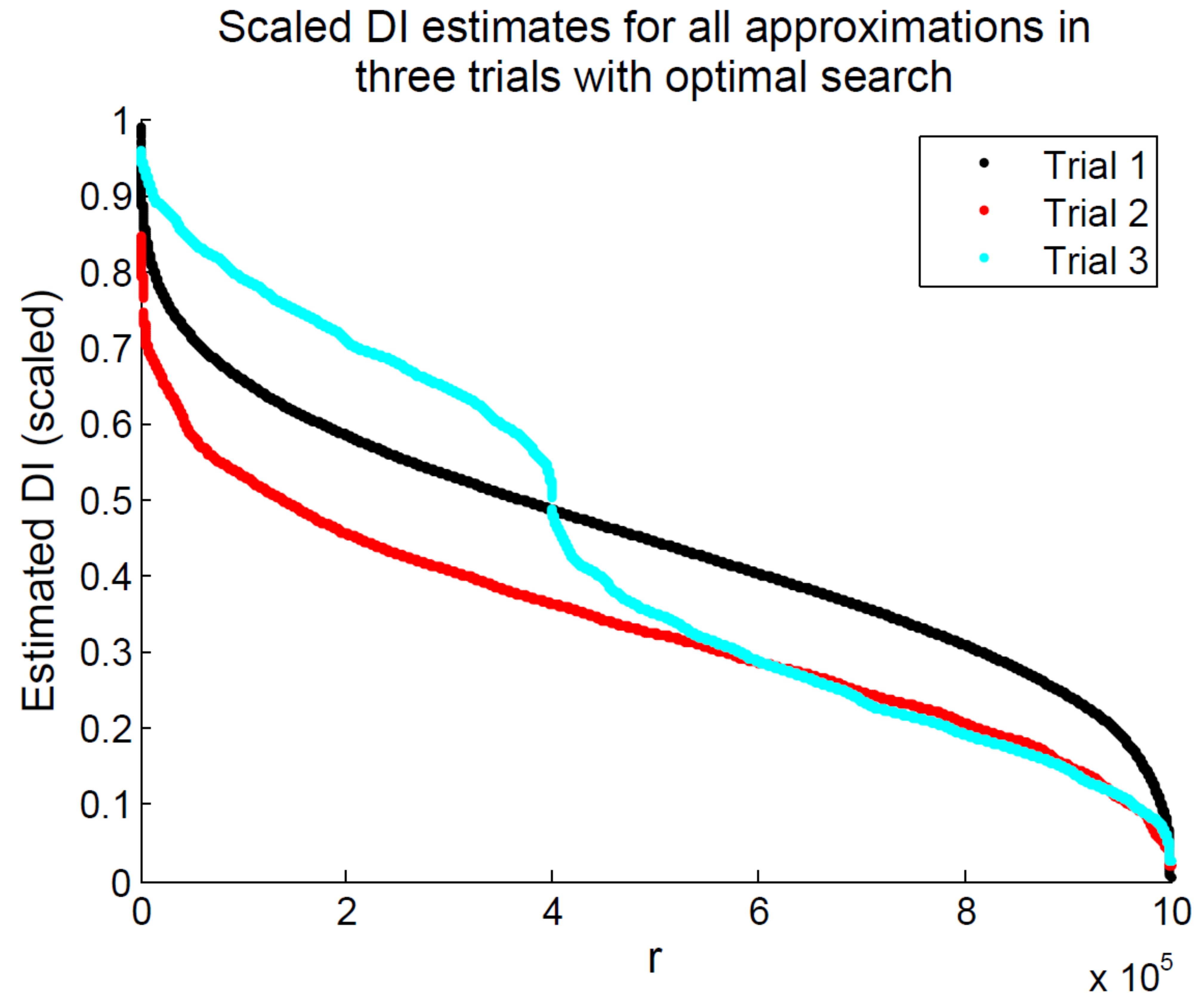}} 
   \caption{\small The values of the ratio \eqref{eq:sim:topR:ratio_appx_true} for all approximations in the order selected by Algorithm~5.  Three trials are shown.  Figure~\ref{fig:3trialallr} shows the actual ratio values.  The spread is due to estimation error. Figure~\ref{fig:3trialallr_est} shows the estimated values (normalized to the scale of Figure~\ref{fig:3trialallr}). }
  \label{fig:allappx3trial}
\end{figure}

\begin{figure}[t]
	\centering
	\includegraphics[width=.45\columnwidth]{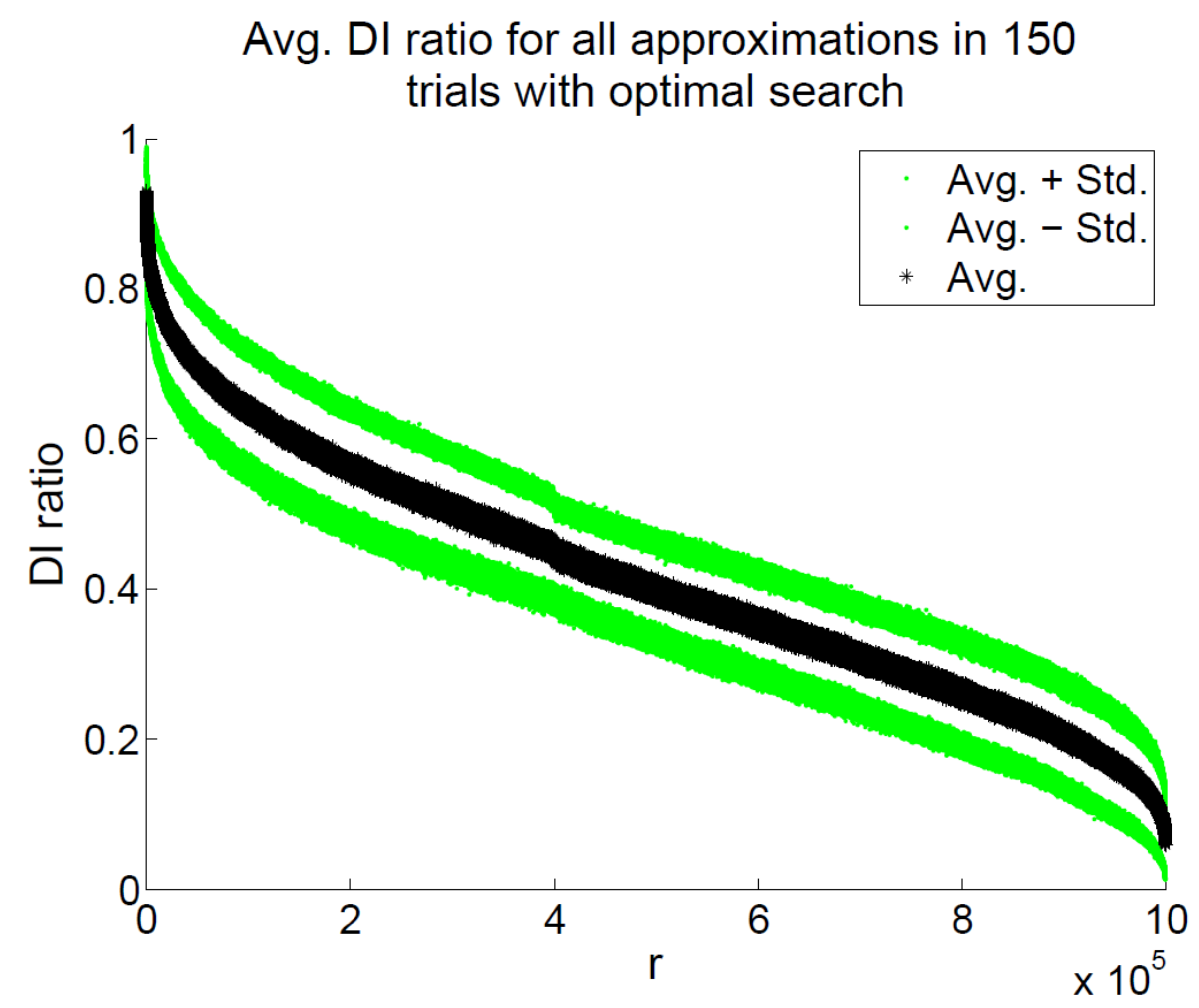}
   \caption{\small The ratio \eqref{eq:sim:topR:ratio_appx_true} for all approximations in the order selected by Algorithm~5, averaged over 150 trials.  The black and green curves depict the mean standard deviation respectively.}
   \label{fig:avgDI_allapx} 
\end{figure}

Figure~\ref{fig:avgDI_allapx} shows the true ratio \eqref{eq:sim:topR:ratio_appx_true} value using the ordering returned by Algorithm~5, averaged over 150 trials.  The mean and one standard deviation are represented by the black and green curves respectively.  Consistent with the curves for individual trials in Figure~\ref{fig:allappx3trial}, the shape of the curves in Figure~\ref{fig:avgDI_allapx} are similar to a logit function.  There was a sharp decrease in the quality of approximation in the low and high $r$ regimes, with a nearly linear decay for most $r$.

\begin{figure}[t]
	\centering
  \includegraphics[width=.45\columnwidth]{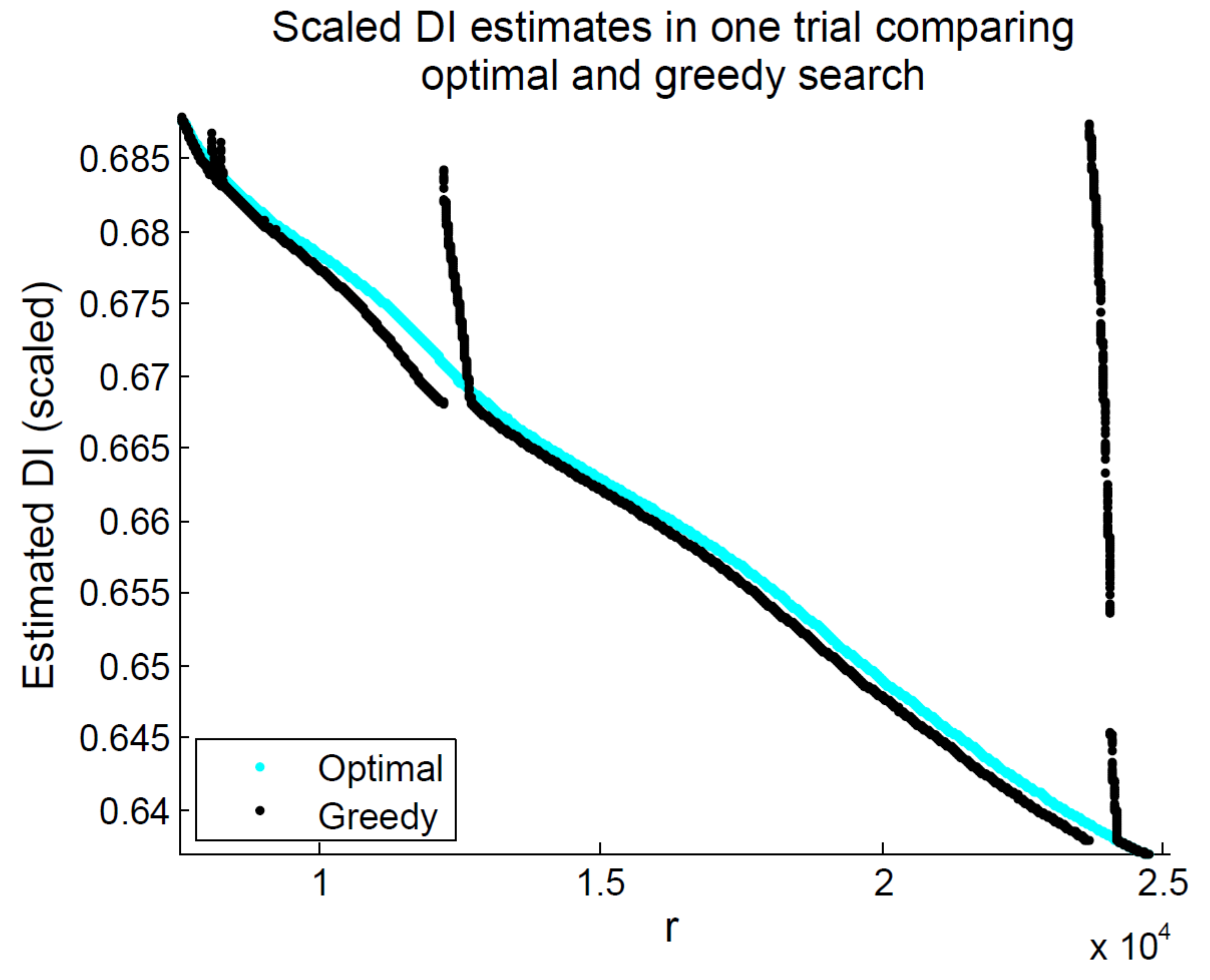}
  \caption{\small The estimate values used by Algorithm~5 and a modified Algorithm~3 to rank unconstrained approximations using optimal and greedy search.  One trial is shown. Estimates are scaled to actual ratio values \eqref{eq:sim:topR:ratio_appx_true}.  The smooth light blue curve corresponds to ordering from the optimal search.  The black curve corresponds to the greedy search.}
  \label{fig:grd_jumps} 
\end{figure}

The greedy search, using a modified Algorithm~3, performed comparably to the optimal search, Algorithm~5.  For $r=1$, this was shown in Figure~\ref{fig:greedy_optim}, and for $r\leq10$ this was shown in Figure~\ref{fig:avgDIrat_mKr}.  However, the current analysis confirmed that even for large $r$ this was true.  The analogous Figure~\ref{fig:avgDI_allapx} for the greedy ordering was visually indistinguishable and so not shown.  However, effects of greedy ordering were clearly seen in some trials.  Figure~\ref{fig:grd_jumps} depicts estimate values for approximations in one trial, with the light blue monotonic curve for the optimal ordering and the discontinuous black curve depicting the greedy ordering.  The large jumps are characteristic of the depth-first search, as the worst-case along one branch is worse than the best-case of the next, although the former branch was initially more promising. However, such large discontinuities as shown in Figure~\ref{fig:grd_jumps} were rare overall.  The ordering returned by the greedy search was largely similar to the optimal ordering.

\begin{figure}[t]
 \centering 
 \includegraphics[width=.45\columnwidth]{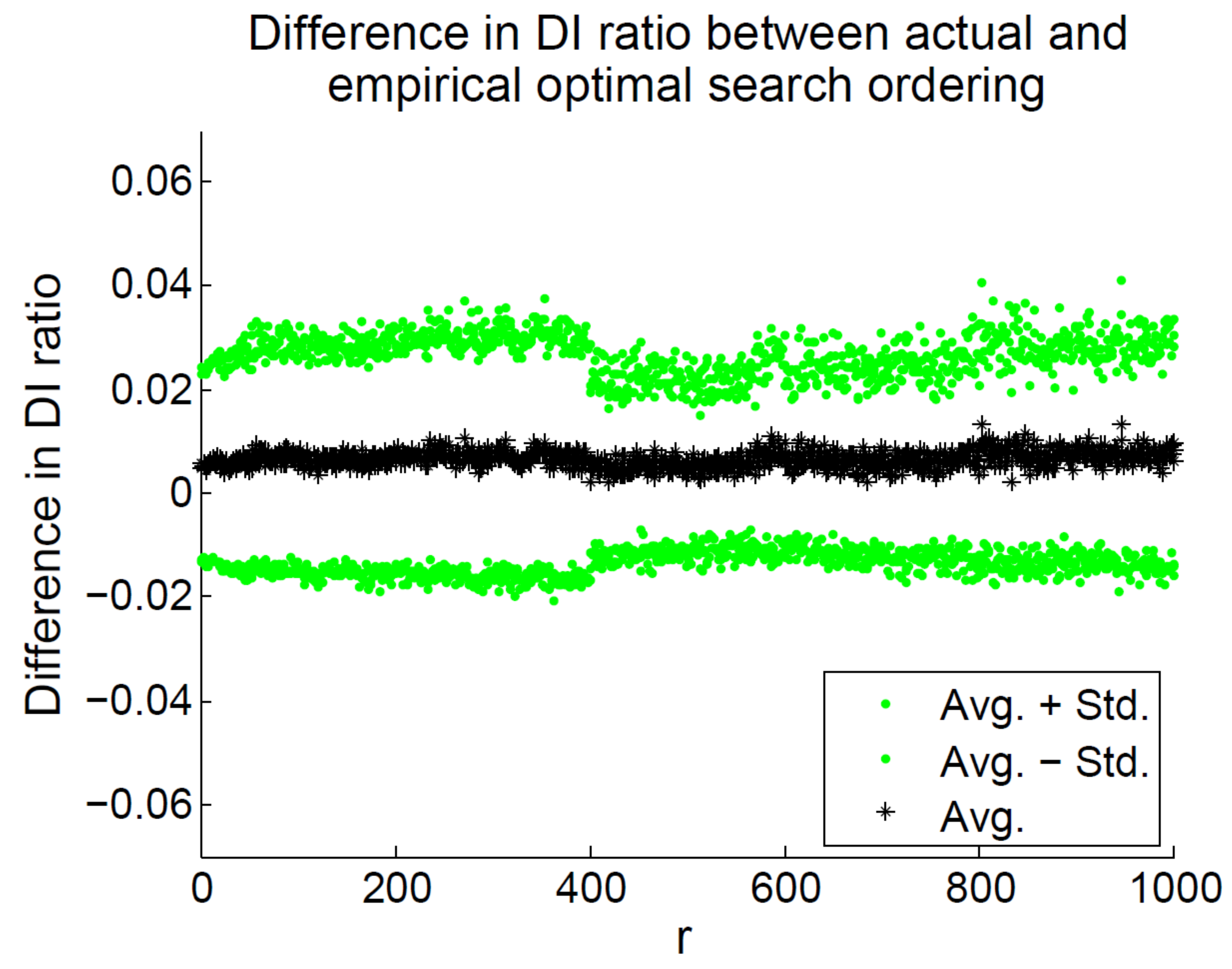} 
 \caption{\small The difference between the ratio \eqref{eq:sim:topR:ratio_appx_true} for the actual optimal ordering of unconstrained approximations and the ordering returned by Algorithm~3 (using estimates).  Results are averaged over 150 trials.  The black curve depicts the mean.  The green curves depict one standard deviation.   }
 \label{fig:diff_ordering} 
\end{figure}

As illustrated in Figure~\ref{fig:allappx3trial}, estimation errors led to errors in the optimal search ordering returned by Algorithm~5.  It is useful to characterize how large those ordering errors were.  Figure~\ref{fig:diff_ordering} shows the error, measured by the difference of the ratio \eqref{eq:sim:topR:ratio_appx_true} between the actual $r$th best approximation and the $r$th approximation in the ordering returned by Algorithm~5 for $r\leq 1000$. The results were averaged over 150 trials.  In terms of percentage points of the sum of directed information of the true parent sets to children, the denominator of \eqref{eq:sim:topR:ratio_appx_true}, Algorithm~5 performed well despite the estimation error.  The mean difference was slightly biased above $0$ and the standard deviations were within $4\%$ for all $r\leq 1000$.  The slight bias in the mean was expected; for small $r$, the approximations were mostly replaced with worse approximations, if any.  Overall, the greedy approximations performed nearly as well as the optimal approximations.

\section{Conclusion} \label{sec:concl}

In this paper, we presented several novel methods related to approximating directed information graphs.  The approximations allowed substantial flexibility for users.  Each algorithm took the in-degrees of the nodes as input.  With larger in-degrees, the approximations became better, but at the cost of visual simplicity of the graph and computational efficiency.  The approximation could be unconstrained or connected, and found using an optimal or a more efficient, near-optimal greedy search.  Furthermore, one could generate the top $r$ solutions, not only the best.  This enabled evaluation for which edges are most significant as well as finding the best solution of a more constrained class of topologies.  Lastly, the empirical results demonstrated the utility of these methods, especially showing that on average the greedy search performed much better than the worst-case lower bound.


\acks{C.\ J.\ Quinn was supported by the Department of Energy Computational Science Graduate Fellowship, which is provided under Grant DE-FG02-97ER25308.  He completed this work at the Department of Electrical and Computer Engineering, Coordinated Science Laboratory, University of Illinois, Urbana, Illinois 61801.  A.\ Pinar was supported by the DOE ASCR Complex  Distributed Interconnected Systems  (CDIS) program, the GRAPHS Program at DARPA, and the Applied Mathematics Program at the U.S. Department of Energy.  Sandia National Laboratories is a multi-program laboratory managed and operated by Sandia Corporation, a wholly owned subsidiary of Lockheed Martin Corporation, for the U.S. Department of Energy's National Nuclear Security Administration under contract DE-AC04-94AL85000. }




\appendix

\section{Proof of Theorem~\ref{thm:alg2:opt_con}} \label{apdx:thm:alg2:opt_con}

\begin{proof} Let $\calT$ be the set of all directed spanning trees on $m$ nodes.  For a given tree $T\in \calT$, let $\widetilde{\calG}_K^T \subseteq \widetilde{\calG}_K $ denote the set of approximations $\PT \in \widetilde{\calG}_K $ that contain $T$ as a directed spanning tree subgraph.  Every $\PT \in \widetilde{\calG}_K $ contains at least one such $T\in \calT$ as a subgraph, so $\widetilde{\calG}_K = \bigcup_{T \in \calT} \widetilde{\calG}_K^T$.  

For any tree $T \in \calT$, the best approximation $\PT^T \in \widetilde{\calG}^T_K$ is the one that for every edge $\X_{a(i)} \to \X_i$ in $T$, sets $\tildeA(i,a(i))$ as the parent set for node $i$.  This follows from \eqref{eq:algs:optconAtil}, since $a(i) \in \tildeA(i,a(i))$, so $T$ will be a subgraph of this approximation, and the sets $\tildeA(i,a(i))$ are the best such parent sets.  Thus, 
\beqa \max_{\PT \in \widetilde{\calG}_K} \sum_{i=1}^m \I(\allX_{A(i)} \to \X_i) 
&=& \max_{T \in \calT} \max_{\PT \in \widetilde{\calG}^T_K} \sum_{i=1}^m \I(\allX_{A(i)} \to \X_i) \label{eq:prf:Alg2:1} \\
&=& \max_{T \in \calT} \sum_{i=1}^m \I(\allX_{\tildeA(i,a(i))} \to \X_i), \label{eq:prf:Alg2:2} 
\eeqa %
where \eqref{eq:prf:Alg2:1} follows since $\widetilde{\calG}_K = \bigcup_{T \in \calT} \widetilde{\calG}_K^T$ and \eqref{eq:prf:Alg2:2} uses that $\PT^T$ is the best approximation in $\widetilde{\calG}_K^T$. Algorithm~2 finds the solution to \eqref{eq:prf:Alg2:2} and thus identifies the optimal approximation $\PT^* \in \widetilde{\calG}_K$. \hfill \end{proof}


\section{Proof of Theorem~\ref{thm:grd_bnd}} \label{apnd:prf:grd_bnd}

The proof is based on the proof for a related bound for submodular functions \citep{ nemhauser1978analysis}.

\begin{proof} For simplicity, we prove the case $A \bigcap B = \emptyset.$  The other case is almost identical and results in a tighter bound.   For that case the greedy algorithm selects each element of $B$ before any element of $A$. Let $l\leq |B|=L$.  Let $A_l$ be the set $A$ but ordered according to how the greedy algorithm would pick elements from $A$ after picking $\{B(1), \ldots, B(l)  \}$.

We first note two inequalities.  For all $l < L$,  
\beqa 
\I( \X_{B(l+1)} \to \Y \| \allX_{ \{B(1), \dots, B(l) \} }  )  \geq  \I( \X_{A_l(1)} \to \Y \| \allX_{ \{B(1), \dots, B(l) \} }  ),  \label{eq:proof:gensub:a1}
\eeqa which holds since the greedy algorithm selects $\X_{B(l+1)}$ after $\{\X_{B(1)}, \dots, \X_{B(l)} \}$, and 
\beqa 
\alpha^{i-1} \I( \X_{A_l(1)} \to \Y \| \allX_{ \{B(1), \dots, B(l) \} }  ) 
\geq \I( \X_{A_l(i)} \to \Y \| \allX_{ \{B(1), \dots, B(l),A_l(1), \ldots, A_l(i-1)  \} }  ),  \label{eq:proof:gensub:a2} \eeqa which follows from Assumption~\ref{assump:greedysub} for the set $A \cup \{B(1), \ldots, B(l)  \}$.

We now compare an optimal solution $A$ to the first $l$ elements in the greedy solution $B$.  
\beqa
&& \hspace{-1.65cm} \I(\allX_{A} \to \Y) - \I(\allX_{{\{B(1), \dots, B(l)\} } } \to \Y)  \nonumber \\
&&\hspace{-.2cm} \leq \I(\allX_{A \cup {\{B(1), \dots, B(l)\} } } \!\to\! \Y) - \I(\allX_{\{B(1), \dots, B(l)\} } \!\to\! \Y) \nonumber \\
&& \hspace{-.2cm} = \I(\allX_{{\{B(1), \dots, B(l)\} } } \to \Y) \nonumber \\
&& \hspace{0.1cm} + \! \sum_{i =1}^K  \I( \X_{A_l(i)} \!\to\! \Y \| \allX_{{\{B(1), \dots, B(l)\} } \cup \{A_l(1), \dots, A_l(i-1) \} } ) \nonumber \\
&& \hspace{0.1cm} - \I(\allX_{\{B(1), \dots, B(l)\} } \to \Y) \label{eq:proof:gensub:2}\\
&&\hspace{-.2cm} \leq    \sum_{i =1}^K  \alpha^{i-1} \I( \X_{A_l(1)} \to \Y \| \allX_{ \{B(1), \dots, B(l)\}}) \label{eq:proof:gensub:3} \\
&&\hspace{-.2cm} \leq  \sum_{i =1}^K  \alpha^{i-1} \I( \X_{B(l+1)} \to \Y \| \allX_{ \{B(1), \dots, B(l)\}}).   \label{eq:proof:gensub:4}
\eeqa  Equation~\eqref{eq:proof:gensub:2} follows from the chain rule applied in the order the greedy algorithm would select from $A \cup \{B(1), \dots, B(l) \}$.  Equations~\eqref{eq:proof:gensub:3} and  \eqref{eq:proof:gensub:4} follow from \eqref{eq:proof:gensub:a2} and \eqref{eq:proof:gensub:a1} respectively.  

Let $\delta_{l} :=  \I(\allX_{A} \to \Y) - \I(\allX_{{\{B(1), \dots, B(l)\} } } \to \Y).$  Then
$\delta_{l} - \delta_{l+1} = \I( \X_{B(l+1)} \to \Y \| \allX_{ \{B(1), \dots, B(l)\}})$.
  Also denote $\beta := \sum_{i =1}^K  \alpha^{i-1}$. From~\eqref{eq:proof:gensub:4} we have 
$ \delta_l \leq \beta \left( \delta_{l} - \delta_{l+1}  \right) ,$ 
which implies $\delta_{l+1} \leq \left(1 - \frac{1}{\beta}\right) \delta_l   .$  Thus 
\[\delta_{l} \leq \left(1 - \frac{1}{\beta}\right)^{l} \delta_0 \leq e^{- \frac{l}{\beta}} \delta_0.  \] The last step uses the bound $(1-p) \leq e^{-p}$, which holds for all $p$. For $0 < p < 1$, both sides are positive so the inequality is conserved if powers are taken. Since $\delta_0 = \I(\allX_{A} \to \Y) - \I(\emptyset \to \Y) = \I(\allX_{A} \to \Y),$ this gives \[  \I(\allX_{A} \to \Y) - \I(\allX_{{\{B(1), \dots, B(l)\} } } \to \Y)  \leq e^{- \frac{l}{\beta}} \I(\allX_{A} \to \Y)    ,\] which after rearranging gives the theorem. 
\end{proof}


\section{Proof of Corollary~\ref{cor:KbndL}} \label{apnd:prf:cor:KbndL}

We will prove Corollary~\ref{cor:KbndL} by first solving the following optimization problem%
\begin{eqnarray}
&\max_{\{b_1,\dots,b_K\}}& \hspace{0.3cm} \sum_{i=1}^K b_i  \label{eq:prf:KbndL:3}\\
& \hspace{0.5cm} \text{s.t.} & \hspace{0.3cm} \sum_{i=1}^L b_i \leq c \label{eq:prf:KbndL:4}\\
& & \hspace{0.0cm} 0 \leq b_i \leq \alpha b_{i-1}, \; i = 2, \ldots, m, \label{eq:prf:KbndL:5}
\end{eqnarray}
where $K$ and $L$  are integers such that $K>L$ and  $\alpha>1$ and $c>0$ are real coefficients. Let $\{b_1^*, \dots, b_K^*\}$ denote a solution to \eqref{eq:prf:KbndL:3}.  

\begin{lemma} \label{lem:prf:KbndL:1}
For any optimal solution $\{b_1^*, \dots, b_K^*\}$, \eqref{eq:prf:KbndL:4} holds with equality.
\end{lemma}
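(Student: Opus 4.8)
The plan is to argue by contradiction, exploiting the fact that every constraint except the budget \eqref{eq:prf:KbndL:4} is homogeneous under uniform scaling. Suppose $\{b_1^*,\dots,b_K^*\}$ is optimal but that \eqref{eq:prf:KbndL:4} is slack, i.e.\ $S := \sum_{i=1}^L b_i^* < c$. The key observation is that the nonnegativity and ratio constraints in \eqref{eq:prf:KbndL:5} define a cone: if $\{b_i^*\}$ satisfies $0 \le b_i^* \le \alpha b_{i-1}^*$, then so does $\{\lambda b_i^*\}$ for any $\lambda > 0$, since $\lambda b_i^* \le \alpha(\lambda b_{i-1}^*)$. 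Thus scaling a feasible point leaves all of \eqref{eq:prf:KbndL:5} intact, and only the budget can be violated.

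First I would treat the main (nondegenerate) case $S > 0$. Here I would set $\lambda = c/S > 1$ and $b_i' = \lambda b_i^*$. By the scaling remark the new point satisfies \eqref{eq:prf:KbndL:5}, and $\sum_{i=1}^L b_i' = \lambda S = c$, so it is feasible. Because $K > L$ and all terms are nonnegative, the objective satisfies $\sum_{i=1}^K b_i^* \ge \sum_{i=1}^L b_i^* = S > 0$, so multiplying by $\lambda > 1$ strictly increases \eqref{eq:prf:KbndL:3}, contradicting the optimality of $\{b_i^*\}$.

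The step I expect to be the only real subtlety is the degenerate case $S = 0$, where scaling accomplishes nothing. Here the chain constraints force $b_1^* = 0$, and hence $b_i^* = 0$ for all $i$ (once some coordinate vanishes, the ratio bound forces every later one to vanish as well), so the objective value is $0$. To derive a contradiction I would simply exhibit a strictly better feasible point: take $\hat b_1 = c$ and $\hat b_2 = \dots = \hat b_K = 0$. This satisfies \eqref{eq:prf:KbndL:5} (indeed $\hat b_2 = 0 \le \alpha c$, and $\hat b_i = 0 \le \alpha \hat b_{i-1}$ thereafter) and \eqref{eq:prf:KbndL:4} (with equality, as $L \ge 1$), yet attains objective $c > 0$, again contradicting optimality.

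Combining the two cases shows that a slack budget is impossible, so \eqref{eq:prf:KbndL:4} must hold with equality at any optimum. The main thing to get right is not the scaling argument itself---which is routine---but remembering to rule out the all-zero solution separately, since that is the single feasible point the scaling trick cannot improve.
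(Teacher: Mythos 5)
Your proof is correct, but it takes a genuinely different route from the paper's. The paper argues additively: it spreads the slack uniformly, setting $\gamma = \frac{1}{L}\bigl(c - \sum_{i=1}^L b_i^*\bigr)$ and replacing $b_i^*$ by $b_i^* + \gamma$ for $i \le L$, then checks that the ratio constraints survive because $b_i^* + \gamma \le \alpha b_{i-1}^* + \alpha\gamma$ (which uses $\alpha \ge 1$). That single perturbation handles every case at once, including the all-zero point, so no case split is needed. Your multiplicative argument instead exploits the fact that \eqref{eq:prf:KbndL:5} is a cone, so dilation by $\lambda = c/S$ preserves feasibility for \emph{any} $\alpha > 0$, not just $\alpha > 1$; the price is exactly the degenerate case $S = 0$ that scaling cannot fix, which you correctly isolate and dispatch with the explicit feasible point $(c, 0, \dots, 0)$. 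So your approach is marginally more general in $\alpha$, while the paper's is more uniform (one perturbation, no cases); both establish the lemma.
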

\begin{proof} The proof will follow by contradiction.  Suppose $c - \sum_{i=1}^L b_i^*>0$.  Let $\gamma:= \frac{1}{L} (c - \sum_{i=1}^L b_i^*)$. Define 
\begin{equation*}
\widetilde{b}_i := \left\{
\begin{array}{cl}
b_i^* + \gamma & \text{if } i \leq L,\\
b_i^* & \text{if } i>L.
\end{array} \right.
\end{equation*}
Note that $\sum_{i=1}^L \widetilde{b}_i = \sum_{i=1}^L b_i^* + \gamma = c$ so the first constraint is met.  Also, for $i\leq L$, $\widetilde{b}_i =b_i^* + \gamma \leq \alpha b_{i-1}^* + \alpha\gamma = \alpha\widetilde{b}_{i-1}$, so the second constraint is met.  Thus, $\{\widetilde{b}_1, \dots, \widetilde{b}_K\}$ is feasible and has a larger sum than the optimal solution, $\sum_{i=1}^K \widetilde{b}_i =L\gamma+ \sum_{i=1}^K b_i^*$, contradicting $\{b_1^*,\dots,b_K^*\}$'s optimality. \end{proof}

\begin{lemma}\label{lem:prf:KbndL:2}
For any optimal solution $\{b_1^*, \dots, b_K^*\}$, \eqref{eq:prf:KbndL:5} holds with equality.
\end{lemma}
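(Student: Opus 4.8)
The plan is to mirror the contradiction/perturbation style of Lemma~\ref{lem:prf:KbndL:1}, but organized around the observation that, once the budget constraint \eqref{eq:prf:KbndL:4} is tight, the entire objective is governed by the single value $b_L$. Throughout I fix an optimal solution $\{b_1^*,\dots,b_K^*\}$ and invoke Lemma~\ref{lem:prf:KbndL:1} to assume $\sum_{i=1}^L b_i^* = c$.

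First I would dispatch the tail indices $i>L$. I claim each growth constraint there is tight. If instead $b_j^* < \alpha b_{j-1}^*$ for some $j>L$, then replacing $b_j^*$ by $\alpha b_{j-1}^*$ leaves the budget \eqref{eq:prf:KbndL:4} untouched (it involves only the first $L$ variables), satisfies $b_j \le \alpha b_{j-1}$ with equality, and only relaxes the constraint $b_{j+1}\le\alpha b_j$; yet it strictly increases the objective by $\alpha b_{j-1}^* - b_j^*>0$, contradicting optimality. (If $b_{j-1}^*=0$ the constraint is already tight, since $b_j^*\ge 0$.) Hence $b_i^* = \alpha^{\,i-L} b_L^*$ for every $i\ge L$, and substituting into \eqref{eq:prf:KbndL:3} together with budget-tightness gives
\[ \sum_{i=1}^K b_i^* \;=\; c + b_L^* \sum_{k=1}^{K-L}\alpha^k, \]
so the objective is a strictly increasing affine function of $b_L^*$. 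Maximizing \eqref{eq:prf:KbndL:3} is therefore equivalent to maximizing $b_L^*$.

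Next I would bound $b_L$ over all feasible points to force the remaining first-$L$ constraints. Reading \eqref{eq:prf:KbndL:5} backwards, $b_{i-1}\ge b_i/\alpha$, and iterating from $i=L$ downward gives $b_i \ge \alpha^{\,i-L} b_L$ for $1\le i\le L$. Summing and using \eqref{eq:prf:KbndL:4},
\[ c \;=\; \sum_{i=1}^L b_i^* \;\ge\; b_L^* \sum_{i=1}^L \alpha^{\,i-L}, \]
with equality holding if and only if $b_i^* = \alpha^{\,i-L} b_L^*$ for all $1\le i\le L$, i.e.\ iff $b_i^*=\alpha b_{i-1}^*$ for $2\le i\le L$. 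Since the fully geometric allocation $b_i=\alpha^{\,i-1}b_1$, with $b_1$ chosen so that $\sum_{i=1}^L b_i=c$, is feasible and attains this upper bound on $b_L$, and since the objective is increasing in $b_L^*$, the optimal $b_L^*$ must equal that bound. Equality in the displayed inequality then forces every first-$L$ growth constraint to be tight, and combined with the tail case this shows \eqref{eq:prf:KbndL:5} holds with equality for all $i$.

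The step I expect to be the main obstacle is the middle reduction: correctly arguing that only $b_L$ matters. A naive within-block perturbation (shifting mass between two of the first $L$ variables) keeps the budget fixed but leaves the objective unchanged, so it cannot by itself expose suboptimality; the essential point is that slack in an early growth constraint caps how large $b_L$---and hence the entire geometric tail---can be. Isolating the objective as $c + (\text{const})\,b_L^*$ is precisely what converts the problem into the clean one-variable maximization handled by the backward-summation bound.
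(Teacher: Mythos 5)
Your proof is correct, but it takes a genuinely different route from the paper's for the core case $i \le L$. Both arguments dispose of the tail indices $i > L$ identically (raise $b_i^*$ to $\alpha b_{i-1}^*$; the budget \eqref{eq:prf:KbndL:4} is untouched and the objective strictly increases). For $i \le L$, however, the paper stays local: it performs the mass-conserving exchange $\widetilde{b}_{i-1} = (b_{i-1}^*+b_i^*)/(1+\alpha)$, $\widetilde{b}_i = \alpha(b_{i-1}^*+b_i^*)/(1+\alpha)$, which makes the constraint at index $i$ tight while creating slack at index $i+1$; iterating pushes the slack up to index $L+1$, where the tail bump applies and yields the contradiction. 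Notably, the paper's argument never invokes Lemma~\ref{lem:prf:KbndL:1}. You instead argue globally: using Lemma~\ref{lem:prf:KbndL:1} together with tail-tightness, the objective of any optimal point equals $c + b_L^*\sum_{k=1}^{K-L}\alpha^k$, so optimality reduces to maximizing the single scalar $b_L^*$; backward chaining of \eqref{eq:prf:KbndL:5} gives $b_L^* \le c\big/\sum_{i=1}^{L}\alpha^{i-L}$, the fully geometric allocation is feasible and attains this bound, and the equality case of the chained inequality forces every constraint among the first $L$ coordinates to be tight. Your equality analysis is sound, including the degenerate case $b_L^*=0$ (which budget-tightness rules out anyway since $c>0$), and there is no circularity in citing Lemma~\ref{lem:prf:KbndL:1}, whose proof is independent. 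As for what each approach buys: the paper's exchange argument is self-contained and requires no knowledge of the extremal value, at the cost of a slightly fiddly feasibility check at each exchange step; your reduction does more work up front but identifies the explicit optimizer and the optimal value of $b_L^*$, so it additionally gives uniqueness of the optimal solution and makes Lemma~\ref{lem:prf:KbndL:3} nearly immediate---in effect you have already proven most of it.
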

\begin{proof}
The proof will follow by contradiction.  Suppose there is an index  $i>1$ for which $b_i^* < \alpha b_{i-1}^* $.  If $i>L$, then we can set $b_i^* \gets \alpha b_{i-1}^*$ to increase the objective function, which contradicts optimality. If $i\leq L$, replace $b^*_{i-1}$ and $b_{i}^*$ with $\widetilde{b}_{i-1}$ and $\widetilde{b}_{i}$, where $\widetilde{b}_{i-1} = \frac{b^*_{i-1} + b^*_{i}}{1+ \alpha}$ and $\widetilde{b}_{i} = \frac{\alpha(b_{i-1}^* + b_{i}^*)}{1+ \alpha}$.  Note that $b^*_{i-1} + b^*_{i} = \widetilde{b}_{i-1}+ \widetilde{b}_{i},$ and the constraints are still satisfied.  This exchange necessarily results in $b_{i+1}^* \leq \alpha b_i^*< \alpha \widetilde{b}_{i} $.  Thus, the exchange can be repeated for larger $i$ until $i=L+1$.  Then set $\widetilde{b}_{L+1} \gets \alpha \widetilde{b}_{L}$ and the objective function is necessarily increased, a contradiction. \hfill \end{proof}

We can now find the solution to the optimization problem.
\begin{lemma} \label{lem:prf:KbndL:3} The optimal solution to \eqref{eq:prf:KbndL:3} is $\sum_{i=1}^K b_i^* = c \frac{1 - \alpha^K}{1 - \alpha^L}. $ 
 
\end{lemma}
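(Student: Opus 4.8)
The plan is to combine the two structural lemmas already established and then evaluate two finite geometric sums. By Lemma~\ref{lem:prf:KbndL:2}, at any optimal solution the chain constraints \eqref{eq:prf:KbndL:5} are all tight, so $b_i^* = \alpha b_{i-1}^*$ for every $i \geq 2$. Iterating this relation starting from $b_1^*$ gives the explicit geometric form $b_i^* = \alpha^{i-1} b_1^*$ for all $i \in \{1,\dots,K\}$, which reduces the entire solution to the single unknown $b_1^*$.

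Next I would pin down $b_1^*$ using Lemma~\ref{lem:prf:KbndL:1}, which guarantees that the budget constraint \eqref{eq:prf:KbndL:4} is active, i.e.\ $\sum_{i=1}^L b_i^* = c$. Substituting the geometric form and summing the finite series $\sum_{i=1}^L \alpha^{i-1} = \frac{\alpha^L - 1}{\alpha - 1}$ yields $b_1^* \frac{\alpha^L - 1}{\alpha - 1} = c$, so that $b_1^* = \frac{c(\alpha-1)}{\alpha^L - 1}$. This is the one place where the hypothesis $\alpha \neq 1$ is used, since it keeps the denominator nonzero and legitimizes the closed-form geometric sum.

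Finally, I would substitute back into the objective, summing $\sum_{i=1}^K \alpha^{i-1} = \frac{\alpha^K - 1}{\alpha - 1}$ to obtain
\[
\sum_{i=1}^K b_i^* = b_1^* \frac{\alpha^K - 1}{\alpha - 1} = \frac{c(\alpha-1)}{\alpha^L - 1}\cdot\frac{\alpha^K - 1}{\alpha - 1} = c\,\frac{\alpha^K - 1}{\alpha^L - 1},
\]
and multiplying numerator and denominator by $-1$ puts this in the stated form $c\,\frac{1 - \alpha^K}{1 - \alpha^L}$. There is no genuine obstacle here: once Lemmas~\ref{lem:prf:KbndL:1} and~\ref{lem:prf:KbndL:2} force both constraints to be active, the problem collapses to these two routine geometric-series evaluations. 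The only point meriting a remark is feasibility of the candidate point, but nonnegativity of every $b_i^* = \alpha^{i-1} b_1^*$ is automatic from $c>0$ and $\alpha>1$, so the computed value is indeed attained by an admissible solution.
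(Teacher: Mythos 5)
Your proposal is correct and follows essentially the same route as the paper's proof: invoke Lemmas~\ref{lem:prf:KbndL:1} and~\ref{lem:prf:KbndL:2} to make both constraints tight, reduce to the geometric form $b_i^* = \alpha^{i-1} b_1^*$, solve for $b_1^*$ from the active budget constraint, and substitute into the objective using the finite geometric series formula. Your added remark on feasibility (nonnegativity of the candidate point) is a small extra check the paper omits, but it does not change the argument.
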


\begin{proof} By Lemmas~\ref{lem:prf:KbndL:1} and \ref{lem:prf:KbndL:2},  the constraints  \eqref{eq:prf:KbndL:4} and \eqref{eq:prf:KbndL:5} hold with equality. We can first solve for $b_1^*$, %
\beqas c &=& \sum_{i=1}^L b_i^* 
=\sum_{i=1}^L \alpha^{i-1} b_1^* \\
\Longrightarrow \; b_1^* &=& \frac{c}{\sum_{i=1}^L \alpha^{i-1}}.
\eeqas %
Solving for the value of the objective function,
\beqa
\sum_{i=1}^K b_i^* &=& \sum_{i=1}^K \alpha^{i-1} b_1^* 
=\sum_{i=1}^K \alpha^{i-1} \frac{c}{\sum_{i=1}^L \alpha^{i-1}}. \label{eq:prf:KbndL:9} 
\eeqa
Using the geometric series formula \[\sum_{i=1}^{K} \alpha^{i-1} = \sum_{i=0}^{K-1} \alpha^i = \frac{1 - \alpha^K}{1 - \alpha},\] the equation \eqref{eq:prf:KbndL:9} becomes
\beqas
\sum_{i=1}^K b_i^* &=& c \frac{\sum_{i=1}^K \alpha^{i-1}}{\sum_{i=1}^L \alpha^{i-1}}  
= c \frac{1 - \alpha^K}{1 - \alpha^L}. \nonumber
\eeqas \hfill
\end{proof}

We can now prove Corollary~\ref{cor:KbndL}.

\begin{proof} Let the elements of $A_K$ and $A_L$ be ordered according to the greedy order. Consider the worst case, with $\I(\allX_{A_K} \to \Y)$ as large as possible, given \beqa\I(\allX_{\{A_K(1), \ldots, A_K(L)\}} \to \Y) \leq \I(\allX_{A_L} \to \Y). \label{eq:prf:KbndL:1} 
\eeqa 
The inequality \eqref{eq:prf:KbndL:1} holds by definition of $A_L$ being the optimal parent set of size $L$.  Greedy-submodularity imposes another constraint.  For any $0<i<K$, 
\[  \I(\X_{A_K(i+1)} \to \Y \| \allX_{\{A_K(1), \ldots, A_K(i)\}}  )  \leq \alpha \hspace{0.09cm} \I(\X_{A_K(i)} \to \Y \| \allX_{\{A_K(1), \ldots, A_K(i-1)\}}  ).  
\]

Corollary~\ref{cor:KbndL} follows from Lemma~\ref{lem:prf:KbndL:3}, substituting   $\I(\allX_{A_L} \to \Y)$ for $c$ and $\I(\X_{A_K(i)} \to \Y \| \allX_{\{A_K(1), \ldots, A_K(i-1)\}}  )$ for $b_i$.  \hfill \end{proof}


\section{Proof for Theorem~\ref{thm:grd_connect}} \label{apnd:prf:grd_connect}

\begin{proof}
Let $T_2$ denote the MWDST picked by Algorithm~2.  For an edge $e \in \{\X_j \to \X_i : 1 \leq j \neq i \leq m\}$ in the complete graph on $m$ nodes, let $w_2(e)$ denote the weight $\I(\allX_{\wtA(i,j)} \to \X_i)$ assigned by Algorithm~2.  Define $T_4$ and $w_4(e)$ for Algorithm~4 likewise. Also, let $c :=  ( \hspace{-0.0cm} 1 - \exp( -L/(\sum_{i = 0}^{K-1} \alpha^i) )  )$.  For each edge $e$ in the complete graph, \beqa w_4(e) \geq c w_2(e), \label{eq:prf:grd_tree1}\eeqa  which follows from Theorem~\ref{thm:grd_bnd}.  Furthermore, 
\beqa \sum_{e \in T_4} w_4(e) &\geq& \sum_{e \in T_2} w_4(e) \label{eq:prf:grd_tree2}\\
&\geq& c \sum_{e \in T_2} w_2(e). \label{eq:prf:grd_tree3}
\eeqa  Equation \eqref{eq:prf:grd_tree2} follows since in Algorithm~4, $T_4$ was selected as the MWDST, and \eqref{eq:prf:grd_tree3} follows from \eqref{eq:prf:grd_tree1}.
\end{proof}


\section{Proof for Theorem~\ref{thm:apx:TopRGeneral}} \label{apnd:prf:TopRGeneral}

To prove Theorem~\ref{thm:apx:TopRGeneral}, we first show the following lemma.

\begin{lemma} \label{lem:topR} For all $1\leq l \leq r$, the $l$th best approximation has the same parent sets except one as one of the top $l-1$ solutions. 
\end{lemma}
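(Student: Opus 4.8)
The plan is to exploit the fact that, by Theorem~\ref{thm:apx:gen_appx} and Corollary~\ref{cor:apx:gen_K_apx}, the quality of any approximation in $\calG_K$ is the additive objective $\sum_{i=1}^m \I(\allX_{A(i)} \to \X_i)$, a sum of terms each depending on a single parent set $A(i)$. Thus an approximation is nothing more than a tuple $(A(1), \ldots, A(m))$ of independently chosen size-$K$ parent sets, and replacing only the $i$th parent set by $B$ changes the total value by exactly $\I(\allX_{B} \to \X_i) - \I(\allX_{A(i)} \to \X_i)$. Under Assumption~\ref{assump:nonequalpar}, for each fixed node $i$ the values $\{\I(\allX_{B} \to \X_i) : |B| = K\}$ are distinct, so they can be linearly ordered; in particular every non-maximal parent set for node $i$ has a well-defined immediate predecessor, namely the parent set of next-larger value.

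I would treat $l=1$ as the trivial base case (the optimal approximation, which seeds the list) and prove the claim for $l \geq 2$. Let $S = (s_1, \ldots, s_m)$ be the $l$th best approximation. Since each node's individually best parent set $A^*(i)$ from Corollary~\ref{cor:apx:gen_K_apx} is unique under Assumption~\ref{assump:nonequalpar}, the globally optimal tuple is unique, so for $l \geq 2$ at least one node $i$ must have $s_i \neq A^*(i)$. Fix such an $i$ and let $t_i$ be the immediate predecessor of $s_i$ in the value order for node $i$, so that $\I(\allX_{t_i} \to \X_i) > \I(\allX_{s_i} \to \X_i)$. Define $S'$ to agree with $S$ in every coordinate except the $i$th, where it uses $t_i$ in place of $s_i$.

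Then $S'$ differs from $S$ in exactly one parent set, and by the additive decomposition its total value exceeds that of $S$ by $\I(\allX_{t_i} \to \X_i) - \I(\allX_{s_i} \to \X_i) > 0$. Because $S$ is the $l$th best and $S'$ has strictly larger total value, $S'$ must appear strictly earlier in the ranking, hence among the top $l-1$ approximations; here I would be careful to note that the strict inequality in total value forces $S'$ ahead of $S$ regardless of how any ties among the remaining approximations are broken. This yields the lemma, and choosing $t_i$ to be the immediate predecessor (rather than an arbitrary better set) is what aligns the statement with the single-step replacement performed in Algorithm~6.

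The step that needs the most care is guaranteeing a \emph{strict} single-coordinate improvement, and this is precisely where the hypotheses do their work: the additive structure inherited from Theorem~\ref{thm:apx:gen_appx} makes a one-parent-set change the correct notion of ``neighbor,'' while Assumption~\ref{assump:nonequalpar} ensures the immediate predecessor $t_i$ exists and has strictly larger directed information. Everything else is bookkeeping on the ranking, so I anticipate no essential difficulty beyond stating the tie-breaking convention for the ordering produced by Algorithm~5.
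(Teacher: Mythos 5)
Your proposal is correct and follows essentially the same route as the paper's own proof: both use the additive decomposition (Corollary~\ref{cor:apx:gen_K_apx}) together with Assumption~\ref{assump:nonequalpar} to replace one non-optimal parent set of the $l$th best approximation with a strictly better one, producing an approximation that differs in exactly one coordinate and must therefore rank strictly higher, hence among the top $l-1$. The only cosmetic differences are that the paper phrases the argument as an induction (whose hypothesis it never actually invokes) and swaps in an arbitrary better parent set, whereas you swap in the immediate predecessor in the per-node value order.
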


\begin{proof}
The proof follows by induction.  The base case, with $l=1$, holds trivially as it is the only solution.  Assume that the statement of the lemma holds for some $1\leq l < r$.  Consider the $(l+1)$th best solution, $\{B(i)\}_{i=1}^m$.  Let $\X_j$ be a process for which the parent set $B(j)$ is not the same as that of the optimal solution, $A(j)$.  

By Corollary~\ref{cor:apx:gen_K_apx}, parent sets can be identified independently.  Also, by Assumption~ \ref{assump:nonequalpar}, no two parent sets have the same influence.  Thus, the optimal parent set is $A(j)$ is better than $B(j)$.  Let $A'(j)$ be any parent set for $\X_j$ that is better than $B(j)$.  Then, by Corollary~\ref{cor:apx:gen_K_apx}, the parent sets $\{B(1), \dots, B(j-1),A'(j),B(j+1),\dots,B(m)\}$ induce a better approximation than the $(l+1)$th best approximation with $\{B(i)\}_{i=1}^m$ and therefore must be one of the top $l$ approximations.  Since this new approximation differs from the $(l+1)$th best approximation in precisely one parent set, the lemma holds. \hfill
\end{proof}

The proof for Theorem~\ref{thm:apx:TopRGeneral} follows from Lemma~\ref{lem:topR}, since every approximation selected in Algorithm~5 is used as a seed in Algorithm~6 to generate all of the best solutions that have precisely one parent set different from that of the seed.


\section{Implementation Notes for Algorithm~5} \label{apnd:disc:TopRGeneral}

In Algorithm~5, for large $m$ and $r$, a naive implementation of lines~6--8 can be computationally expensive.  Specifically, redundant solutions can appear in $\calS$, and searching to remove redundancies or entries already in $Top$ might be slow.  Instead, $\calS$ can be kept as a priority queue of value-key pairs, where the value is the sum of the directed information values \eqref{eq:apx:gen:thm} for the approximation as in line~6 and the key is the index of an approximation.  There are ${m-1 \choose K}^m$ possible bounded in-degree approximations, and a binary vector can track whether an approximation has been seen or not.

We now discuss a method to compute an index for each approximation.  First, indices for individual parent sets will be identified, then combined for an index for the whole approximation.  Let $\{j_1, j_2, \dots, j_K\}$ denote the elements of parent set $A(i)$, in ascending order.  For $k \in [K]$, set $j_k \gets j_k -1$ if $j_k>i$.  Denote the set of these (possibly) modified values by the length $K$ vector $idx$.  Then run Algorithm~7.
\newcommand{\phz}{\phantom{0}}

\begin{table}[t!] \begin{normalsize} \begin{center}
\begin{tabular*}{\linewidth}{@{}llrr@{}}
{\bfseries Algorithm 7. {\sc GetParSetIndex}}\\
\hline {\bf Input:} $m, idx$ \\
\hline

~\phz 1. $cnt \gets 0$  \\
~\phz 2. $K \gets |idx|$  \\
~\phz 3. {\bf If} $K = 0$  \\
~\phz 4. \quad\  {\bf Return} $0$ \\
~\phz 5. {\bf If} $K = 1$  \\
~\phz 6. \quad\  {\bf Return} $idx(1)-1$ \\
~\phz 7. $cnt \gets \sum_{l=2}^{idx(1)} {m-l \choose K-1}$  \\
~\phz 8. $idx' \gets \{idx(2)-idx(1), \dots, idx(K)-idx(1) \}$  \\
~\phz 9. $cnt \gets cnt + \mathrm{GetParSetIndex}(m-idx(1), idx')$ \\
~10. {\bf Return} $cnt$\\
\hline
\end{tabular*}\end{center}\label{alg:GetParSetIndex} \end{normalsize}
\end{table}

Lines~7--9 in Algorithm~7 count how many parent sets of $\X_i$ are lexicographically ordered before $A(i)$.  Line~7 counts how many sets have a first element smaller than $idx(1)$.  Lines~8--9 use recursion to count how many sets with the same first element $idx(1)$ appear before $idx$.

Once the index $a_i \gets \mathrm{GetParSetIndex}(m, idx)$ for each parent set $A(i)$ of an approximation is calculated, the index for the approximation can be computed as 
\beqa
1 + \sum_{i=1}^m a_i {m-1 \choose K}^{i-1}. \nonumber
\eeqa

\vskip 0.2in
\bibliography{refs}

\end{document}